\newif\ifreport\reporttrue
\newtheorem{definition}{Definition}
\newtheorem{assumption}{Assumption}
\newtheorem{theorem}{Theorem}
\newtheorem{lemma}{Lemma}
\newtheorem{proposition}{Proposition}
\newtheorem{corollary}{Corollary}
\begin{document}
\title{Optimal Sampling for Data Freshness: Unreliable Transmissions with Random Two-way Delay}
\author{Jiayu Pan, Ahmed M. Bedewy, 
        Yin Sun, \IEEEmembership{Senior Member, IEEE,}
        and Ness B. Shroff, \IEEEmembership{Fellow, IEEE}
\thanks{This paper was presented in part at IEEE INFOCOM 2022 \cite{pan2022optimizing}.

This work has been supported in part by NSF grants:  2112471, CNS- 2106932, CNS- 2106933, CNS-1955535, CNS-1901057, and CCF-1813050, and a grant from the Army Research Office: W911NF-21-1-0244.

J. Pan is with the Department of ECE, The Ohio State University,
Columbus, OH 43210 USA (e-mail: pan.743@osu.edu).

A. M. Bedewy is with the Department of ECE, The Ohio State University,
Columbus, OH 43210 USA (e-mail: bedewy.2@osu.edu).

Y. Sun is with the Department of ECE, Auburn University, Auburn, AL
36849 USA (e-mail: yzs0078@auburn.edu).

N. B. Shroff is with the Department of ECE and the Department of
CSE, The Ohio State University, Columbus, OH 43210 USA (e-mail:
shroff.11@osu.edu).}}

\maketitle

\begin{abstract}

In this paper, we \textcolor{black}{aim to design an optimal sampler for a system} in which fresh samples of a signal (source) are sent through an unreliable channel to a remote estimator, and acknowledgments are sent back over a feedback channel. Both the forward and feedback channels could have random transmission times \textcolor{black}{due to time varying channel conditions}. Motivated by distributed sensing, the estimator can estimate the real-time value of the source signal by combining the signal samples received through the channel and \textcolor{black}{the} noisy signal observations collected from a local sensor. We prove that the estimation error is a non-decreasing function of the Age of Information (AoI) for \textcolor{black}{the} received signal samples and design an optimal sampling strategy that minimizes the long-term average estimation error subject to a sampling rate constraint. \textcolor{black}{The} sampling strategy is also optimal for minimizing the long-term average of general non-decreasing functions of the AoI. The optimal sampler design follows a randomized threshold strategy: If the last transmission was successful, the source waits until the expected estimation error upon delivery exceeds a threshold and then sends out a new sample. If the last transmission fails, the source immediately sends out a new sample without waiting. The threshold is the root of a fixed-point equation and can be solved with low complexity (e.g., by bisection search). The optimal sampling strategy holds for general transmission time distributions of the forward and feedback channels. Numerical simulations are provided to compare different sampling policies.

\end{abstract}

\begin{IEEEkeywords}
Age of information, unreliable transmissions, two-way delay, \textcolor{black}{and sampling}.
\end{IEEEkeywords}

\section{Introduction}\label{introduction}

Timely updates are crucial in many applications such as vehicular networks, wireless sensor networks, and UAV navigations. To achieve timely updates, we require the destination to receive fresh information from the remote source as quickly as possible. The information freshness is measured by age of information, or simply age, which has been widely explored in recent years (e.g., \cite{bedewy2019agee,bedewy2019minimizing,sun2018age,bedewy2019age,bedewy2021low,pan2021minimizing,zouminimizing,qian2020minimizing,yates2021age,sun2017remote,ornee2019sampling,tsai2020unifying,huang2020real,yates2015lazy,sun2017update,sun2019sampling,tsai2020age,tsai2021jointly,arafa2020timely,klugel2019aoi,pan2021minimizings,kaul2012real}). Age of information with the function of current time $t$ is defined as $\Delta_t = t-U_t$, where $U_t$ is the generation time of the freshest information data. 
In several different \textcolor{black}{queueing} systems, the Last-Generated, First-Served (LGFS) policy is shown to achieve age-optimality \cite{bedewy2019agee,bedewy2019minimizing,sun2018age}. Scheduling policies in various wireless networks are studied to minimize age  \cite{bedewy2019age,bedewy2021low,pan2021minimizing,zouminimizing,qian2020minimizing}.   
A literature review of \textcolor{black}{recent works} in age of information is provided in \cite{yates2021age}.

In \cite{sun2017remote} and \cite{ornee2019sampling}, a connection between age of information and remote estimation of time-varying processes (e.g., Wiener process or Ornstein-Uhlenbeck (OU) process) was established. One of the remote estimation objectives in these early studies \textcolor{black}{was} to design an optimal sampling policy to minimize the long-term average minimum mean square error (MMSE).
The MMSE is a function of \textcolor{black}{the} age if the sampling policy is independent of the signal being sampled \cite{tsai2020unifying,sun2017remote,ornee2019sampling,huang2020real}. 
Among these studies, the estimator obtains the exact signal samples subject to delay. However, the estimator neglects the instant and inexact signal samples.  
For example, in vehicular networks, the estimator can estimate a signal via both the exact signal samples from the remote sensor and the instant camera streaming from the close vehicle sensor over time.  
To consider both the delayed and instant signal samples, we will apply the Kalman Filter \textcolor{black}{\cite[Chapter 7]{poor2013introduction}} and study the relationship between the MMSE and age of information. 

The desire for timely updates and the study of the new remote estimation problem \textcolor{black}{necessitates considering general non-linear age functions in the development of optimal sampling policies.}
To reduce the age, we may require the source to wait before submitting a new sample \cite{yates2015lazy}. The study in \cite{sun2017update} generalized the result in \cite{yates2015lazy}, proposed an optimal sampling policy under a Markov channel with sampling rate constraint, and observed that the zero-wait policy is far from optimal if, for example, the transmission times are heavy-tail distributed or positively correlated.     
In \cite{sun2019sampling}, the authors provided a survey of the age penalty functions related to autocorrelation, remote estimation, and mutual information. The optimal sampling solution is a deterministic or randomized threshold policy based on the objective value and the sampling rate constraint. 
However, in real-time network systems, both the forward direction and the feedback direction have a random delay. Such a random two-way delay model was considered in e.g., \cite{tsai2020age,tsai2021jointly}.        
In \cite{tsai2020age}, the paper proposed a low complexity algorithm with a quadratic convergence rate to compute the optimal threshold.  
In \cite{tsai2021jointly}, an optimal joint cost-and-AoI minimization solution was provided for multiple coexisting source-destination pairs with heterogeneous AoI penalty functions. 
Although the above studies have \textcolor{black}{developed optimal} sampling strategies, they assume that the transmission process is reliable. However, due to the channel fading, the channel conditions are time-varying, and thus the transmission process is unreliable. 

\textcolor{black}{Recent studies \cite{arafa2020timely,klugel2019aoi} investigate sampling strategies while considering} unreliable transmissions. 
In \cite{arafa2020timely}, the authors considered quantization errors, noisy channel, and non-zero receiver processing time, and they established the relationship between the MMSE and age. For general age functions, they provided optimal sampling policies, given that the sampler needs to wait before receiving feedback. When the sampler does not need to wait, they \textcolor{black}{provided} enhanced sampling policies that perform better than previous ones.  
In \cite{klugel2019aoi}, the authors chose \emph{idle} or \emph{transmit} at each time slot to minimize joint age penalty and transmission cost. The optimality of a threshold-based policy is shown, and the policy's threshold is computed efficiently. 
Nevertheless, in practice, transmission delays are random rather than constant because of \textcolor{black}{congestion}, random sample sizes, etc, \textcolor{black}{which is a critical challenge facing the design of sampling strategies.}

\textcolor{black}{To address the aforementioned challenges, we investigate how to design optimal sampling strategies} in wireless networks under the following more realistic (and general) conditions that have largely been unexplored: unreliable transmissions and random delay in both forward and feedback directions. 
Early studies on optimizing sampling assuming reliable channels with random delays 
have shown that the sampling problem is decomposed into a per-sample problem. The per-sample problem can be further solved by optimization theory (e.g., \cite{yates2015lazy,sun2017update,sun2019sampling,tsai2020age}) or optimal stopping rules (e.g., \cite{tsai2020unifying,sun2017remote,ornee2019sampling}). 
Similarly, our problem assuming an unreliable channel is equivalent to a per-epoch problem containing multiple samples until successful packet delivery.
Therefore, the per-epoch problem is a Markov Decision Process (MDP) with an uncountable state space, which is \textcolor{black}{a} key difference with past works, (e.g., \cite{yates2015lazy,sun2017update,sun2019sampling,tsai2020age,tsai2020unifying,sun2017remote,ornee2019sampling}) and faces the curse of dimensionality.\footnote{\textcolor{black}{We further compare our technical differences with past works in Section \ref{discussion}}.}
The main contributions of this paper are stated as follows:        


\begin{itemize}

\item We first formulate the problem where the estimator estimates a signal in real-time by combining noisy signal observations from a local sensor and accurate signal samples received from a remote sensor. 
We show that if the sampling policy is made independently of the signal being sampled, the MMSE equals an increasing function of the age of \textcolor{black}{the} received signal samples.  

\item For general nonlinear age functions, or simply age penalty functions, we provide an exact solution for minimizing these data freshness metrics.  
The optimal sampling policy has a simple threshold-type structure, and the threshold can be efficiently computed by bisection search and fixed-point iterations. We uncover the following interesting property: if the last transmission is successful, the optimal policy may wait for a positive time period before generating the next sample and sending it out; otherwise, no waiting time should be added. The key technical approach developed in our results is given as follows: (i) The value function of the proposed policy is an exact solution to the Bellman equation. (ii) Under the contraction mapping assumption, the solution to the Bellman equation is unique, which guarantees optimality of our proposed threshold-based policy. 
Our results hold for (i) general non-decreasing age penalty functions, (ii) general delay distributions of both the forward and feedback channels, (iii) sampling problems both with or without a sampling rate constraint. Therefore, our paper extends previous studies on sampling for optimizing age (e.g., \cite{yates2015lazy,sun2017update,sun2019sampling,tsai2020age,arafa2020timely,klugel2019aoi}). Although our sampling problem is \textcolor{black}{in} continuous time, it can be easily reduced to be in discrete time.

\item When there is no sampling rate constraint, we provide \textcolor{black}{necessary and sufficient conditions on the optimality of the zero-wait sampling policy \cite{yates2021age}} based on the choice of age penalty function, forward and feedback channels. 
Finally, numerical simulations show that our optimal policy can reduce the age compared with other approaches. 

\end{itemize}



\section{Estimation and the AoI}\label{estimation}

\begin{figure}[t]
\includegraphics[scale=.45]{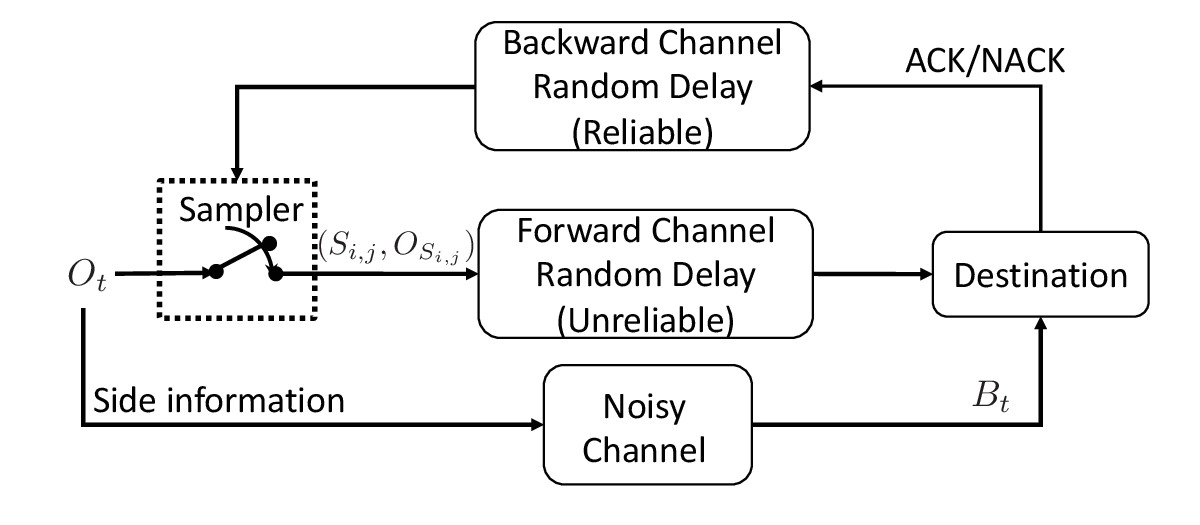}
\centering
\captionsetup{justification=justified}
\caption{System model.}
 \label{model}
\end{figure}

\subsection{System Model}\label{system-a}

Consider a status update system that is composed of a source, a destination, a source-to-destination channel, and a destination-to-source channel, as is illustrated in Fig. \ref{model}. The source process $O_t$ is sampled and delivered to the destination via the forward channel. 
The forward channel suffers from i.i.d. transmission failures, where $\alpha\in[0,1)$ is the probability of failure.
Upon each delivery, the destination then sends an $1$-bit feedback message denoting whether the transmission is successful (ACK) or unsuccessful (NACK). The feedback is sent via the feedback channel that is reliable with an i.i.d. random delay. 

To clarify the system model, we set $i\in \{1,2,\ldots\}$ as the label of a successful delivery in chronological order. Let us denote the $i$th \emph{epoch} to be the time period between the $(i-1)$th and the $i$th successful deliveries. We denote $M_i$ as the total number of samples attempted during the $i$th epoch. 
Then, the $M_i$'s are i.i.d. and has a geometric distribution with parameter $1-\alpha$. 
We use $j$ to describe the indices of samples at the $i$th epoch, where we have $1\leq j \leq M_i$. The case $j=1$ implies that the previous sample is successfully transmitted to the destimation.  
Upon delivery, the destination immediately sends the feedback to the sampler and arrives at time $A_{i,j}$ via the backward channel with an i.i.d. delay $X_{i,j}$, which satisfies $\mathbb{E} [X_{i,j}]<\infty$. 
Then, the $j$th sample in the $i$th epoch is generated at $S_{i,j}$ and is delivered at $D_{i,j}$ through the forward channel with an i.i.d. delay $Y_{i,j}$, which satisfies $\mathbb{E} [Y_{i,j}]<\infty$. 

We assume that the backward delays $X_{i,j}$'s and forward delays $Y_{i,j}$'s are mutually independent.
In addition, the source generates a sample after receiving the feedback of the previous sample\footnote{This assumption arises from the stop-and-wait mechanism. When the backward delay $X_{i,j}=0$, the policy that samples ahead of receiving feedback is always suboptimal. The reason is that such a policy takes a new sample when the channel is busy and can be replaced by another policy that samples at the exact time of receiving feedback \cite{sun2019sampling}. When $X_{i,j}\ne 0$, however, it may be optimal to transmit before receiving feedback, which is out of the scope of this paper.}, i.e., $S_{i,j}\ge A_{i,j}$. 
 In other words, we have a non-negative waiting time $Z_{i,j}$ for all epoch $i$ and sample $j$. 
Thus, the forward channel is always available for transmission at $S_{i,j}$, and the delivery time $D_{i,j}$ satisfies $D_{i,j} = S_{i,j}+Y_{i,j}$.  
By Wald's equation, the total transmission delay needed in each epoch has a finite expectation: 
\begin{equation}\label{walds}
  \mathbb{E} \left[   \sum_{j=1}^{M_i} \left( X_{i,j}+Y_{i,j} \right) \right] =  \mathbb{E} \left[ X_{i,j}+Y_{i,j}  \right] \mathbb{E} \left[ M_i  \right]<\infty.
\end{equation}   

\begin{figure}[t]
\includegraphics[scale=.44]{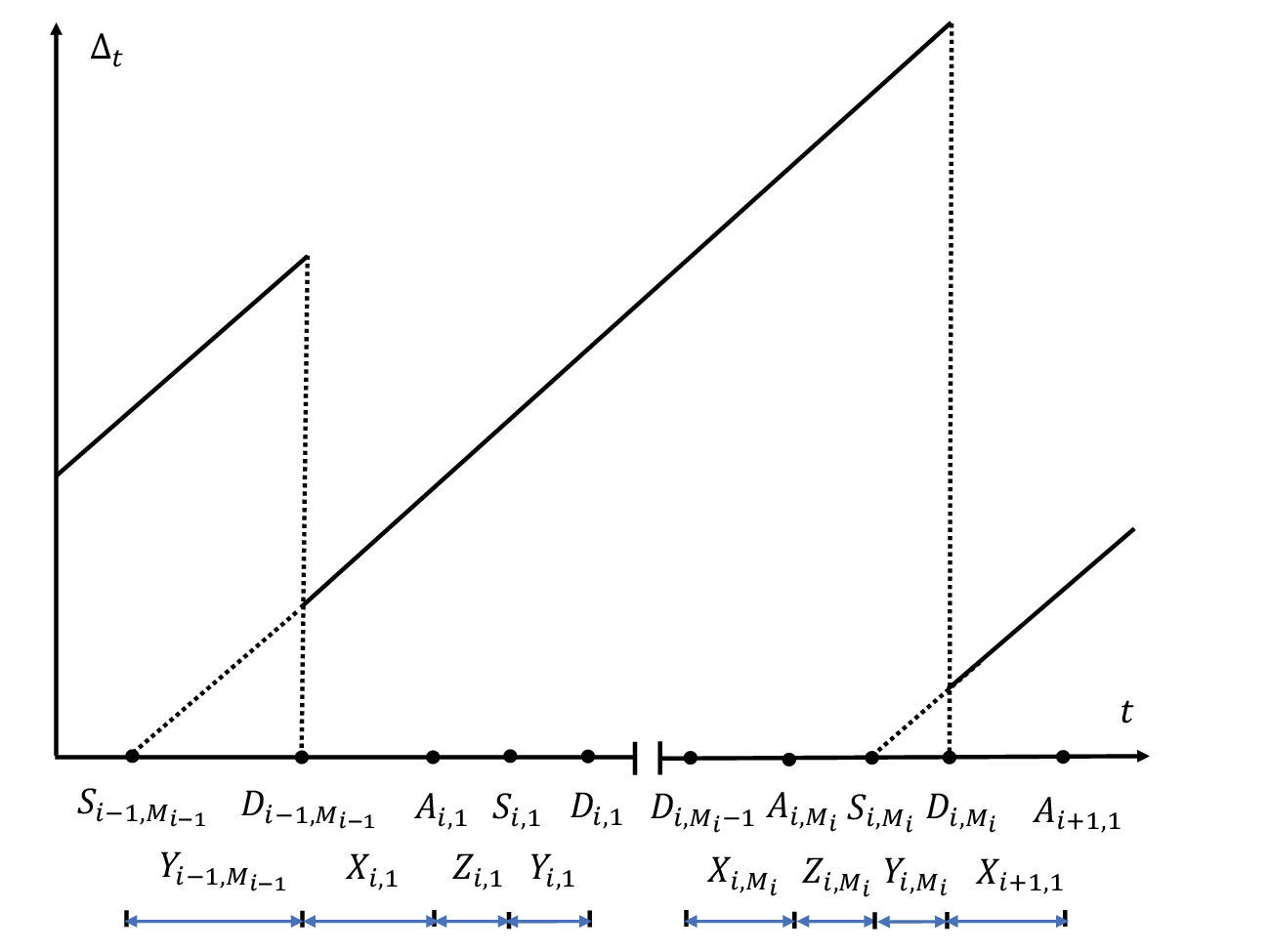}
\centering
\captionsetup{justification=justified}
\caption{Evolution of the age $\Delta_t$ over time. The $i$th epoch starts from $D_{i-1,M_{i-1}}$ to $D_{i,M_i}$.}
 \label{fig-age}
\end{figure}

Age of information (or simply \emph{age}) is the metric for evaluating data freshness and is equal to the time elapsed between the current time $t$ and the generation time of the freshest delivered packet \cite{kaul2012real}. Let $U_t = \max_i \{ S_{i,M_i}: D_{i,M_i}\le t \}$. Note that only the $M_i$th sample is successfully delivered for the $i$th epoch. Then, the age of information $\Delta_t$ at the current time $t$ is defined as
 \begin{equation}\label{age}
 \Delta_t = t - U_t. 
 \end{equation} 
 We plot the evolution of the age \eqref{age} in Fig. \ref{fig-age}. 
 Upon each successful delivery time $D_{i,M_i}$, the age decreases to $Y_{i,M_i}$, the transmission delay of the newly generated packet. At other time, the age increases linearly over time. The age is updated at the beginning of each epoch and keeps increasing during the epoch. Hence, the age is also determined by 
  \begin{equation}
 \Delta_t = t - S_{i,M_i}, \text{ if }  D_{i,M_i}\leq t<D_{i+1,M_{i+1}}.
 \end{equation}

\subsection{Remote Estimation and Kalman Filter}\label{remoteestimation}

We first introduce some notations. For any multi-dimensional vector $O$, we denote $O^T$ as the transpose of $O$. We denote $\bm{I}_{n\times n}, \bm{0}_{n\times m}$ as the $n\times n$ identity matrix and $n\times m$ zero matrix, respectively. For a given $n\times n$ matrix $\bm{N}$, we set $tr(\bm{N})$ as the trace of $\bm{N}$, i.e., the summation of the diagonal elements of $\bm{N}$. 

In this subsection, the source process $O_t$ is an $n$-dimensional diffusion process that is defined as the solution to the following stochastic differential equation:  

\begin{equation}
dO_t=- \bm{\Theta} O_t dt+ \bm{\Sigma} d W_t,\label{ou-process}
\end{equation}
where $\bm{\Theta}$ and $\bm{\Sigma}$ are $n\times n$ matrices, and $W_t$ is the $n$-dimensional Wiener process such that $\mathbb{E}[W_tW_s^T]= \bm{I}_{n\times n} \min \{s,t \}$ for all $0\leq t,s\leq\infty$. The process $O_t$ represents the behavior of many physical systems \textcolor{black}{such as the motion of a Brownian particle under friction and the motion of the monomers in dilute solutions\cite{ottinger2012stochastic}.}  
At the destination, there is an estimator that provides estimations according to the received samples. One key difference from previous works (e.g., \cite{tsai2020unifying,sun2017remote,ornee2019sampling,klugel2019aoi}) is that the estimator not only receives the accurate samples $O_{S_{i,j}}$ at time $S_{i,j}$ but also has an instant noisy observation $B_t$ of the process $O_t$, as is illustrated in Fig. \ref{model}. 
The observation process $B_t$ is an $m$-dimensional vector, modeled as 
\begin{equation}
B_t=\bm{H}O_t+V_t,
\end{equation}
where $\bm{H}$ is an $n\times m$ matrix and $V_t$ is a zero mean white noise process such that for all $t,s \ge 0$, 
\begin{equation}
\mathbb{E}[V_tV_s^T]= \left\{
 \begin{array}{lll} 
 \bm{R} & t=s; \\ \bm{0}_{m\times m} & t\ne s, 
 \end{array} \right.
\end{equation} $\bm{R}$ is an $m\times m$ positive definite matrix.
We suppose that $W_t$ and $V_t$ are uncorrelated such that for all $ t,s \ge 0$, $\mathbb{E}[W_tV_s^T]=\bm{0}_{n\times m}$.

The estimator provides an estimate $\hat{O}_t$ for the minimum mean squared error (MMSE) $\mathbb{E}[ || O_t-\hat{O}_t ||^2]$ based on the causally received information.
Compared to \cite{ornee2019sampling}, the MMSE in our study can be reduced due to the additional observation process $B_t$.
Using the strong Markov property of $O_t$ \cite[Eq. (4.3.27)]{peskir2006optimal} and the assumption that the sampling times are independent of $O_t$, as is shown
\ifreport
in Appendix \ref{hat_otapp},
\else
in Appendix A in our supplementary material,
\fi
the MMSE estimator is determined by  
\begin{equation}
\hat{O}_t = \mathbb{E} \left[ O_t | \{ B_\tau \}_{S_{i,M_i}\le \tau\le t}, O_{S_{i,M_i}} \right], t\in  [D_{i,M_i},D_{i+1,M_{i+1}}). \label{hat_ot}
\end{equation}
By \eqref{hat_ot}, we find that $\hat{O}_t$ is equal to \textcolor{black}{the estimate produced by} the \emph{Kalman filter}\textcolor{black}{\cite[Chapter 7]{poor2013introduction}.}
Therefore, in this work, we use the Kalman filter as the estimator. At time $t$, the Kalman filter utilizes both the exact sample $O_{S_{i,M_i}}$ and noisy observation $B_t$ and provides the minimum mean squared error (MMSE) estimation $\hat{O}_t$. Let $\bm{N}_t\triangleq\mathbb{E}[ (O_t-\hat{O}_t) (O_t-\hat{O}_t)^T]$ be the covariance matrix of the estimation error $O_t-\hat{O}_t$. Hence, \textcolor{black}{$\mathbb{E}[ || O_t-\hat{O}_t ||^2]=tr(\bm{N}_t)$}. 

According to \eqref{hat_ot}, the estimation process works as follows: Once a sample is delivered to the Kalman filter at time $D_{i,M_i}$, the Kalman filter re-initiates itself with the initial condition $\bm{N}_t=\bm{0}_{n\times n}$ when $t=S_{i,M_i}$ and starts a new estimation session. Then, during the time period $ [D_{i,M_i},D_{i+1,M_{i+1}})$, the Kalman filter uses the causal observations $\{ B_{\tau}: S_{i,M_i} \le \tau \le t \}$ to estimate the process $O_t$.

\begin{proposition}\label{lim1}
The MMSE $tr(\bm{N}_t)$ of the process $O_t$ is a non-decreasing function of the age $\Delta_t$.
\end{proposition}
\begin{proof}
\ifreport
See Appendix \ref{lim1app}. 
\else
See Appendix B in our supplementary material.
\fi
\end{proof}
As a result of Proposition \ref{lim1}, when the sampling times $S_{i,j}$'s are independent of $O_t$, the MMSE is still a non-decreasing function of the age $\Delta_t$. When $S_{i,j}$'s are correlated to $O_t$, the MMSE is not necessary a function of $\Delta_t$.

In the one-dimensional case, where $n=m=1$, we use scalars $\theta, \sigma, h, r, n_t$ to replace the matrices $\bm{\Theta}, \bm{\Sigma}, \bm{H}, \bm{R}, \bm{N}_t$, respectively. The Ornstein–Uhlenbeck (OU) process is defined as a one-dimensional special case of diffusion process \eqref{ou-process} where $\theta>0$\cite{finch2004ornstein}. Then, we have
\begin{proposition}\label{lemma-lim2} 
Suppose that $n=m=1$ and $\theta>0$. Then, for $t\in[D_{i, M_i},D_{i+1,M_{i+1}})$ and $ i=0,1,2,\ldots$, the MMSE $n_t$ of the OU process $O_t$ is given by
\begin{align}
n_t=\bar{n}-\frac{1}{l+\left(\frac{1}{\bar{n}}-l\right)e^{2\sqrt{\theta^2+\frac{\sigma^2 h^2}{r}}\Delta_t}}, \label{Lem1eq}
\end{align}
where $\Delta_t=t-S_{i,M_i}$, 
\begin{align}
\bar{n} & =\frac{-\theta r+\sqrt{(\theta r)^2+\sigma^2 rh^2}}{h^2},\\
l & = \frac{h^2}{2\sqrt{(\theta r)^2+\sigma^2 r h^2}}. \label{remote_9}
\end{align} 
Moreover, $n_t$ in \eqref{Lem1eq} is a bounded and non-decreasing function of the age $\Delta_t$.
\end{proposition}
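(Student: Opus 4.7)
My plan is to reduce the claim to solving a scalar continuous-time Riccati equation in closed form. On the interval $[S_{i,M_i},D_{i+1,M_{i+1}})$ the estimator has received the exact value $O_{S_{i,M_i}}$ at the start of the epoch and then processes the white-noise-corrupted observations $B_t=hO_t+V_t$, so the Kalman--Bucy theory used in the preceding multi-dimensional discussion applies on this interval with zero initial error covariance at $t=S_{i,M_i}$. Specializing the matrix Riccati equation to the scalar setting $n=m=1$ (i.e., $A=-\theta$, $B=\sigma$, $H=h$, $R=r$) yields
\begin{equation*}
\dot n_t \;=\; -2\theta\, n_t + \sigma^2 - \frac{h^2}{r}\, n_t^{\,2},\qquad n_{S_{i,M_i}}=0,
\end{equation*}
which depends on $t$ only through $\Delta_t=t-S_{i,M_i}$.

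Next I would integrate this ODE in closed form. The right-hand side is quadratic in $n_t$ with two real roots; setting it to zero gives the algebraic Riccati equation $(h^2/r)\bar n^2+2\theta\bar n-\sigma^2=0$, whose positive root is exactly the $\bar n$ stated in the proposition. Writing the ODE as $\dot n_t=-(h^2/r)(n_t-\bar n)(n_t-\bar n_-)$ with $\bar n_-$ the negative root, separating variables, and using partial fractions yields
\begin{equation*}
\ln\!\left|\frac{n_t-\bar n}{n_t-\bar n_-}\right|\;=\;-2k\,\Delta_t + C,\qquad k\triangleq\sqrt{\theta^2+\sigma^2 h^2/r},
\end{equation*}
because $\bar n-\bar n_-=2rk/h^2$. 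Fixing $C$ by $n_{S_{i,M_i}}=0$ and solving for $n_t$ yields $n_t=\bar n-1/[\,l+(1/\bar n-l)e^{2k\Delta_t}\,]$; a short computation shows the constant $l$ coincides with the expression \eqref{remote_9}.

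Monotonicity and boundedness then follow from this closed form. A one-line algebraic check shows $1/\bar n-l>0$ (after clearing denominators it reduces to $2\theta r k+2\theta^2 r^2+\sigma^2 rh^2>0$), so the denominator $l+(1/\bar n-l)e^{2k\Delta_t}$ is strictly positive and strictly increasing in $\Delta_t$; hence $n_t$ is strictly increasing in $\Delta_t$ with $0\le n_t<\bar n$ and $n_t\to\bar n$ as $\Delta_t\to\infty$. The main obstacle is purely bookkeeping: matching the two constants produced by integration to the exact expressions for $\bar n$ and $l$ in the statement. I would sanity-check the final formula by plugging in $\Delta_t=0$ (which must yield $0$) and by differentiating to verify the Riccati ODE, which rules out any sign or normalization errors.
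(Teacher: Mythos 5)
Your proposal is correct and follows essentially the same route as the paper: specialize the Riccati equation to the scalar case with zero initial covariance at $S_{i,M_i}$, integrate it in closed form, match the constants to $\bar n$ and $l$, and conclude monotonicity and boundedness from $1/\bar n - l>0$. The only (immaterial) difference is that you integrate by separation of variables and partial fractions over the two equilibria, whereas the paper uses the Bernoulli substitution $\mu_t=n_t-\bar n$, $\eta_t=1/\mu_t$ to linearize the ODE.
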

\begin{proof}
\ifreport
See Appendix \ref{lemma-lim2app}.
\else
See Appendix C in our supplementary material.
\fi
\end{proof}

When the side observation has zero knowledge of $O_t$, i.e., $h=0$ for $t\ge 0$, then the estimator $\hat{O_t}$ is equal to that in \cite{ornee2019sampling}.
Therefore, Proposition \ref{lemma-lim2} reduces to \cite[Lemma 4]{ornee2019sampling}, i.e.,
the MMSE $n_t$ is given by
\begin{align}
n_t=\frac{\sigma^2}{2\theta}\left(1-e^{-2\theta\Delta_t}\right), \label{Lim2_eq}
\end{align}
moreover, $n_t$ for $h=0$ is a bounded and non-decreasing function of age $\Delta_t$.


\section{Problem Formulation for General Age Penalty}\label{section-system}

The function in Proposition \ref{lemma-lim2} is not the only choice of nonlinear age functions.
In this paper, to achieve data freshness in various applications, we consider a general type of age penalty function. 
The age penalty function $p:[0,\infty)\rightarrow \mathbb{R}$ is assumed to be non-decreasing and need not be continuous or convex.
We further assume that 
$\mathbb{E} \Big[  \int^{\delta+\sum_{j=1}^{M_i}(X_{i,j}+Y_{i,j})}_{\delta}  p(t) dt \Big] < \infty$ and $\mathbb{E} \Big[  p \left( \delta+\sum_{j=1}^{M_i}(X_{i,j}+Y_{i,j}) \right) dt \Big] < \infty$ for any given $\delta$.  \vspace{0.06\baselineskip}

We list another two categories of applications for the age penalty functions. \textcolor{black}{First, the age penalty functions can be linear, polynomial, or exponential, depending on the dissatisfactions of the stale information updates in multiple practical settings such as the Internet of Things \cite{park2020centralized}.} 
Second, some applications are shown to be closely related to nonlinear age functions, such as auto-correlation function of the source, remote estimation, and information based data freshness metric \cite{sun2019sampling}.




We then define the sampling policies below. We denote $\mathcal{H}_{i,j}$ as the sample path of the history information previous to $A_{i,j}$, including sampling times, forward channel conditions, and channels delays. 
We denote $\Pi$ as the collection of sampling policies $\{ S_{i,j} \}_{i,j}$ such that $S_{i,j} \ge A_{i,j}$ for each $(i,j)$, and $S_{i,j}(ds_{i,j} |\mathcal{H}_{i,j})$ is a \emph{Borel measurable stochastic kernel} \cite[Chapter 7]{bertsekas2004stochastic} for any possible $\mathcal{H}_{i,j}$.
Further, we assume that $T_i= S_{i,M_i}-S_{i-1,M_{i-1}}$ is a regenerative process: there exists an increasing sequence $0\le k_1 < k_2< \ldots$ of finite random variables such that the post-$k_j$ process $\{ T_{k_j+i},i=0,1,\ldots \} $ has the same distribution as the post-$k_1$ process  $\{ T_{k_1+i},i=0,1,\ldots \}$ and is independent of the pre-$k_j$ process $ \{ T_i, i=1,2, \ldots, k_j-1 \}$; in addition, $\mathbb{E} \left[  k_{j+1}-k_j \right]<\infty$, $\mathbb{E} \left[  S_{k_1,M_{k_1}} \right]<\infty$ and $0<\mathbb{E} \left[  S_{k_{j+1},M_{k_{j+1}}}-S_{k_j,M_{k_j}} \right]<\infty$, $j=1,2, \ldots$\footnote{ In this paper, we will optimize $\limsup_{T\rightarrow \infty} (1/T) \mathbb{E} \left[  \int_{0}^{T} p(\Delta_t) dt \right]$. However, a nicer objective is to optimize $ \lim_{n\rightarrow \infty}  \mathbb{E} \left[  \int^{D_{n,M_n}}_{0}  p(\Delta_t) dt \right] $ $/ \mathbb{E} \left[  D_{n,M_n}  \right] $. If $T_i$ is a regenerative process, then the two objective functions are equal \cite{haas2006stochastic}, \cite{gallager2013stochastic}. If no conditions are applied, they are different.}


The authors in \cite{tsai2020unifying} have stated that: to reduce the estimation error related to the Wiener process, it may be optimal to wait on both the source and the destination before transmission. 
However, in this paper, it is sufficient to only wait at the source to minimize the age. To validate this statement, consider any policy that waits on both the source and the destination. We first remove the waiting time at the destination. Then, at the source, we add up the removed waiting time. 
The replaced policy we propose has the same age performance as the former one.

Our objective in this paper is to optimize the long-term average expected age penalty under a sampling rate constraint:
\begin{align}
p_{\text{opt}} = & \inf_{\pi \in \Pi} \limsup_{T\rightarrow \infty}  \frac{1}{T} \mathbb{E} \left[  \int_{0}^{T} p(\Delta_t) dt \right], \label{avg} \\
& \  \text{s.t.} \   \limsup_{T\rightarrow \infty} \frac{1}{T} \mathbb{E} \left[ C(T)\right] \le f_{\text{max}}. \label{avg-constraint}
\end{align}
Here, $C(T)$ is the total number of samples taken by time $T$, and $f_{\text{max}}$ is the maximum allowed sampling rate. The constraint \eqref{avg-constraint} is added because in practice, the sensor may need to keep working for a long time with limited amount of energy.  
To avoid triviality, the optimal objective value $p_{\text{opt}}$ in \eqref{avg} satisfies $p_{\text{opt}}<\bar{p}$, where $\bar{p}=\lim_{\delta\rightarrow \infty} p(\delta)$.



\subsection{An Additional Assumption and Its Rationale}\label{additional-assumption}


We will utilize the following assumption in this paper.
\begin{assumption}\label{ass1}
If $\alpha>0$, the backward delay $X_{i,j}\in [0, \bar{x}]$, and the waiting time $Z_{i,j}\in [0,\bar{z}]$ for all $i,j$. For any positive $\bar{x}, \bar{z}$ (that can be sufficiently large),
there exists an increasing positive function $v(\delta)$ such that the function $G(\delta) =  \mathbb{E} \left[   \int_{\delta}^{\delta+\bar{x}+\bar{z}+Y_{i,j}} | p(t) | dt \right] $ satisfies $\max_{\delta\ge0} | G(\delta)/v(\delta) | <\infty$. 
In addition,
there exists $\rho \in (0,1)$ and a positive integer $m$, such that  
\begin{equation}
\alpha^m \frac{ \mathbb{E} \left[  v(\delta+m \bar{x}+m\bar{z}+\sum_{j=1}^{m}Y_{j}) \right]}{v(\delta)} \le \rho
\end{equation}
holds for all $\delta\ge 0$, where $Y_1, \ldots, Y_m$ are an i.i.d. sequence with the same distribution as the $Y_{i,j}$'s.  
\end{assumption}
When the forward channel is reliable, i.e., $\alpha=0$, then Assumption \ref{ass1} is negligible by letting $v(\delta) = G(\delta)$. Thus, Assumption \ref{ass1} restricts on the choices of age penalty $p(\cdot)$ when $\alpha>0$. Note that the optimal sampling policy of the cases $\alpha=0$ and $X_{i,j}=0$ has been solved in \cite{sun2017update,sun2019sampling}.

In the following corollary, we provide a list of age penalties $p(\cdot)$ that Assumption \ref{ass1} is satisfied for $\alpha>0$.     
\begin{corollary}\label{cor-ass}
For any one of the following conditions, Assumption \ref{ass1} holds: 

(a) The penalty function $p(\cdot)$ is bounded, i.e., $\bar{p}<\infty$.

(b)
There exists $n > 0$ such that $p(\delta) = O (\delta^n)$,\footnote{ We denote $f(\delta) = O (g(\delta))$ if there exists some nonnegative constants $c$ and $\delta'$ such that $ |f(\delta)| \le c |g(\delta)| $ for all $\delta>\delta'$.}
 and the $Y_{i,j}$'s have a finite $n+1$-moment, i.e., $ \mathbb{E} \left[  Y_{i,j}^{n+1} \right]<\infty$. 


(c) There exists $a>0$ and $b<1$ such that $\int p(\delta) d\delta = O( e^{a\delta^b})$ and the $Y_{i,j}$'s are bounded. 
\end{corollary}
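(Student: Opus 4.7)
The plan is, for each of the three cases, to exhibit an appropriate increasing positive weight function $v(\delta)$ and then verify the two requirements of Assumption~\ref{ass1}(b): (i) $\sup_{\delta\ge 0}G(\delta)/v(\delta)<\infty$, and (ii) existence of $\rho\in(0,1)$ and a positive integer $m$ for which the geometric-contraction inequality holds. Part (a) of Assumption~\ref{ass1} is a structural constraint that is easily arranged by choosing $\bar z$ larger than the (finite) right-hand side of \eqref{ass1-bound}, so the real content is in (b). The common template is: pick $v$ with the smallest growth rate that still dominates $G$, verify (i) by the obvious integral estimate, and then verify (ii) by bounding $v(\delta+S)/v(\delta)$ uniformly in $\delta\ge 0$ by some function of the increment $S$ whose expectation, when multiplied by $\alpha^m$, decays to zero as $m\to\infty$.

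For case (a), I would take $v(\delta)=1+\delta$. Then $G(\delta)\le \bar p\,(\bar x+\bar z+\mathbb{E}[Y_{i,j}])$ is a constant, so (i) is trivial, and the ratio in (ii) is bounded by $1+m(\bar x+\bar z+\mathbb{E}[Y_{i,j}])$, which is linear in $m$ and therefore crushed by $\alpha^m$ for $m$ large. For case (b), I would take $v(\delta)=(1+\delta)^n$. Since $p$ is non-decreasing, local boundedness combined with the asymptotic $O(\delta^n)$ bound yields a global estimate $|p(t)|\le K_1+K_2 t^n$. Integrating this over $[\delta,\delta+T]$ with $T=\bar x+\bar z+Y_{i,j}$ and using $(\delta+T)^n\le(1+\delta)^n(1+T)^n$ shows $G(\delta)/v(\delta)\le K_1\mathbb{E}[T]+K_2\mathbb{E}[T(1+T)^n]$; the second expectation is finite precisely because $\mathbb{E}[Y_{i,j}^{n+1}]<\infty$. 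For (ii) the same pointwise estimate $v(\delta+S)/v(\delta)\le(1+S)^n$ leaves $\alpha^m$ multiplying a quantity polynomial in $m$, which again decays to zero.

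Case (c) is the sub-exponential analogue, with $v(\delta)=e^{a\delta^b}$. The decisive elementary fact is $(\delta+T)^b-\delta^b\le T^b$ for all $\delta,T\ge 0$ and $b\in(0,1)$, which is the sub-additivity of $x\mapsto x^b$. Combined with boundedness of $Y_{i,j}$ by some $\bar y$ and the antiderivative bound $\int p=O(e^{a\delta^b})$, this yields $G(\delta)/v(\delta)\le C e^{a(\bar x+\bar z+\bar y)^b}$ for (i), and for (ii) it gives
\begin{equation*}
\alpha^m\,\frac{\mathbb{E}\bigl[v(\delta+m\bar x+m\bar z+\sum_{j=1}^m Y_j)\bigr]}{v(\delta)}\le \exp\!\bigl(m\log\alpha+am^b(\bar x+\bar z+\bar y)^b\bigr).
\end{equation*}
Since $b<1$ and $\log\alpha<0$, the linear term $m\log\alpha$ eventually dominates the sub-linear $am^b$ term, so the right-hand side falls below any prescribed $\rho\in(0,1)$ for $m$ large enough.

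The main obstacle I anticipate is the precise matching in case (b) between the polynomial growth of $p$ and the moment hypothesis on $Y_{i,j}$: the exponent ``$n+1$'' is tight rather than slack, because integrating $t^n$ over a window of random length $Y$ produces one extra factor of $Y$. Getting this growth-rate match right---and writing the ratio bound in (ii) so that the $m$-dependent prefactor (polynomial in case (b), sub-exponential in case (c)) is correctly dominated by the geometric factor $\alpha^m$---is the only delicate step; case (a) is then essentially immediate.
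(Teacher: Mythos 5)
Your proposal is correct, and it follows the same overall template as the paper's proof: exhibit an increasing weight $v$ for each case and verify the two requirements of Assumption~\ref{ass1}(b) (the paper likewise treats \eqref{ass1-bound} separately, so ignoring part (a) here is fine). The execution differs in how the contraction inequality is obtained. The paper always takes $m=1$ and instead tunes the weight: in case (b) it uses $v(\delta)\sim\delta^{n+1}$ (one degree above the growth of $p$) and makes $v$ constant on $[0,\bar m]$ with $\bar m$ chosen so large that $\mathbb{E}\bigl[(1+(\bar x+\bar z+Y)/\bar m)^{n+1}\bigr]<\rho/\alpha$; in case (c) it similarly flattens $v=e^{a\delta^b}$ on an initial segment and uses $(\bar m+\bar x+\bar z+\bar y)^b-\bar m^b\to 0$ to get the one-stage ratio below $\rho/\alpha$. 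You instead keep the natural weights $(1+\delta)^n$ and $e^{a\delta^b}$ untouched, bound the ratio $v(\delta+S)/v(\delta)$ uniformly in $\delta$ (via $(1+\delta+S)/(1+\delta)\le 1+S$, respectively the subadditivity $(\delta+S)^b\le\delta^b+S^b$), and let $m$ grow so that $\alpha^m$ dominates the polynomial (case (b)) or sub-exponential (case (c)) dependence on $m$; this is legitimate because Assumption~\ref{ass1}(b) and Lemma~\ref{contraction-unique} allow an $m$-stage contraction. Your case-(b) weight of degree $n$ also works, with the $(n+1)$-th moment entering exactly where you say, through $\mathbb{E}[T(1+T)^n]<\infty$ in bounding $G$ (the paper uses it the same way, plus to control its degree-$(n+1)$ weight). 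Net effect: the paper's route yields a one-stage contraction at the cost of modifying $v$ near the origin, while yours keeps the simplest weights at the cost of a large $m$; both verifications are valid, and the remaining steps you flag (Minkowski/Jensen to get the $O(m^n)$ bound, absorbing the bounded negative part of $p$ into constants) are routine.
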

\ifreport
\begin{proof} 
See Appendix \ref{cor-assapp}.
\end{proof} 
\else
\begin{proof} 
See Appendix D in our supplementary material.
\end{proof} 
\fi

Most of the literatures of MDP have shown that the \emph{value function} of an optimal policy is the solution to the \emph{Bellman equation}. In this paper, we figure out a policy and its value function that is indeed the solution to the Bellman equation. If the Bellman equation has a unique solution, then our proposed policy is optimal. Otherwise, we cannot guarantee the optimality of our proposed policy. Assumption~\ref{ass1} arises from the contraction mapping assumption \cite{bertsekas1995dynamic2,bertsekas2018abstract} that guarantees that the Bellman equation has a unique solution.
In other words, Assumption~\ref{ass1} is a sufficient condition for the Bellman equation to have a unique solution. Corollary~\ref{cor-ass} implies that there are a wide range of age penalty functions that satisfy Assumption~\ref{ass1}. For example, the age penalty function derived in Proposition~\ref{lemma-lim2} satisfies Assumption~\ref{ass1}. Indeed, Assumption~\ref{ass1} holds if the age penalty function grows exponentially at some bounded intervals. For all cases of the age penalty functions we have mentioned, the constants $\bar{z},\bar{x}$ can be \emph{sufficiently large}. Therefore, in this paper, we set the constants $\bar{z},\bar{x}$ to be sufficiently large.


\section{Optimal Sampling policy}\label{main}

In this section, we provide an optimal solution to \eqref{avg}. The optimal solution is described by the waiting times $Z_{i,j}'s$ throughout this paper.

\subsection{Optimal Sampling Policy without Sampling Rate Constraint}\label{no-sampling}
When there is no sampling rate constraint, i.e., $f_{\text{max}}=\infty$, we have the following result:
\begin{theorem}\label{theorem1}
If $f_{\text{max}}=\infty$, $p(\cdot)$ is non-decreasing, the $Y_{i,j}$'s are i.i.d. with finite mean $\mathbb{E}[Y_{i,j}]<\infty$, the $X_{i,j}$'s are i.i.d. with finite mean $\mathbb{E}[X_{i,j}]<\infty$, the $Y_{i,j}$'s and the $X_{i,j}$'s are mutually independent, and Assumption \ref{ass1} holds, then the optimal solution to \eqref{avg} is given by
\begin{align}
\nonumber  & Z_{i,1}(\beta) =  \inf_{z} \Big \{  z\ge 0:  \\ & 
 \mathbb{E}_{Y'} \left[ p( Y_{i-1,M_{i-1}}+X_{i,1} +z+Y')   \ \big{|} \  Y_{i-1,M_{i-1}}, X_{i,1} \right] \ge \beta \Big \}, \label{thm1-beta}  \\
& Z_{i,j}(\beta)  =  0 \ \ \ j = 2,3,\ldots, \label{thm1-beta2}
\end{align}
${Y'} = Y_{i,1}+ \sum_{j=2}^{M_i} (X_{i,j}+Y_{i,j})$,\footnote{In this paper, we set the summation operator $\sum_{j = a}^{b}$ to be $0$ if $b<a$ for any given integers $a,b$.} and $\beta$ is the unique solution to 
\begin{align}
\nonumber & \mathbb{E} \left[  \int^{Y_{i-1,M_{i-1}}+X_{i,1} +Z_{i,1}(\beta)+Y'}_{Y_{i-1,M_{i-1}}}  p(t) dt \right] \\  & - \beta  \mathbb{E} \left[ X_{i,1} +Z_{i,1}(\beta)+Y'  \right]=0.  \label{root}
\end{align}
Moreover, $\beta=p_{opt}$ is the optimal objective value of \eqref{avg}.
\end{theorem}
\begin{proof}
See Section \ref{section-proof}.
\end{proof}

In Theorem~\ref{theorem1}, the case $j=1$ in \eqref{thm1-beta} means that the previous transmission (of the $M_{i-1}$th sample in the $(i-1)$th epoch) is successful, and the system starts the new epoch from $i-1$ to $i$. Since the age drops to $Y_{i-1,M_{i-1}}$ at the successful delivery time $D_{i-1,M_{i-1}}$, the current age state at arrival time $A_{i,1}$ is $Y_{i-1,M_{i-1}}+X_{i,1}$.
The case $j=2,3,\ldots$ in \eqref{thm1-beta2} means that the previous transmission is unsuccessful, and the system stays within epoch $i$.

Theorem \ref{theorem1} provides an optimal policy with an interesting structure. First, by \eqref{thm1-beta}, in each epoch, the optimal waiting time for the first sample $Z_{i,1}(\beta)$ has a simple threshold type structure on the current age $Y_{i-1,M_{i-1}}+X_{i,1}$. Since the waiting times for $j=2,3,\ldots$ are zero, $Y'$ is the remaining transmission delay needed for the next successful delivery. Note that $\beta$ is equal to the optimal objective value $p_{\text{opt}}$ in problem \eqref{avg}. Therefore, the waiting time $Z_{i,1}(\beta)$ in \eqref{thm1-beta} is chosen such that the expected age penalty upon delivery is no smaller than $p_{\text{opt}}$.
Second, by \eqref{thm1-beta2}, the source sends the packet as soon as it receives negative feedback, i.e., the previous transmission is not successful. This is quite different from most of the previous works assuming reliable channels, e.g.,\cite{sun2017update,sun2019sampling,tsai2020age,tsai2021jointly}, where for all samples, the source may wait for some time before transmitting a new sample.  

We call a sampling policy to be \emph{stationary} if each sampling time is decided by the current age state and the previous backward delay. We call a sampling policy to be \emph{deterministic} if each sampling time chooses a value with probability $1$ (w.p. $1$).    
We remind that the optimal policy we proposed in Theorem~\ref{theorem1} is \emph{stationary and deterministic}. This stationary and deterministic policy depends only on the current age state and the previous backward delay, not on the sample index $j$. For example, when $j=1$, the previous backward delay is $X_{i,1}$, and the current age state is $Y_{i-1,M_{i-1}}+X_{i,1}$. For general value of $j$, the previous backward delay is $X_{i,j}$, and we suppose that the current age state is $\Delta_{i,j}+X_{i,j}$. Then, the stationary and deterministic policy, which has an equivalent form of \eqref{thm1-beta},\eqref{thm1-beta2} in Theorem \ref{theorem1}, is as follows:
\begin{align}
\nonumber  Z_{i,j}(\beta) = & \inf_{z} \Big \{  z\ge 0:  \\ & 
 \mathbb{E}_{Y'} \left[ p( \Delta_{i,j}+X_{i,j} +z+Y')  \ \big{|} \  \Delta_{i,j}, X_{i,j} \right] \ge \beta \Big \}.
\end{align}      


 \begin{algorithm}[!htbp]\label{algbise1}
\caption{Bisection method for solving \eqref{root}}
\textbf{Given} function $f(\beta)=f_1(\beta)- \beta f_2(\beta)$. $k_1$ close to $\underline{p}$, $k_2$ close to $\bar{p}$, $k_1<k_2$, and tolerance $\epsilon$ small. 

\textbf{repeat}

\qquad $\beta=\frac{1}{2}(k_1+k_2)$

\qquad \textbf{if} $f(\beta)<0$: $k_2=\beta$. \textbf{else} $k_1=\beta$ 

\textbf{until} $k_2-k_1<\epsilon$

\textbf{return} $\beta$

\end{algorithm}

The root of $\beta$ in \eqref{root} can be solved efficiently. 
According to \eqref{root}, we can use a low complexity algorithm such as bisection search and fixed-point iterations to obtain the optimal objective value $p_{\text{opt}}$. The bisection search approach to solving $p_{\text{opt}}$ is illustrated in Algorithm 1. 
For simplicity, we set 
\begin{align}
f_1(\beta) & = \mathbb{E} \left[  \int^{Y_{i-1,M_{i-1}}+X_{i,1} +Z_{i,1}(\beta)+Y'}_{Y_{i-1,M_{i-1}}}  p(t) dt \right],\\
f_2(\beta) & =   \mathbb{E} \left[ X_{i,1} +Z_{i,1}(\beta)+Y'  \right].
\end{align}
Then, the function $f(\beta) \triangleq f_1(\beta)-\beta f_2(\beta)$ satisfies the following mathematical property:
 \begin{lemma}\label{lemma-hbeta}
 (1) $f(\beta)$ is concave, and strictly decreasing in $\beta \in [\underline{p},\bar{p}) \cap \mathbb{R}$, where $\underline{p} = p(0)$ and $\bar{p} = \lim_{\delta\rightarrow \infty}p(\delta)$.

(2) There exists a unique root $\beta \in [\underline{p},\bar{p}) \cap \mathbb{R}$ such that $f(\beta)=0$.
 \end{lemma}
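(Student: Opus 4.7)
The plan is to exhibit $f(\beta)$ as the pointwise infimum of a family of affine functions of $\beta$, from which concavity, continuity, and strict monotonicity follow by convex-analytic reasoning; existence and uniqueness of the root then reduce to checking sign changes at the two endpoints $\underline{p}$ and $\bar{p}$.

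For any waiting rule $z$ that is a measurable function of $(Y_{i-1,M_{i-1}},X_{i,1})$, let $\tilde{F}_1(z)$ and $\tilde{F}_2(z)$ denote $f_1$ and $f_2$ with $Z_{i,1}(\beta)$ replaced by $z$. Since the length of the integration interval in $\tilde F_1$ is exactly $X_{i,1}+z+Y'$, one obtains the identity $\tilde F_1(z)-\beta \tilde F_2(z)=\mathbb{E}[\int_{Y_{i-1,M_{i-1}}}^{Y_{i-1,M_{i-1}}+X_{i,1}+z+Y'}(p(t)-\beta)\,dt]$. Conditioning on $(Y_{i-1,M_{i-1}},X_{i,1})$ reduces this to a scalar minimization whose $z$-derivative is $\mathbb{E}_{Y'}[p(Y_{i-1,M_{i-1}}+X_{i,1}+z+Y')]-\beta$; since $p$ is non-decreasing this derivative is itself non-decreasing in $z$, so the pointwise minimizer is the smallest $z\ge 0$ at which the derivative is non-negative, which is exactly the threshold $Z_{i,1}(\beta)$ in \eqref{thm1-beta}. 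Therefore $f(\beta)=\inf_{z}\{\tilde F_1(z)-\beta\tilde F_2(z)\}$, an infimum of affine maps in $\beta$, so $f$ is concave and hence continuous on the interior of $[\underline{p},\bar{p}]\cap\mathbb{R}$. Strict monotonicity then follows from the suboptimality inequality: for $\beta>\beta'$, $f(\beta)\le\tilde F_1(Z(\beta'))-\beta \tilde F_2(Z(\beta'))=f(\beta')-(\beta-\beta')\tilde F_2(Z(\beta'))<f(\beta')$, where strictness uses $\tilde F_2(Z(\beta'))\ge\mathbb{E}[X_{i,1}+Y']>0$. This proves part (1).

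For part (2), I evaluate $f$ at the endpoints. At $\beta=\underline{p}=p(0)$ the defining inequality in \eqref{thm1-beta} is satisfied at $z=0$ because $p\ge p(0)$ pointwise, so $Z(\underline{p})=0$ and $f(\underline{p})=\mathbb{E}[\int_{Y_{i-1,M_{i-1}}}^{Y_{i-1,M_{i-1}}+X_{i,1}+Y'}(p(t)-\underline{p})\,dt]\ge 0$. For the upper end, if $\bar{p}=+\infty$ the feasible choice $z\equiv 0$ gives $f(\beta)\le \tilde F_1(0)-\beta\tilde F_2(0)\to-\infty$, using finiteness of $\tilde F_1(0)$ from the integrability assumption on $p$ and $\tilde F_2(0)\ge\mathbb{E}[Y_{i,1}]>0$. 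If $\bar{p}<\infty$, the elementary bound $p(t)-\beta\le\bar{p}-\beta$ gives $f(\beta)\le(\bar{p}-\beta)f_2(\beta)$, and pushing $\beta\uparrow\bar{p}$ via a monotone-convergence step (together with the ceiling $\bar{z}$ from Assumption \ref{ass1}, which keeps $Z(\beta)$ bounded and hence $f_2(\beta)$ bounded) gives $\limsup_{\beta\uparrow\bar{p}}f(\beta)\le 0$. Either way $f$ changes sign, so continuity plus the intermediate value theorem yield a root in $[\underline{p},\bar{p}]\cap\mathbb{R}$, and strict monotonicity makes it unique.

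The main obstacle is the upper-endpoint analysis when $\bar{p}$ is finite: there $Z(\beta)$ may diverge as $\beta\uparrow\bar{p}$ and $f_2(\beta)$ may blow up, so the inequality $f(\beta)\le(\bar{p}-\beta)f_2(\beta)$ is of indeterminate form $0\cdot\infty$; untangling it requires a careful monotone-convergence argument that leverages Assumption \ref{ass1} to restrict the infimum effectively to the bounded waiting region $[0,\bar{z}]$. A secondary point to handle is the degenerate case in which the threshold condition in \eqref{thm1-beta} is not attained at any finite $z$, so that $Z_{i,1}(\beta)$ must be interpreted as $+\infty$; there, the identification of $Z_{i,1}(\beta)$ with the pointwise minimizer needs an approximating sequence together with Assumption \ref{ass1} to stay within the feasible set.
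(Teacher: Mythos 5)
Your part (1) is correct but proved by a genuinely different route than the paper: you represent $f(\beta)$ as $\inf_{z(\cdot)}\{\tilde F_1(z)-\beta\tilde F_2(z)\}$ over first-wait-only rules, using convexity in $z$ to check that the threshold \eqref{thm1-beta} attains the conditional minimum, and then read off concavity (inf of affine maps) and strict decrease (suboptimality inequality with $\tilde F_2\ge \mathbb{E}[X_{i,1}+Y']>0$). The paper instead identifies $f(\beta)$ with $h(\beta)=\inf_{\pi\in\Pi_i}L(\beta,\pi)$, the optimal value of the full per-epoch MDP with $p_{\text{opt}}$ replaced by $\beta$; that identification leans on the entire machinery of Section \ref{section-proof} (Bellman equation, contraction under Assumption \ref{ass1}), whereas your argument is self-contained and more elementary. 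This is a legitimate trade-off: you buy independence from Theorem \ref{theorem1}, the paper buys part (2) for free, since Dinkelbach's lemma gives $h(p_{\text{opt}})=0$ and hence $f(p_{\text{opt}})=0$ immediately, with uniqueness from strict monotonicity.

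The genuine gap is in your existence argument for part (2) when $\bar p<\infty$. You claim Assumption \ref{ass1} "keeps $Z(\beta)$ bounded by $\bar z$ and hence $f_2(\beta)$ bounded," but that is not what the assumption provides: the bound \eqref{ass1-bound} is calibrated so that the threshold at $\beta=p_{\text{opt}}$ (equivalently, the optimal policy of Theorem \ref{theorem1}) is feasible; for $\beta$ approaching $\bar p$ the threshold in \eqref{thm1-beta} can exceed $\bar z$ and diverge, so $f_2(\beta)$ may blow up and your bound $f(\beta)\le(\bar p-\beta)f_2(\beta)$ is indeed of indeterminate form. You acknowledge this obstacle but leave the decisive step, $\limsup_{\beta\uparrow\bar p}f(\beta)\le 0$, unproven, so existence of a root is not established in this case. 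A clean repair stays inside your own framework: plug the zero-wait rule $z\equiv 0$ into your inf representation to get $f(\beta)\le \tilde F_1(0)-\beta\tilde F_2(0)$, which is $\le 0$ for every $\beta\ge\beta_0:=\mathbb{E}\bigl[\int_{Y}^{Y+X+Y'}p(t)\,dt\bigr]/\mathbb{E}[X+Y']$; the argument the paper uses to show \eqref{ass1-bound} is feasible (Appendix \ref{ass1-boundapp}) shows $\beta_0\le\bar p$ (with strict inequality unless $p$ attains its limit), so $\beta_0$ lies in $[\underline p,\bar p]\cap\mathbb{R}$, and the intermediate value theorem between $\underline p$ and $\beta_0$, together with your strict monotonicity, finishes existence and uniqueness without any limit analysis at $\bar p$.
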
  
 \begin{proof}
 \ifreport
See Appendix \ref{lemma-hbetaapp}.
\else
See Appendix L in our supplementary material.
\fi
 \end{proof}
\noindent Therefore, the solution to Algorithm 1 is unique.


One common sampling policy is the zero-wait policy, which samples the packet once it receives the feedback, i.e., $Z_{i,j} = 0$ for all $(i,j)$ \cite{yates2021age}.
The zero-wait policy maximizes the throughput and minimizes the delay. However, by Theorem \ref{theorem1}, the zero-wait policy may be suboptimal on age.  
The following result provides the necessary and sufficient condition when the zero-wait policy is optimal. 
\begin{corollary}\label{equiv-zerowait}
If $f_{\text{max}}=\infty$, $p(\cdot)$ is non-decreasing, the $Y_{i,j}$'s are i.i.d. with finite mean $\mathbb{E}[Y_{i,j}]<\infty$, the $X_{i,j}$'s are i.i.d. with finite mean $\mathbb{E}[X_{i,j}]<\infty$, the $Y_{i,j}$'s and the $X_{i,j}$'s are mutually independent, and Assumption \ref{ass1} holds, then the zero-wait policy is optimal if and only if
\begin{align}\label{cor2}
\text{ess} \inf \mathbb{E}_{Y'} \left[  p( Y+X + Y') \mid Y,X \right] \ge  \frac{ \mathbb{E} \left[  \int^{Y+X + Y'}_{Y}  p(t) dt \right]  }  { \mathbb{E} \left[ X+Y' \right] },
\end{align} 
where $Y' = Y_{i,1}+ \sum_{j=2}^{M_i} (X_{i,j}+Y_{i,j})$, $Y = Y_{i-1,M_{i-1}}, X=X_{i,1}$
and we denote ess$\inf E= \inf \left\{ e: \mathbb{P}(E \le e) >0 \right\}$ for any random variable $E$.
\end{corollary}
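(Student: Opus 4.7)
The plan is to specialize Theorem \ref{theorem1} to the zero-wait case and read off exactly when the two policies coincide. Since Theorem \ref{theorem1} already asserts $Z_{i,j}(\beta)=0$ for every $j\ge 2$, the zero-wait policy is optimal if and only if, in addition, $Z_{i,1}(\beta)=0$ holds almost surely in the randomness of $Y_{i-1,M_{i-1}}$ and $X_{i,1}$. Thus the whole argument reduces to characterizing when this first waiting time vanishes.

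Because $p(\cdot)$ is non-decreasing, the map $z\mapsto \mathbb{E}_{Y'}\!\left[p(Y_{i-1,M_{i-1}}+X_{i,1}+z+Y')\right]$ is non-decreasing in $z$, so the infimum in \eqref{thm1-beta} equals zero on a given realization of $(Y_{i-1,M_{i-1}},X_{i,1})$ if and only if the defining inequality already holds at $z=0$. Consequently, $Z_{i,1}(\beta)=0$ almost surely is equivalent to
\[
\mathbb{E}_{Y'}\!\left[p(Y+X+Y')\right]\ \ge\ \beta \quad \text{a.s.,}
\]
which in turn is precisely $\mathrm{ess}\inf\,\mathbb{E}_{Y'}[p(Y+X+Y')]\ge \beta$, where $Y=Y_{i-1,M_{i-1}}$ and $X=X_{i,1}$. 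The remaining task is to identify the value of $\beta$. When $Z_{i,1}(\beta)=0$, equation \eqref{root} collapses to
\[
\beta\ =\ \frac{\mathbb{E}\!\left[\int_{Y}^{Y+X+Y'} p(t)\,dt\right]}{\mathbb{E}\!\left[X+Y'\right]},
\]
and by Lemma \ref{lemma-hbeta} this is the unique root of $f(\beta)=0$, which by Theorem \ref{theorem1} equals $p_{\text{opt}}$. Plugging this closed-form $\beta$ back into the essential-infimum inequality yields exactly condition \eqref{cor2}, proving the ``only if'' direction.

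For the converse, assume \eqref{cor2} holds and define $\beta^\star$ to be its right-hand side. The hypothesis then reads $\mathrm{ess}\inf\,\mathbb{E}_{Y'}[p(Y+X+Y')]\ge \beta^\star$, so the threshold rule \eqref{thm1-beta} returns $Z_{i,1}(\beta^\star)=0$ almost surely; substituting into \eqref{root} makes that equation an identity at $\beta^\star$. By the uniqueness of the root (Lemma \ref{lemma-hbeta}(2)), $\beta^\star$ coincides with the $\beta$ from Theorem \ref{theorem1}, so the optimal policy is zero-wait. The only delicate point I anticipate is bookkeeping of the almost-sure quantifier: one must verify that ``$Z_{i,1}(\beta)=0$'' as an element of the optimal policy really means this equality on a set of full measure of $(Y_{i-1,M_{i-1}},X_{i,1})$-realizations (not merely in expectation), which is what motivates and makes tight the essential infimum in \eqref{cor2}. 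Apart from this careful reading, no further technical machinery beyond Theorem \ref{theorem1} and Lemma \ref{lemma-hbeta} is required.
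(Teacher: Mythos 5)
Your proof is correct and follows essentially the same route as the paper's: specialize Theorem \ref{theorem1}, use monotonicity of $p$ to see that $Z_{i,1}(\beta)=0$ almost surely iff $\mathrm{ess}\inf\,\mathbb{E}_{Y'}[p(Y+X+Y')]\ge\beta$, and identify $\beta$ from \eqref{root} as the right-hand side of \eqref{cor2}. The only (harmless) variation is in the ``if'' direction, where you invoke the uniqueness of the root from Lemma \ref{lemma-hbeta} to certify $\beta^\star=p_{\text{opt}}$, whereas the paper instead uses the observation that the zero-wait policy's average penalty can be no smaller than the optimum.
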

\begin{proof}
\ifreport
See Appendix \ref{equiv-zerowaitapp}.
\else
See Appendix M in our supplementary material.
\fi
\end{proof}


When the channel delays are constant, we can get from Corollary \ref{equiv-zerowait} that 

\begin{corollary}\label{deter}
If $f_{\text{max}}=\infty$, $p(\cdot)$ is non-decreasing and satisfies Assumption \ref{ass1}, and the $Y_{i,j}$'s, $X_{i,j}$'s are constants, then the zero-wait policy is the solution to problem \eqref{avg}.
\end{corollary}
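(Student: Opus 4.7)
The plan is to invoke Corollary \ref{equiv-zerowait} and verify its condition (\ref{cor2}) directly under the deterministic-delay assumption. Write $X_{i,j}\equiv x$ and $Y_{i,j}\equiv y$; the only remaining randomness is the geometric variable $M_i$ (with parameter $1-\alpha$), and $Y'=y+(M_i-1)(x+y)$ immediately gives $X+Y'=M_i(x+y)$. Since $Y$ and $X$ are constants, the essential infimum on the left-hand side of (\ref{cor2}) collapses to a plain expectation, and the inequality to be verified becomes
\begin{equation*}
\mathbb{E}\bigl[p(y+M_i(x+y))\bigr]\cdot\mathbb{E}\bigl[M_i(x+y)\bigr]\;\ge\;\mathbb{E}\!\left[\int_y^{y+M_i(x+y)}p(t)\,dt\right].
\end{equation*}

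The first step is to bound the random integral by a Riemann-type sum. Setting $\phi(u)=p(y+u(x+y))$, a linear change of variable rewrites the integral as $(x+y)\int_0^{M_i}\phi(u)\,du$, and the monotonicity of $\phi$ yields $\int_0^{M_i}\phi(u)\,du\le\sum_{k=1}^{M_i}\phi(k)$ pathwise. Hence it suffices to show
\begin{equation*}
\mathbb{E}[\phi(M_i)]\cdot\mathbb{E}[M_i]\;\ge\;\mathbb{E}\!\left[\sum_{k=1}^{M_i}\phi(k)\right].
\end{equation*}
In fact, the two sides are \emph{equal}. Swapping the order of summation on the right gives $\sum_{k\ge1}\phi(k)\,\mathbb{P}(M_i\ge k)$; the memorylessness of the geometric distribution gives $\mathbb{P}(M_i\ge k)=\alpha^{k-1}=\mathbb{E}[M_i]\cdot\mathbb{P}(M_i=k)$, which immediately delivers the identity $\mathbb{E}[\phi(M_i)]\,\mathbb{E}[M_i]=\mathbb{E}[\sum_{k=1}^{M_i}\phi(k)]$. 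Chaining these pieces together recovers (\ref{cor2}), so Corollary \ref{equiv-zerowait} yields optimality of the zero-wait policy.

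The main conceptual obstacle, more than any calculation, is recognizing why deterministic delays suffice for zero-wait optimality even though $M_i$ remains random. The reason is precisely the memorylessness identity $\mathbb{P}(M_i\ge k)=\mathbb{E}[M_i]\,\mathbb{P}(M_i=k)$: if $M_i$ were a generic $\mathbb{N}$-valued random variable, the tightness of the bound could be lost and (\ref{cor2}) might fail. Once one spots that the geometric structure is what turns an inequality into an equality at the key step, the remaining ingredients---specializing the ess-inf, the change of variable, and the monotonicity-based bounding of the integral---are routine.
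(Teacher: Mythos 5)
Your proof is correct and takes essentially the same route as the paper's: both reduce the claim to condition \eqref{cor2} via Corollary \ref{equiv-zerowait}, bound the random integral block-by-block using the monotonicity of $p$, and exploit the geometric distribution of $M_i$ — your memorylessness identity $\mathbb{P}(M_i\ge k)=\mathbb{E}[M_i]\,\mathbb{P}(M_i=k)$ is exactly the summation interchange that closes the paper's chain of (in)equalities. The only cosmetic difference is your change of variable $\phi(u)=p(y+u(x+y))$, which packages the paper's term-by-term estimate a bit more compactly.
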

\begin{proof}
\ifreport
See Appendix \ref{deterapp}.
\else
See Appendix N in our supplementary material.
\fi
\end{proof}

Theorem \ref{theorem1} is an extension to \cite{sun2019sampling,tsai2020age}.
When the forward channel is reliable, i.e., $M_i=1$ for all $i$ or $\alpha=0$, Theorem \ref{theorem1} can be reduced to the result in \cite{tsai2020age}. 
Further, we extend \cite{tsai2020age} in two folds: (i) The age penalty $p(\cdot)$ is allowed to be negative or discontinuous. (ii) The channel delays $Y_{i,1}, X_{i,1}$ have a finite expectation and do not need to be bounded. Note that when $M_i=1$, Assumption \ref{ass1} is negligible. 
When $M_i=1$, and there is no backward delay ($X_{i,1}=0$), our result reduces to \cite[Theorem 1]{sun2019sampling}. 

The study in \cite[Theorem 2]{arafa2020timely} proves the optimality of the zero-wait policy among the deterministic policies under an unreliable forward channel. This result corresponds to Corollary \ref{deter}, a special case of Theorem \ref{theorem1}. Our paper extends \cite{arafa2020timely} in two folds: (i) We allow the policy space $\Pi$ to be randomized. Among randomized policies, due to the disturbances on the previous sampling times, the current sampling time is dependent on the previous ones, which is different from \cite{arafa2020timely}. (ii) We consider random two-way delays, extending the constant one-way delay in \cite{arafa2020timely}. 


\subsection{Optimal Sampling Policy with Sampling Rate Constraint}\label{with-sampling}
For general values of $f_{\text{max}}$, we propose the following result that extends Theorem \ref{theorem1}:
\begin{theorem}\label{theorem2}
If $p(\cdot)$ is non-decreasing, the $Y_{i,j}$'s are i.i.d. with finite mean $\mathbb{E}[Y_{i,j}]<\infty$, the $X_{i,j}$'s are i.i.d. with finite mean $\mathbb{E}[X_{i,j}]<\infty$, the $Y_{i,j}$'s and the $X_{i,j}$'s are mutually independent, and Assumption \ref{ass1} holds, then \eqref{thm1-beta}-\eqref{root} is the optimal solution to \eqref{avg}, if the following condition holds:
\begin{equation}
\mathbb{E} \left[  X_{i,1} +Z_{i,1}(\beta)+Y' \right] >\frac{1}{f_{\text{max}}(1-\alpha)},
\end{equation}
where ${Y'} = Y_{i,1}+ \sum_{j=2}^{M_i} (X_{i,j}+Y_{i,j})$.
Otherwise, an optimal solution is as follows: 
\begin{align}
& Z_{i,1}(\beta) = \left\{
\begin{array}{lll}
  Z_{\text{min}} (\beta) & \text{w.p. $\lambda$,}\\
  Z_{\text{max}}(\beta)  & \text{w.p. $1-\lambda$. }
\end{array}
\right. \\ 
& Z_{i,j}=0, \ \ \ j = 2,3,\ldots,M_i,
\end{align}
$Z_{\text{min}}( \beta)$ and $Z_{\text{max}}( \beta)$ are described as follows:  
\begin{align}
& \nonumber Z_{\text{min}}( \beta)    =   \inf_{z}  \Big \{  z\ge 0:  \\
 & \mathbb{E}_{Y'} \left[ p( Y_{i-1,M_{i-1}}+X_{i,1} +z+Y')  \ \big{|} \  Y_{i-1,M_{i-1}}, X_{i,1} \right] \ge \beta \Big \}, \label{zmin} \\
& \nonumber Z_{\text{max}}( \beta)     =   \inf_{z}  \Big \{  z\ge 0: \\
 & \mathbb{E}_{Y'} \left[ p( Y_{i-1,M_{i-1}}+X_{i,1} +z+Y')  \ \big{|} \  Y_{i-1,M_{i-1}}, X_{i,1} \right] > \beta \Big \}. \label{zmax}
\end{align}
$\beta$ is determined by 
\begin{align} 
\nonumber & \mathbb{E} \left[ X_{i,1}+Z_{\text{min}}( \beta)+Y'   \right] \le \frac{1}{f_{\text{max}}(1-\alpha)}\\  \le & \mathbb{E} \left[ X_{i,1}+Z_{\text{max}}( \beta)+Y'   \right]. \label{theorem2-beta}
\end{align}
The probability $\lambda$ is given by 
\begin{equation}
\lambda = \frac{\mathbb{E} \left[ X_{i,1}+Z_{\text{max}}( \beta)+Y'   \right] -  \frac{1}{f_{\text{max}}(1-\alpha)} }{\mathbb{E} \left[ Z_{\text{max}} ( \beta) - Z_{\text{min}}( \beta) \right]}.
\end{equation}
\end{theorem}
\begin{proof}
See Section \ref{section-proof}.
\end{proof}
According to Theorem \ref{theorem2}, the proposed optimal policy may be randomized or deterministic. When $p(\cdot)$ is strictly increasing, we have $Z_{\text{min}}(\beta)=Z_{\text{max}}(\beta)$. Similar to Theorem~\ref{theorem1}, the optimal policy is stationary and deterministic in current age and previous backward delay. When $p(\cdot)$ is not strictly increasing, $Z_{\text{min}}(\beta)$ and $Z_{\text{max}}(\beta)$ may be different, so the optimal policy at $j=1$ is a random mixture of two deterministic sampling times. Note that when $Z_{\text{min}}(\beta)$ and $Z_{\text{max}}(\beta)$ may be different, the random optimal policy \emph{may be nonstationary}. In addition, we can solve \eqref{theorem2-beta} via low complexity algorithms such as bisection search. 

When $M_i=1$ (or $\alpha=0$) and $X_{i,j}=0$, Theorem \ref{theorem2} reduces to \cite[Theorem2]{sun2019sampling}. Combined with the discussions in Section \ref{no-sampling}, we conclude that our paper is an extension to some recent studies on sampling for optimizing age, e.g., \cite{yates2015lazy,sun2017update,sun2019sampling,tsai2020age,arafa2020timely,klugel2019aoi}.       

\section{Proof of the Main Result}\label{section-proof}

In this section, we provide the proof of our main results: Theorem \ref{theorem1} and Theorem \ref{theorem2}. In Section \ref{reformu}, we utilize the Lagrangian dual problem of the original long-term average problem and reformulate the Lagrangian dual problem into a per-epoch MDP problem. In Section \ref{per-epoc}, we solve the per-epoch MDP problem by formulating an exact optimal value function to the Bellman Equation, which is the key challenge to this paper. In Section \ref{proof-zeroduality}, we established zero duality gap to the Lagrangian problem, which ends our proof. Finally, in Section \ref{discussion}, we summarize our technical contribution and compare it with some related works.  

\subsection{Reformulation of Problem \eqref{avg}}\label{reformu}

In this subsection, we decompose the original problem to a per-epoch problem. The idea is motivated by recent studies that reformulate the average problem into a per-sample problem \cite{sun2017update,sun2019sampling,sun2017remote,ornee2019sampling,tsai2020age}.

Since $\{ S_{i,M_i} \}_i$ follows a regenerative process, by renewal theory, \cite[Section 6.1]{haas2006stochastic}, \cite{ross1996stochastic},
\begin{align}
&\limsup_{T\rightarrow \infty}  \frac{1}{T}  \mathbb{E} \left[  \int_{0}^{T} p(\Delta_t) dt \right] \label{ori-to-n} \\  = & \lim_{n\rightarrow \infty} \frac{\mathbb{E} \left[  \int^{D_{n,M_n}}_{0}  p(\Delta_t) dt \right] }{\mathbb{E} \left[  D_{n,M_n}  \right]}  \\
= &   \lim_{n\rightarrow \infty} \frac{ \sum_{i=1}^{n} \mathbb{E} \left[  \int^{D_{i,M_i}}_{D_{i-1,M_{i-1}}}  p(\Delta_t) dt \right] }{\sum_{i=1}^{n} \mathbb{E} \left[  D_{i,M_i} - D_{i-1,M_{i-1}}  \right]}.
\end{align}

In addition, 
\begin{align}
& \limsup_{T\rightarrow \infty} \frac{1}{T}  \mathbb{E} \left[ C(T) \right] = \lim_{n\rightarrow \infty}    \frac{ \mathbb{E} \left[  \sum_{i=1}^{n} M_i  \right] }{ \mathbb{E} \left[  S_{n,M_n}   \right]} \\ = & \lim_{n\rightarrow \infty}    \frac{n}{(1-\alpha) \mathbb{E} \left[  D_{n,M_n}   \right]}. \label{ori-to-nrate}
\end{align}

From \eqref{ori-to-n}-\eqref{ori-to-nrate}, the original problem \eqref{avg} is equivalent to 
\begin{align}
p_{\text{opt}} & = \inf_{\pi\in \Pi} \lim_{n\rightarrow \infty} \frac{ \sum_{i=1}^{n} \mathbb{E} \left[  \int^{D_{i,M_i}}_{D_{i-1,M_{i-1}}}  p(\Delta_t) dt \right] }{\sum_{i=1}^{n} \mathbb{E} \left[  D_{i,M_i} - D_{i-1,M_{i-1}}  \right]}, \label{fraction-multi}\\
& \text{s.t. } \lim_{n\rightarrow \infty} \frac{1}{n} \sum_{i=1}^{n}    \mathbb{E} \left[ D_{i,M_i} - D_{i-1,M_{i-1}}  \right] \ge \frac{1}{f_{\text{max}}(1-\alpha)}.
\end{align}

We consider the following MDP with a parameter $c\in \mathbb{R}$:
\begin{align}
\nonumber h(c) & =   \inf_{\pi\in \Pi} \lim_{n\rightarrow \infty} \frac{1}{n} \sum_{i=1}^{n}  \mathbb{E} \Bigg{[}  \int^{D_{i,M_i}}_{D_{i-1,M_{i-1}}}  p(\Delta_t) dt   \\  & \ \  -  c \left(  D_{i,M_i} - D_{i-1,M_{i-1}}  \right)   \Bigg{]}, \label{hc} \\
& \text{s.t.} \lim_{n\rightarrow \infty} \frac{1}{n} \sum_{i=1}^{n}    \mathbb{E} \left[ D_{i,M_i} - D_{i-1,M_{i-1}}  \right] \ge \frac{1}{f_{\text{max}}(1-\alpha)}.
\end{align}

By Dinkelbach's method \cite{dinkelbach1967nonlinear}, we have
\begin{lemma}\cite[lemma 2]{sun2019sampling} \label{dinklebach-lemma}

(i) $h(c) \lesseqqgtr 0$ if and only if $p_{\text{opt}} \lesseqqgtr c$.

(ii) The solution to \eqref{fraction-multi} and \eqref{hc} are equivalent.
\end{lemma}

We define the Lagrangian with $c=p_{\text{opt}}$:
\begin{align}
L(\pi; \gamma) & =  \lim_{n\rightarrow \infty} \frac{1}{n} \sum_{i=1}^{n}  \mathbb{E} \Bigg{[}  \int^{D_{i,M_i}}_{D_{i-1,M_{i-1}}}  p(\Delta_t) dt   \\ &  - (p_{\text{opt}}+\gamma) \left(  D_{i,M_i} - D_{i-1,M_{i-1}}  \right)   \Bigg{]} + \frac{\gamma}{f_{\text{max}(1-\alpha)}},
\end{align}
where $\gamma\ge 0$ is the dual variable. The primal problem is  
\begin{align}
l(\gamma) \triangleq \inf_{\pi\in \Pi} L(\pi; \gamma). \label{prime-problem}
\end{align}
The dual problem is 
\begin{align}
d \triangleq \max_{\gamma\ge 0} l(\gamma). \label{duel-problem}
\end{align}
Weak duality theorem \cite{bertsekas2003convex,boyd2004convex} implies that $d\le h(p_{\text{opt}})$. We will later show that the duality gap is $0$, i.e., $d = h(p_{\text{opt}})$.
Note that 
\begin{align}
& D_{i,M_i} - D_{i-1,M_{i-1}} =  \sum_{j=1}^{M_i} (X_{i,j}+Z_{i,j} + Y_{i,j} ),\\
& \mathbb{E} \left[  \int^{D_{i,M_i}}_{D_{i-1,M_{i-1}}}   p(\Delta_t) dt \right] \\ = & \mathbb{E} \left [   \int^{Y_{i-1,M_{i-1}}+  \sum_{j=1}^{M_i} (X_{i,j}+Z_{i,j} + Y_{i,j} ) }_{Y_{i-1,M_{i-1}}} p(t) dt \right].
\end{align}
Recall that the age decreases to $Y_{i-1,M_{i-1}}$ at time $D_{i-1,M_{i-1}}$. Note that $Y_{i-1,M_{i-1}}$ is independent of the history information by the sampling time $S_{i-1,M_{i-1}}$. Thus, the age evolution at the $i^{th}$ epoch is independent of the sampling decisions from the previous epochs $0,1,2,...,i-1$. Therefore, to solve \eqref{prime-problem}, minimizing each epoch separately is sufficient.   
We define the policy space $\Pi_i$ as the collection of sampling decisions $(Z_{i,1},Z_{i,2},...)$ at epoch $i$ such that the stochastic kernel 
\begin{align}\nonumber
Z_{i,j}(dz_{i,j} | y_{i-1,M_{i-1}}, x_{i,1}, z_{i,1}, y_{i,1}, \ldots, z_{i,j-1}, y_{i,j-1},x_{i,j} )
\end{align} is Borel measurable. The difference  between $\Pi_i$ and $\Pi$ is that the sampling decisions in $\Pi_i$ do not depend on the history information from previous epochs (except $Y_{i-1,M_{i-1}}$). Hence, it is easy to find that $\Pi_i\subset \Pi$.   

Then, by the analysis of the previous paragraph, we have the following result:
\begin{lemma}\label{lemma3}
An optimal solution to \eqref{prime-problem} satisfies
\begin{align}
\nonumber \inf_{ \pi \in \Pi_i} & \mathbb{E} \Bigg [   \int^{Y_{i-1,M_{i-1}}+  \sum_{j=1}^{M_i} (X_{i,j}+Z_{i,j} + Y_{i,j} ) }_{Y_{i-1,M_{i-1}}} p(t) dt \\ &  - (p_{\text{opt}} +\gamma) \sum_{j=1}^{M_i} (X_{i,j}+Z_{i,j} + Y_{i,j} )   \ \Big{|} \  Y_{i-1,M_{i-1}},X_{i,1} \Bigg ]. \label{per-epoch}
\end{align}
\end{lemma}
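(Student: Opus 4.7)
The plan is to invoke the Dinkelbach transformation for fractional programming. Define the scalarized surrogate
\begin{equation}
F(\beta) \;=\; \inf_{\pi \in \Pi_i} \mathbb{E}\!\left[\int_{Y_{i-1,M_{i-1}}}^{Y_{i-1,M_{i-1}}+\sum_{j=1}^{M_i}(X_{i,j}+Z_{i,j}+Y_{i,j})}\!\! p(t)\,dt \;-\; \beta \sum_{j=1}^{M_i}(X_{i,j}+Z_{i,j}+Y_{i,j}) \right].
\end{equation}
Because the denominator $\sum_{j=1}^{M_i}(X_{i,j}+Z_{i,j}+Y_{i,j})$ is non-negative, $F$ is monotone non-increasing in $\beta$. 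The claim reduces to the classical Dinkelbach equivalence that $p_{\text{opt}}$ is precisely the root of $F(\beta)=0$, and that any minimizer of the fractional program \eqref{fractionn} also attains $F(p_{\text{opt}})$.

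To establish that root characterization, I would first argue $F(p_{\text{opt}})\ge 0$ by noting that for every $\pi\in\Pi_i$ the ratio in \eqref{fractionn} is lower bounded by $p_{\text{opt}}$, which yields $\mathbb{E}[\mathrm{num}(\pi)-p_{\text{opt}}\,\mathrm{denom}(\pi)]\ge 0$. For the reverse bound, I would take any minimizing sequence $\pi_k$ whose fractional value converges to $p_{\text{opt}}$; the numerator-minus-scaled-denominator equals $(\mathrm{ratio}(\pi_k)-p_{\text{opt}})\mathbb{E}[\mathrm{denom}(\pi_k)]\to 0$, using that $\mathbb{E}[\mathrm{denom}(\pi_k)]$ stays bounded thanks to the cap $\bar z$ on waiting times in Assumption~\ref{ass1}(a) together with Wald's identity \eqref{walds}. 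Hence $F(p_{\text{opt}})\le 0$, so $F(p_{\text{opt}})=0$. Any $\pi^*$ optimal for \eqref{fractionn} then satisfies $\mathbb{E}[\mathrm{num}(\pi^*)-p_{\text{opt}}\,\mathrm{denom}(\pi^*)]=0=F(p_{\text{opt}})$, so $\pi^*$ is a minimizer of the \emph{unconditional} version of \eqref{per-epoch}.

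Finally, I would lift this to the conditional form in \eqref{per-epoch}. Since the stochastic kernels defining $\Pi_i$ may depend on $Y_{i-1,M_{i-1}}$ and $X_{i,1}$, the tower property gives
\begin{equation}
\mathbb{E}\!\left[\mathrm{num}(\pi)-p_{\text{opt}}\,\mathrm{denom}(\pi)\right]=\mathbb{E}\!\left[\,\mathbb{E}\!\left[\mathrm{num}(\pi)-p_{\text{opt}}\,\mathrm{denom}(\pi)\,\big|\,Y_{i-1,M_{i-1}},X_{i,1}\right]\right],
\end{equation}
and the inner conditional expectation can be minimized separately for each realization of $(Y_{i-1,M_{i-1}},X_{i,1})$ by a measurable-selector argument from the theory of Borel MDPs (cf.\ \cite{bertsekas2004stochastic}). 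A selector of such conditional minimizers sits inside $\Pi_i$ and cannot outperform $\pi^*$ in the outer expectation; conversely $\pi^*$ must agree with the conditional minimizer a.s.\ in $(Y_{i-1,M_{i-1}},X_{i,1})$.

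The main technical obstacle is the second paragraph: confirming $F(p_{\text{opt}})=0$ requires controlling $\mathbb{E}[\mathrm{denom}(\pi_k)]$ along the minimizing sequence, which is where Assumption~\ref{ass1}(a) earns its keep by bounding $Z_{i,j}\le\bar z$ uniformly. A secondary, purely bookkeeping, obstacle is ensuring the measurable-selector step gives a kernel that genuinely lies in $\Pi_i$; this is routine but worth invoking explicitly to keep the conditional formulation rigorous.
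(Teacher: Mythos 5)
Your proposal takes essentially the same route as the paper: the paper also invokes Dinkelbach's method, defining $h(\beta)=\inf_{\pi\in\Pi_i}\mathbb{E}\big[\int_{D_{i-1,M_{i-1}}}^{D_{i,M_i}}p(\Delta_t)dt-\beta(D_{i,M_i}-D_{i-1,M_{i-1}})\big]$ and citing the known equivalence ($h(\beta)=0$ iff $\beta=p_{\text{opt}}$, with the two problems sharing optimal solutions) from \cite{dinkelbach1967nonlinear,sun2019sampling} rather than reproving it. Your additional steps---the sign/root argument showing $F(p_{\text{opt}})=0$ (with the denominator bound from Assumption \ref{ass1}(a) and Wald's identity) and the tower-property/measurable-selection passage to the conditional form \eqref{per-epoch}---are precisely the details behind that citation, so the two arguments coincide in substance.
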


Thus, for any epoch $i$, we will solve $Z_{i,1}, Z_{i,2},...$ according to \eqref{per-epoch}.

\subsection{Solution to the Per-epoch Problem \eqref{per-epoch}}\label{per-epoc}

We will solve problem \eqref{per-epoch} given that $Y_{i-1,M_{i-1}}=\delta$ and $X_{i,1} = x$, where $\delta\ge0$ and $x\ge0$. Since the epoch number $i$ does not affect problem \eqref{per-epoch}, in this subsection, we will remove the subscription $i$ from $M_i,X_{i,j},Y_{i,j},Z_{i,j}$ and replace them by $M,X_j,Y_j,Z_j$ for the ease of descriptions. 
In addition, since we want to find out a solution to \eqref{fraction-multi}, we need to avoid that $l(\gamma)=-\infty$. Thus, we assume that $\gamma$ satisfies $\inf_{z\ge 0}\{z: p(z) > p_{\text{opt}} + \gamma \}<\infty$.\footnote{If $\inf_{z\ge 0}\{z: p(z) > p_{\text{opt}} + \gamma \}=\infty$, this subsection implies that waiting for arbitrary large time can optimize \eqref{prime-problem}. If such a policy optimizes \eqref{hc}, we have $p_{\text{opt}}=\bar{p}$, which contradicts to our assumption that $p_{\text{opt}}<\bar{p}$.} 

 Different from \cite{sun2017update,sun2017remote,ornee2019sampling,sun2019sampling}, the per-epoch problem \eqref{per-epoch} is an MDP with multiple samples and cannot be reduced to the per-sample problem in the sense that the age is not refreshed under failed transmissions. According to \eqref{per-epoch}, we define the value function $J_{\pi,\gamma}$ under a policy $\pi\in \Pi_i$ with an initial age state $\delta\ge 0$ (at delivery time) and backward delay $x\ge 0$: 
\begin{align} 
\nonumber  J_{\pi,\gamma}(\delta,x) = & \mathbb{E} \Bigg[   \int^{\delta+  \sum_{j=1}^{M} (X_{j}+Z_{j} + Y_{j} ) }_{\delta} p(t) dt \\ &   - (p_{\text{opt}} +\gamma ) \sum_{j=1}^{M} (X_{j}+Z_{j} + Y_{j} )  \ \Big{|} \   X_{1}=x   \Bigg]  \label{value-function-2}  \\
= & \mathbb{E} \left[   \sum_{j=1}^{M}  g_{\gamma}(\Delta_j, X_{j}, Z_{j})   \ \Big{|} \  \Delta_1 = \delta, X_{1}=x \right],  \label{value-function} 
\end{align} where the instant cost function $g_{\gamma}(\delta,x,z)$ with state $(\delta,x)$ and action $z$ is defined as 
\begin{align} 
\nonumber & g_{\gamma}(\delta,x,z) \\ = &  \mathbb{E}_Y \left[ 
 \int_{\delta}^{\delta+x+z+Y} p(t) dt - (p_{\text{opt}}+\gamma) (x+z+Y) \right], \label{state-evolve}
\end{align} 
where $Y$ has the same delay distribution as the $Y_j$'s 
and the age state evolution is described as
\begin{equation}\label{state-evolution}
\Delta_{j+1} = \Delta_{j}+X_j+Z_{j}+Y_{j}, \ \ \ j = 1,2,...M-1,
\end{equation} with initial age state $\Delta_1 = \delta$ and initial backward delay $x$. Also, the policy $\pi\in \Pi_i$ has a Borel measurable stochastic kernel $Z_{j}(dz_{j} |  \delta_1, x_{1}, z_{1}, \ldots, \delta_j, x_{j} )$, and thus $J_{\pi,\gamma}(\delta,x)$ is Borel measurable \cite[Chapter 9]{bertsekas2004stochastic}.
The above settings imply that problem \eqref{per-epoch} is equivalent to a shortest path MDP problem. 
Solving \eqref{per-epoch} is equivalent to solving 
\begin{equation} \label{optimal-value-function}
J_{\gamma}(\delta,x) = \inf_{\pi \in \Pi_i} J_{\pi,\gamma}(\delta,x).
\end{equation}

When the channel state is reliable, i.e., $\alpha=0$ or $M=1$, problem \eqref{per-epoch} (or equivalently, \eqref{optimal-value-function}) becomes a single-sample problem, and there is no bound restriction to the instant cost function $g_{\gamma}(\delta,x,z)$.
However, in the unreliable transmission case where $\alpha>0$, problem \eqref{per-epoch} contains multiple samples. In the case of multiple samples, most of the literature of dynamic programming e.g., \cite{bertsekas2018abstract,bertsekas1995dynamic1,bertsekas1995dynamic2,bertsekas2004stochastic,puterman2014markov,sennott1986new,sennott1989average,sennott1986neww} requires that the instant cost function $g_{\gamma}(\delta,x,z)$ is bounded from below. 
We have such a requirement.
\begin{lemma}\label{lemma_boundedbelow}
There exists a value $\eta$ such that $g_{\gamma}(\delta,x,z)\ge -\eta$ and $J_{\pi,\gamma}(\delta,x)\ge -\eta/(1-\alpha)$ for all $(\delta,x,z)$ and any policy $\pi\in\Pi_i$.  
\end{lemma}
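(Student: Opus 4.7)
The plan is to exploit the monotonicity of the penalty $p(\cdot)$ together with the bounded action/state structure from Assumption \ref{ass1}(a) to produce a uniform lower bound on the instant cost $g$, and then to propagate this bound through the discounted representation \eqref{discount} of $J_\pi$.

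First I would pin down a trivial but useful inequality: since $p$ is non-decreasing on $[0,\infty)$, for every $\delta\ge 0$ and every $u\ge 0$ we have $\int_\delta^{\delta+u} p(t)\,dt \ge u\,p(0)$. Applying this to $u=x+z+Y$ inside the definition of $g$,
\begin{align}
g(\delta,x,z) \;\ge\; \mathbb{E}_Y\bigl[(x+z+Y)\bigl(p(0)-p_{\text{opt}}\bigr)\bigr].
\end{align}
The same monotonicity argument applied to the integrand $p(\Delta_t)$ in \eqref{avg} shows $p_{\text{opt}}\ge p(0)$, so the factor $p(0)-p_{\text{opt}}$ is non-positive. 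I would also invoke the standing assumption $\mathbb{E}[\int_{\delta}^{\delta+\sum_j(X_{i,j}+Y_{i,j})} p(t)\,dt]<\infty$ together with feasibility of, say, the zero-wait policy to confirm $p_{\text{opt}}<\infty$, so that the scalar $p_{\text{opt}}-p(0)$ is a finite non-negative number.

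Next I would invoke Assumption \ref{ass1}(a) (recall $\alpha>0$ is the nontrivial case; for $\alpha=0$ the problem collapses to a single stage and there is nothing to prove beyond step one) to bound $x\le\bar{x}$, $z\le\bar{z}$ uniformly, and use $\mathbb{E}[Y]<\infty$ to bound the $Y$-expectation. Setting
\begin{align}
\lambda \;\triangleq\; \bigl(\bar{x}+\bar{z}+\mathbb{E}[Y]\bigr)\bigl(p_{\text{opt}}-p(0)\bigr) \;\ge\; 0,
\end{align}
the previous display yields $g(\delta,x,z)\ge -\lambda$ for every admissible $(\delta,x,z)$.

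Finally, for any $\pi\in\Pi_i'$ I would plug the pointwise bound $g(\Delta_j,X_j,Z_j)\ge -\lambda$ into the discounted expression \eqref{discount}:
\begin{align}
J_\pi(\delta,x) \;=\; \sum_{j=1}^{\infty} \alpha^{j-1}\,\mathbb{E}\bigl[g(\Delta_j,X_j,Z_j)\mid \Delta_1{=}\delta, X_1{=}x\bigr] \;\ge\; -\lambda\sum_{j=1}^{\infty}\alpha^{j-1} \;=\; -\frac{\lambda}{1-\alpha},
\end{align}
which is the second claim. The only technical wrinkle I anticipate is justifying that we may apply the pointwise bound term-by-term (the series on the right is a geometric series with a non-positive constant term, so Fubini/monotone convergence applies trivially once the universal measurability of the kernel guarantees the conditional expectations are well defined). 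I do not expect any real obstacle here; the lemma is essentially a book-keeping statement that licenses the contraction-mapping/shortest-path machinery cited from \cite{bertsekas2018abstract,bertsekas2004stochastic} in the subsequent analysis.
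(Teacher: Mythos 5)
Your route to the lower bound on $g$ is different from the paper's and, as written, proves a strictly weaker statement. You bound $\int_\delta^{\delta+u}p(t)\,dt\ge u\,p(0)$ and then cap $x+z$ by $\bar{x}+\bar{z}$ using Assumption \ref{ass1}(a); this gives $g\ge-\lambda$ only on the truncated set $x\in[0,\bar{x}]$, $z\in[0,\bar{z}]$, i.e.\ only for $\alpha>0$ and only if one reads the cap of Assumption \ref{ass1}(a) as built into $\Pi_i'$. The lemma, however, is invoked where the action set is still unbounded: the Bellman operator \eqref{bellman-operator} takes $\inf_{z\ge0}$ (the restriction to $[0,\bar z]$ is introduced only later), Lemma \ref{bellman-optimal} does not list Assumption \ref{ass1} among its hypotheses, and the lower-semianalyticity of $J'$ via \cite[Corollary 9.4.1]{bertsekas2004stochastic} needs the per-stage cost bounded below over all admissible actions. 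Your bound visibly fails there: $(x+z+\mathbb{E}[Y])\bigl(p(0)-p_{\text{opt}}\bigr)\to-\infty$ as $z\to\infty$ whenever $p(0)<p_{\text{opt}}$. For the same reason the parenthetical treatment of $\alpha=0$ (``nothing to prove beyond step one'') is incorrect: step one alone gives no uniform bound in $z$, and Assumption \ref{ass1} imposes no cap in that case.

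The missing idea is the paper's convexity argument, which needs no boundedness of $x$ or $z$: setting $\tilde z=x+z$, the function $\tilde g(\delta,\tilde z)$ is convex in $\tilde z$ (it is an expectation of integrals of a nondecreasing function), its one-sided derivatives equal $\lim_{\tilde z'\to\tilde z^{\pm}}\mathbb{E}[p(\delta+\tilde z'+Y)]-p_{\text{opt}}$, so the infimum over $\tilde z\ge0$ is attained at a finite threshold $\tilde z^*\le\inf\{z\ge0:p(z)\ge p_{\text{opt}}\}$ that does not depend on $(\delta,x)$; combining this with the monotonicity $g(\delta,x,z)\ge g(0,x,z)$ yields a single constant $-\lambda$ valid for all $\delta\ge0$, $x\ge0$, $z\ge0$. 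Your final step, feeding the pointwise bound into the geometric series \eqref{discount} to get $J_\pi\ge-\lambda/(1-\alpha)$, coincides with the paper's and is fine, as are your observations that $p_{\text{opt}}\ge p(0)$ and $p_{\text{opt}}<\infty$. To repair the proposal, either adopt the convexity argument, or explicitly weaken the lemma to the capped action set and verify that every downstream use (in particular the semianalyticity claim and Lemma \ref{bellman-optimal}) only requires the bound there--which is not the case as the paper is written.
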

\ifreport
\begin{proof}
See Appendix \ref{lemma_boundedbelowapp}. 
\end{proof}
\else
See Appendix E in our supplementary material.
\fi

Using Lemma \ref{lemma_boundedbelow}
\ifreport
and Appendix \ref{discountapp}, 
\else
and Appendix F in our supplementary material,
\fi
$J_{\pi,\gamma}(\delta,x)$ defined in \eqref{value-function} also equals to a discounted sum with discount factor $\alpha$: 
\begin{align}
 J_{\pi,\gamma}(\delta,x)
= & \sum_{j=1}^{\infty} \alpha^{j-1} \mathbb{E} \left[  g_{\gamma}(\Delta_j,X_j,Z_{j})   \ \Big{|} \  \Delta_1 = \delta, X_1 = x \right].   \label{discount}
\end{align}
Note that \eqref{discount} is motivated by \cite[Chapter 5]{bertsekas1995dynamic1}, illustrating that the discounted problem is equivalent to a special case of shortest path problem.

Recall that uncountable infimum of Borel measurable functions is not necessary Borel measurable. Problem \eqref{per-epoch} has an uncountable state space. Thus, the optimal value function $J_{\gamma}(\delta,x)$ defined in \eqref{optimal-value-function} \emph{may not be Borel measurable}\footnote{see \cite{bertsekas1995dynamic2,bertsekas2004stochastic} for counterexamples. In discrete-time system where the system time is slotted, we do not have this challenge.}, despite that $J_{\pi,\gamma}(\delta,x)$ is Borel measurable for all $\pi\in\Pi_i$. Then, some well known theories may not satisfy, such as the optimality of the Bellman equation among $\Pi_i$. One of the methods to overcome this challenge is to enlarge the policy spaces.    
We define a collection of policies $\Pi'_i$ such that the stochastic kernel $Z_{j}(dZ_{j} | \delta_1, x_1, z_1,\ldots, \delta_j,x_j )$ is universally measurable \cite{bertsekas2004stochastic}. Note that every Borel measurable stochastic kernel is a universally measurable stochastic kernel, so we have $\Pi_i\subset \Pi'_i$. 

Note that if $\pi\in \Pi'_i$, we also denote $J_{\pi,\gamma}(\delta,x)$ as the discounted cost of $\pi$ given in \eqref{discount}.
For all given age state $\delta$ and delay $x$, we define  
\begin{equation}\label{universe}
J'_{\gamma}(\delta,x) = \inf_{\pi\in \Pi'_i} J_{\pi,\gamma} (\delta,x).
\end{equation} 
It is easy to see that $J'_{\gamma}(\delta,x)\le J_{\gamma}(\delta,x)$. In this subsection, we will finally show that $J'_{\gamma}(\delta,x)= J_{\gamma}(\delta,x)$.

By Lemma \ref{lemma_boundedbelow}, it is easy to show that $J_{\pi,\gamma}\ge -\eta/(1-\alpha)$ for all $\pi\in \Pi'$. Using $J_{\pi,\gamma}\ge -\eta/(1-\alpha)$ and \cite[Corollary 9.4.1]{bertsekas2004stochastic}, $J'(\delta,x)$ is lower semianalytic \cite{bertsekas2004stochastic}. Note that any real-valued Borel measurable function is lower semianalytic. 
This allows us to consider the Bellman operator based on a general lower semianalytic function $u(\delta,x)$. For any deterministic and stationary policy $\pi\in \Pi_i$ with Borel measurable decisions $\pi(\delta,x)$,  
we define an operator $T_{\pi,\gamma}$ on a function $u$: 
\begin{align}
\nonumber & T_{\pi,\gamma} u(\delta,x) \\ = & g_{\gamma}(\delta,x,\pi(\delta,x)) +  \alpha \mathbb{E}_{Y,X} \left[  u(\delta+x+\pi(\delta,x)+Y,X) \right],
\end{align} 
where $Y$ and $X$ have the same distribution as the i.i.d. forward delay $Y_j$'s and backward delay $X_j$'s, respectively. 
We also define the Bellman operator $T_{\gamma}$ on the function $u$:      
\begin{equation}\label{bellman-operator}
T_{\gamma} u(\delta,x) = \inf_{z\in [0,\bar{z}]} g(\delta,x,z) + \alpha \mathbb{E}_{Y,X} \left[  u(\delta+x+z+Y,X) \right].
\end{equation}
As is described in Assumption \ref{ass1}, the bound $\bar{z}$ is taken sufficiently large. 
Note that if the function $u(\delta,x)$ is Borel measurable, $T_{\gamma} u(\delta,x)$ is not necessary Borel measurable in the sense that uncountable infimum of Borel measurable functions is not necessary Borel measurable. However, if we extend $u(\delta,x)$ to be lower semianalytic, then $T_{\gamma} u(\delta,x)$ is also lower semianalytic \cite[Proposition 7.47]{bertsekas2004stochastic}, i.e., $T_{\gamma}$ is well-defined under lower semianalytic functions. Note that the expectation on a lower semianalytic function has the same definition with the expectation on a Borel measurable function. In all, we have
\begin{lemma}\label{T-welldefine}
If $u(\delta,x)$ is lower semianalytic, then $T_{\pi,\gamma}u(\delta,x)$ and $T_{\gamma}u(\delta,x)$ are both lower semianalytic. 
\end{lemma}
\begin{proof}
\ifreport
See Appendix \ref{T-welldefineapp}.
\else
See Appendix G in our supplementary material.
\fi
\end{proof}

We denote $u_1 = u_2$ if $u_1(\delta,x) = u_2(\delta,x)$ for all $\delta,x \in [0,\infty)$. Using the definition of $T_{\pi,\gamma}$ and $T_{\gamma}$, the discounted problem \eqref{universe} has the following properties \cite[Chapter 9.4]{bertsekas2004stochastic}:
\begin{lemma}\label{bellman-optimal}
If $p(\cdot)$ is non-decreasing, the $Y_{j}$'s are i.i.d. with finite mean $\mathbb{E}[Y_{j}]<\infty$, the $X_{j}$'s are i.i.d. with finite mean $\mathbb{E}[X_{j}]<\infty$, the $Y_{j}$'s and the $X_{j}$'s are mutually independent, then the optimal value function $J'_{\gamma}(\delta,x)$ defined in \eqref{universe} satisfies the Bellman equation:
\begin{equation}\label{bellman}
J'_{\gamma} = TJ'_{\gamma},
\end{equation} i.e., the optimal value function $J'_{\gamma}$ is a fixed point of $T_{\gamma}$.
\end{lemma}

To derive an optimal policy, we first provide two stationary and deterministic policies called $\mu_{\text{min},\gamma}$ and $\mu_{\text{max},\gamma}$. Then we will show that both $\mu_{\text{min},\gamma}$ and $\mu_{\text{max},\gamma}$ are the solution to problem \eqref{per-epoch}. 
\begin{definition}\label{definition-mu}
The stationary and deterministic policies $\mu_{\text{min},\gamma}$ and $\mu_{\text{max},\gamma}$ are defined as 
\begin{align}
&\mu_{\text{min},\gamma}(\delta,x) = \max\{  b_{\text{min},\gamma}-\delta-x,0   \}, \label{def-of-mu}  \\ 
&\mu_{\text{max},\gamma}(\delta,x) = \max\{  b_{\text{max},\gamma}-\delta-x,0   \}, \label{def-of-mu2}  \\ 
&  b_{\text{min},\gamma} = \inf_{c} \left\{ c\ge0: \mathbb{E} \left[ p(c+Y')  \right] \ge p_{\text{opt}} + \gamma \right\}, \label{const-b} \\
&   b_{\text{max},\gamma} = \inf_{c} \left\{ c\ge0: \mathbb{E} \left[ p(c+Y')  \right] > p_{\text{opt}} + \gamma \right\},  \\
& Y' \triangleq Y_1+\sum_{j=2}^{M}(X_j+Y_{j}).
\end{align}
A randomized policy $\tilde{\mu}_{\lambda,\gamma} = \{ Z_1,Z_2,... \}$ with $\lambda\in [0,1]$ satisfies
\begin{align}
& Z_{1} = \left\{
\begin{array}{lll}
  \mu_{\text{min},\gamma}(\delta,x) & \text{w.p. $\lambda$,}\\
  \mu_{\text{max},\gamma}(\delta,x)  & \text{w.p. $1-\lambda$. }
\end{array}
\right. \\ 
& Z_{j}=0, \ \ \ j = 2,3,...,M_i.
\end{align}
\end{definition}
\noindent Using the definition of $\Pi_i$, we have $\mu_{\text{min},\gamma}, \mu_{\text{max},\gamma}\in \Pi_i$, and $\tilde{\mu}_{\lambda,\gamma}\in \Pi_i$ for all $\lambda\in [0,1]$\cite[Chapter 7]{bertsekas2004stochastic}.

Upon delivery of the first sample, age of $\mu_{\text{min},\gamma}$ and $\mu_{\text{max},\gamma}$ increase to $\Delta_2 = \delta+x+\mu_{\text{min},\gamma}(\delta,x)+Y_{1}$ and $\delta+x+\mu_{\text{max},\gamma}(\delta,x)+Y_{1}$, which are larger than $ \max \{ \delta+x, b_{\text{min},\gamma} \}$, $ \max \{ \delta+x, b_{\text{max},\gamma} \}$, respectively. 
Then, the waiting time for the second sample is $\mu_{\text{min},\gamma}(\Delta_2,X_2)=0$ and $\mu_{\text{max},\gamma}(\Delta_2,X_2)=0$, respectively. Thus, the waiting time at stage $2$,... is $0$ under $\mu_{\text{min},\gamma}$ and $\mu_{\text{max},\gamma}$. Therefore, we have $\mu_{\text{min},\gamma} = \tilde{\mu}_{0,\gamma}$ and $\mu_{\text{max},\gamma} = \tilde{\mu}_{1,\gamma}$.
Note that when we do not consider sampling rate constraint, then $\gamma=0$, and the policy $\mu_{\text{min},\gamma}$ is equivalent to \eqref{thm1-beta} and \eqref{thm1-beta2} in Theorem \ref{theorem1}.
It remains to show that $\mu_{\text{min},\gamma}$ and $\mu_{\text{max},\gamma}$ are indeed optimal to problem \eqref{per-epoch}. 

Recall that we denote $J_{\pi,\gamma} (\delta,x)$ to be the value function with initial state $\delta,x$ under a policy $\pi$. 
Then, we have the following key result:

\begin{lemma}\label{prop-optimal}
If $p(\cdot)$ is non-decreasing, the $Y_{j}$'s are i.i.d. with finite mean $\mathbb{E}[Y_{j}]<\infty$, the $X_{j}$'s are i.i.d. with finite mean $\mathbb{E}[X_{j}]<\infty$, the $Y_{j}$'s and the $X_{j}$'s are mutually independent, then  
the value functions $J_{\mu_{\text{min},\gamma} }(\delta,x)$ and $J_{\mu_{\text{max},\gamma} }(\delta,x)$ 
satisfy
\begin{equation}
J_{\mu_{\text{min},\gamma} } = T_{\gamma} J_{\mu_{\text{min},\gamma} } = J_{\mu_{\text{max},\gamma} } = T_{\gamma} J_{\mu_{\text{max},\gamma} }.
\end{equation}
Moreover, for any $\lambda\in [0,1]$, we have $J_{\tilde{\mu}_{\lambda,\gamma}} = J_{\mu_{\text{min},\gamma} } = J_{\mu_{\text{max},\gamma} }$.
\end{lemma}

\begin{proof}
We provide the proof sketch of $J_{\mu_{\text{min},\gamma} } = T_{\gamma} J_{\mu_{\text{min},\gamma} } $ here and replace $\mu_{\text{min},\gamma}$ by $\mu$ for simplicity. We relegate the detailed proof
\ifreport
in Appendix \ref{prop-optimalapp}.
\else
in Appendix H in our supplementary material.
\fi

We define the q-function $Q_{\mu}(\delta,x,z)$ as the cost of starting at state $(\delta,x)$, waiting for time $z$ for the first sample, and then following policy $\mu$ for the remaining samples \cite[Section 6]{bertsekas1995dynamic1}\cite[Chapter 3]{sutton2018reinforcement}. It is easy to find that  
\begin{equation}\label{q-function-2}
Q_{\mu}(\delta,x,z) = g_{\gamma}(\delta,x,z) + \alpha \mathbb{E} \left[   J_{\mu}(\delta+x+z+Y,X) \right].
\end{equation}
From \eqref{q-function-2} and \eqref{bellman-operator}, showing $J_\mu = T_{\gamma} J_\mu$ is equivalent to showing that $Q_{\mu}(\delta,x,z) \ge J_{\mu}(\delta,x)$ for all the waiting time $z \ge 0$, age $\delta$ and $x$.
We have stated that $\mu$ has a nice structure: for any initial state, the waiting times of stage $2,3...$ are $0$. Thus, we can derive the closed form expression of $J_{\mu}(\delta,x)$ according to \eqref{value-function-2} (where $Z_{j}=0$ for $j\ge2$). Also, given the definition of $Q_{\mu}(\delta,x,z)$, we can derive the expression of $Q_{\mu}(\delta,x,z)$ with the similar form of \eqref{value-function-2}. 
By comparing $Q_{\mu}(\delta,x,z)$ and $J_{\mu}(\delta,x)$, 
we can finally show that $Q_{\mu}(\delta,x,z) \ge J_{\mu}(\delta,x)$ for all $(\delta,x,z)$.
\end{proof}

Lemma \ref{prop-optimal} tells that $J_{\mu_{\text{min},\gamma}}$ (or equivalently, $J_{\mu_{\text{max},\gamma}}$) is a fixed point of $T_\gamma$. From Lemma \ref{bellman-optimal}, the optimal value function $J'_{\gamma}$ is also a fixed point of $T_\gamma$. To show that $J'_{\gamma} =J_{ \mu_{\text{min},\gamma}}$, it remains to show that the fixed point of $T_\gamma$ is unique. 
If the age penalty $p(\cdot)$ is bounded, $J_{\pi,\gamma}(\delta,x)$ is bounded for any policy $\pi\in \Pi_i'$. Then, according to the contraction mapping theorem, the bellman equation \eqref{bellman} has a unique bounded solution \cite{bertsekas1995dynamic2,ross1996stochastic,sennott1986neww}, i.e., $J_{ \mu_{\text{min},\gamma}}=J'_{\gamma}$. 
Note that there may be unbounded solutions to \eqref{bellman} \cite{sennott1986new,sennott1989average}.  
If $p(\cdot)$ is unbounded, we will utilize Assumption \ref{ass1} to show the uniqueness. 

Let us denote $\Lambda = [0,\infty)\times [0,\bar{x}]$, where $\bar{x}$ is the bound of $X_j$ mentioned in Assumption \ref{ass1}. In Assumption \ref{ass1}, we have defined an increasing function $v(\delta): [0,\infty) \rightarrow \mathbb{R}^+$ (also called the \emph{weighted function}). 
The \emph{weighted sup-norm} $\|u\|$ of a function $u:\Lambda \rightarrow \mathbb{R}$ is defined as
\begin{equation}\label{def-norm}
\| u \|  = \max_{(\delta, x) \in \Lambda} \frac{|u(\delta, x)|}{v(\delta)}.
\end{equation}
Let $B(\Lambda)$ denote the set of all lower semianalytic functions $u:\Lambda \rightarrow \mathbb{R}$ such that $\| u \|<\infty$.
Note that any real-valued Borel measurable function is lower semianalytic. 
From \cite[p. 47]{bertsekas1995dynamic2}, \cite[Lemma 7.30.2]{bertsekas2004stochastic}, $B(\Lambda)$ is complete under the weighted sup-norm.    
\begin{lemma}\label{preserves}
If $p(\cdot)$ is non-decreasing, the $Y_{j}$'s are i.i.d. with finite mean $\mathbb{E}[Y_{j}]<\infty$, the $X_{j}$'s are i.i.d. with finite mean $\mathbb{E}[X_{j}]<\infty$, the $Y_{j}$'s and the $X_{j}$'s are mutually independent, and Assumption \ref{ass1} holds, then for all $\pi\in \Pi_i$, $J_{\pi,\gamma} \in B(\Lambda)$. 
\end{lemma}
\ifreport
\begin{proof}
See Appendix \ref{preservesapp}.
\end{proof}
\else
\begin{proof}
See Appendix I in our supplementary material.
\end{proof}
\fi

Then, the following result shows the uniqueness of the Bellman equation $T_\gamma u=u$. 
\begin{lemma}\label{contraction-unique}
If $p(\cdot)$ is non-decreasing, the $Y_{j}$'s are i.i.d. with finite mean $\mathbb{E}[Y_{j}]<\infty$, the $X_{j}$'s are i.i.d. with finite mean $\mathbb{E}[X_{j}]<\infty$, the $Y_{j}$'s and the $X_{j}$'s are mutually independent, and Assumption \ref{ass1} holds, the following conditions hold:

(a) For any lower semianalytic function $u:\Lambda\to \mathbb{R}$, if $u\in B(\Lambda)$, then $T_{\pi,\gamma} u \in B(\Lambda)$ for all deterministic and stationary policy $\pi\in \Pi_i$, and $T_\gamma u\in B(\Lambda)$.

(b) The Bellman operator $T_\gamma$ has an $m$-stage contraction mapping with modulus $\rho$, i.e., for all $u_1,u_2\in B(\Lambda)$,  
\begin{equation}
\| T^m_\gamma u_1 - T^m_\gamma u_2 \| \le \rho \| u_1-u_2 \|,
\end{equation}
where constants $\rho\in(0,1)$ and $m$ are mentioned in Assumption \ref{ass1}, and the weighted sup-norm $\| \cdot \|$ is defined in \eqref{def-norm}. 

(c) There exists a unique function $u\in B(\Lambda)$ such that $T_\gamma u = u$. 
\end{lemma}
\ifreport
\begin{proof}
See Appendix \ref{contraction-uniqueapp}. 
\end{proof}
\else
\begin{proof}
See Appendix J in our supplementary material. 
\end{proof}
\fi

 



From Lemma \ref{preserves}, $J_{\mu_{\text{min},\gamma}} \in B(\Lambda)$. From Lemma \ref{contraction-unique}(c), Lemma~\ref{prop-optimal} and $J_{\mu_{\text{min},\gamma}} \in B(\Lambda)$, $J_{\mu_{\text{min},\gamma}}$ (or equivalently, $J_{\mu_{\text{max},\gamma}}$) is the unique solution to $T_\gamma u = u$. From Lemma \ref{bellman-optimal}, $J_{\mu_{\text{min},\gamma}} = J'_{\gamma}$. Since $\mu_{\text{min},\gamma},\mu_{\text{max},\gamma} \in \Pi_i$ and $\Pi_i\subset \Pi'_i$,
 $\mu_{\text{min},\gamma}$ and $\mu_{\text{max},\gamma}$ are the optimal policies in $\Pi_i$. Note that $\mu_{\text{min},\gamma}= \tilde{\mu}_{0,\gamma}$ and $\mu_{\text{max},\gamma}= \tilde{\mu}_{1,\gamma}$. Using Lemma \ref{prop-optimal}, we immediately get the final result:

\begin{lemma}\label{prime-lemma}
A collection of optimal policies to problem \eqref{per-epoch} is $\{ \tilde{\mu}_{\lambda,\gamma}: \lambda\in [0,1] \}$ described in Definition \ref{definition-mu}.
\end{lemma}


\subsection{Optimal Solution to \eqref{hc} When $c=p_{\text{opt}}$}\label{proof-zeroduality}
Section \ref{per-epoc} provides the optimal solution to \eqref{per-epoch} given the initial states $Y_{i-1,M_{i-1}}=\delta$ and $X_{i,1}=x$. 
Using Lemma \ref{prime-lemma} and strong duality, we have the following result.

\begin{theorem}\label{theorem3}
If $p(\cdot)$ is non-decreasing, the $Y_{i,j}$'s are i.i.d. with finite mean $\mathbb{E}[Y_{i,j}]<\infty$, the $X_{i,j}$'s are i.i.d. with finite mean $\mathbb{E}[X_{i,j}]<\infty$, the $Y_{i,j}$'s and the $X_{i,j}$'s are mutually independent, and Assumption~\ref{ass1} holds, then $\mu_{\text{min},0}$ described in Definition \ref{definition-mu} is an optimal solution to \eqref{hc} with $c=p_{\text{opt}}$, if the following condition holds:
\begin{equation}
\mathbb{E} \left[  X_{i,1} +\mu_{\text{min},0}(Y_{i-1,M_{i-1}},X_{i,1})+Y' \right] >\frac{1}{f_{\text{max}}(1-\alpha)}, \label{rate-useless}
\end{equation} 
where ${Y'} = Y_{i,1}+ \sum_{j=2}^{M_i} (X_{i,j}+Y_{i,j})$.
Otherwise, $\tilde{\mu}_{\lambda,\gamma}$ is an optimal solution to \eqref{hc} with $c=p_{\text{opt}}$, where $\gamma$ is determined by 
\begin{align} 
\nonumber & \mathbb{E} \left[ X_{i,1}+\mu_{\text{min},\gamma}(Y_{i-1,M_{i-1}},X_{i,1})+Y'   \right] \le \frac{1}{f_{\text{max}}(1-\alpha)}\\  \le & \mathbb{E} \left[ X_{i,1}+\mu_{\text{max},\gamma}(Y_{i-1,M_{i-1}},X_{i,1})+Y'   \right], 
\end{align}
and the probability $\lambda$ is given by 
\begin{equation}
\lambda = \frac{\mathbb{E} \left[ X_{i,1}+\mu_{\text{max},\gamma}(Y_{i-1,M_{i-1}},X_{i,1})+Y'   \right] -  \frac{1}{f_{\text{max}}(1-\alpha)} }{\mathbb{E} \left[ \mu_{\text{max},\gamma}(Y_{i-1,M_{i-1}},X_{i,1}) - \mu_{\text{min},\gamma}(Y_{i-1,M_{i-1}},X_{i,1}) \right]}.\label{theorem3-prob}
\end{equation}
\end{theorem}
\ifreport
\begin{proof}
See Appendix \ref{zerodualityapp}.
\end{proof}
\else
\begin{proof}
See Appendix K in our supplementary material.
\end{proof}
\fi
By taking $\beta=p_{\text{opt}}+\gamma$,
Theorem $\ref{theorem2}$ is directly shown by Theorem \ref{theorem3}.

In addition, note that Theorem \ref{theorem1} is directly shown by Lemma \ref{prime-lemma}, by taking $\beta = p_{\text{opt}}$ and $\gamma=0$. In other words, $\mu_{\text{min},0}$ is an optimal solution to \eqref{avg} when $f_{\text{max}}=\infty$.

\subsection{Discussion}\label{discussion}
Many existing studies on AoI sampling assume that the transmission channel is error-free, i.e., $M_i=1$ for all $i$, e.g.,  \cite{yates2015lazy,sun2017update,sun2019sampling,tsai2020age,tsai2020unifying,sun2017remote,ornee2019sampling}. Due to the renewal property, their original problems are reduced to a per-sample problem. Similarly, our result is equivalent to the per-epoch problem illustrated in \eqref{per-epoch}. If $M_i=1$, problem \eqref{per-epoch} reduces to a per-sample problem, where there is only one decision $Z_{i,1}$ and is solved using convex optimization. However, when $M_i\ne 1$, problem \eqref{per-epoch} is an MDP that contains multiple samples. This MDP cannot be solved by convex optimization (e.g., \cite{yates2015lazy,sun2017update,sun2019sampling,tsai2020age}) or optimal stopping rules (e.g., \cite{tsai2020unifying,sun2017remote,ornee2019sampling}).

Therefore, one of the technical contributions in this paper is to accurately solve the MDP in \eqref{per-epoch}. We summarize the high-level idea of solving \eqref{per-epoch}: First, in Lemma \ref{bellman-optimal}, among the extended policy space $\Pi'_i$ with universally measurable stochastic kernel \cite[Chapter 7]{bertsekas2004stochastic}, the optimal policy  satisfies the Bellman equation \eqref{bellman}. Then, in Lemma \ref{prop-optimal}, we provide the exact value function that is the solution to the Bellman Equation. Finally, under Assumption \ref{ass1} and Lemma \ref{contraction-unique}, the uniqueness of the Bellman equation is guaranteed.   

In addition, although we focus on continuous-time systems in this paper, our results can be easily reduced to the discrete-time systems by removing the content of measure theory.

\begin{figure}[t]
\includegraphics[scale=.36]{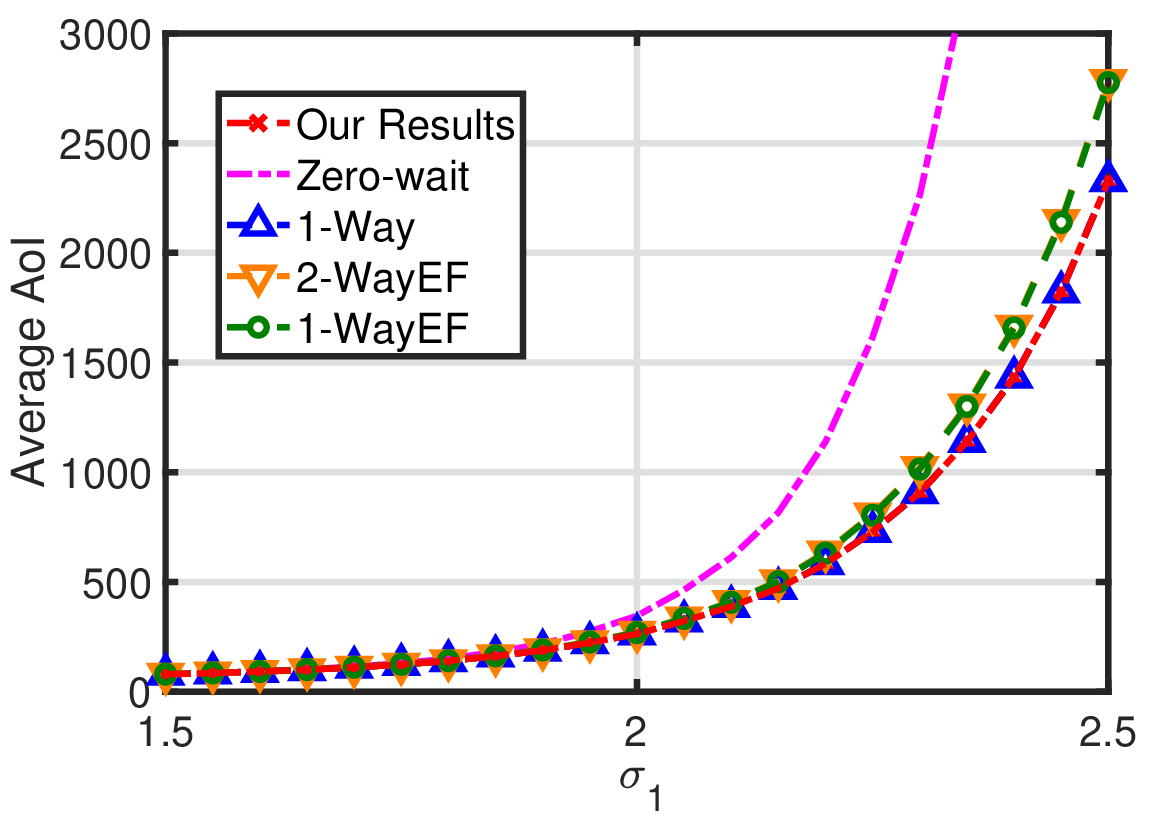}
\centering
\captionsetup{justification=justified}
\caption{Average AoI versus the parameter $\sigma_1$ of the forward channel, where $\sigma_2=1.5$ and $\alpha=0.8$.}
 \label{linearsigma1}
\end{figure}
\begin{figure}[t]
\includegraphics[scale=.37]{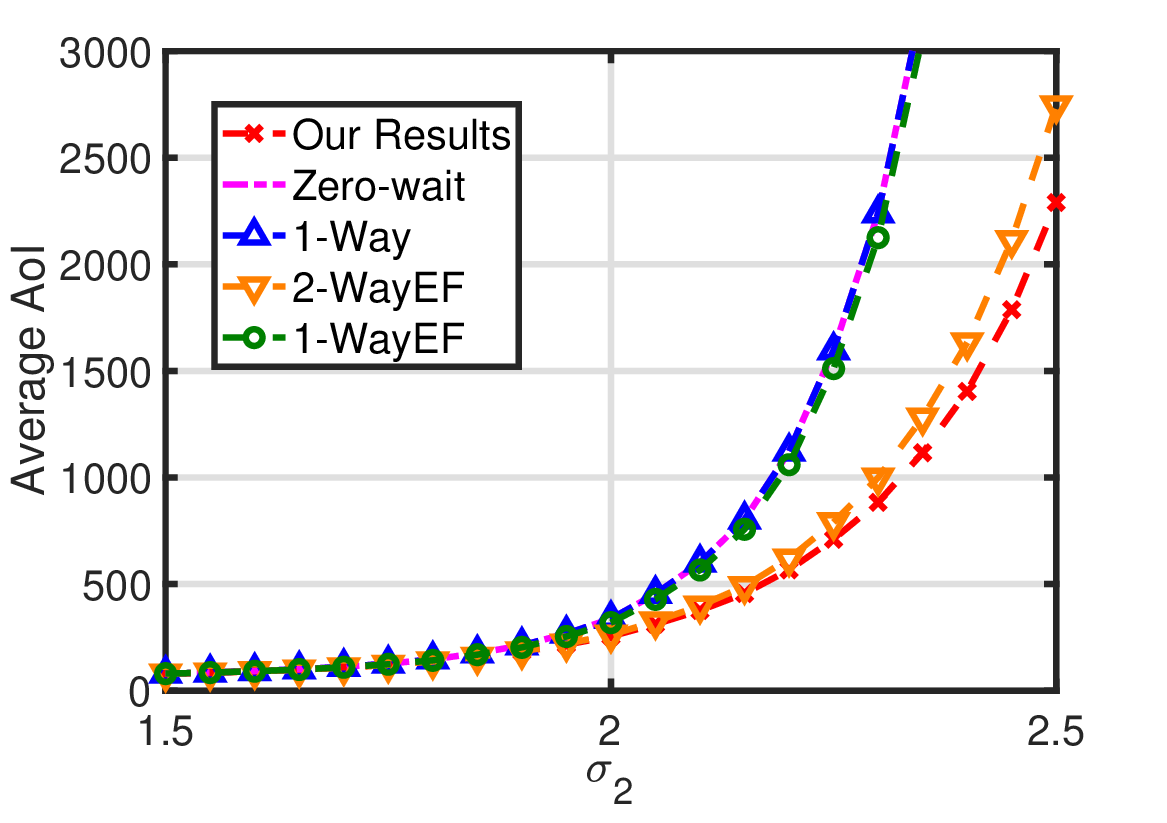}
\centering
\captionsetup{justification=justified}
\caption{Average AoI versus the parameter $\sigma_2$ of the backward channel, where $\sigma_1=1.5$ and $\alpha=0.8$.}
 \label{linearsigma2}
\end{figure}
\begin{figure}[t]
\includegraphics[scale=.4]{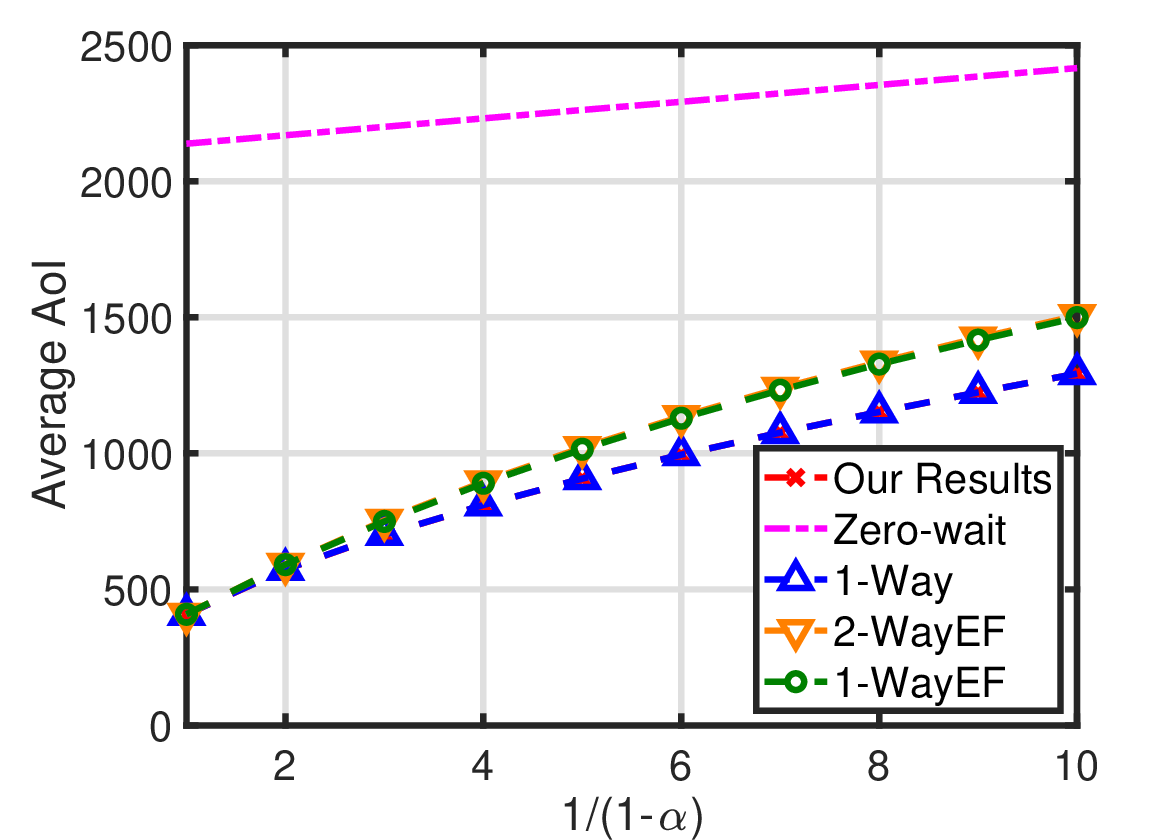}
\centering
\captionsetup{justification=justified}
\caption{Average AoI versus $1/(1-\alpha)$, where $\sigma_1 = 2.3$ and $\sigma_2 = 1.5$.}
 \label{linearp1}
\end{figure}
\begin{figure}[t]
\includegraphics[scale=.35]{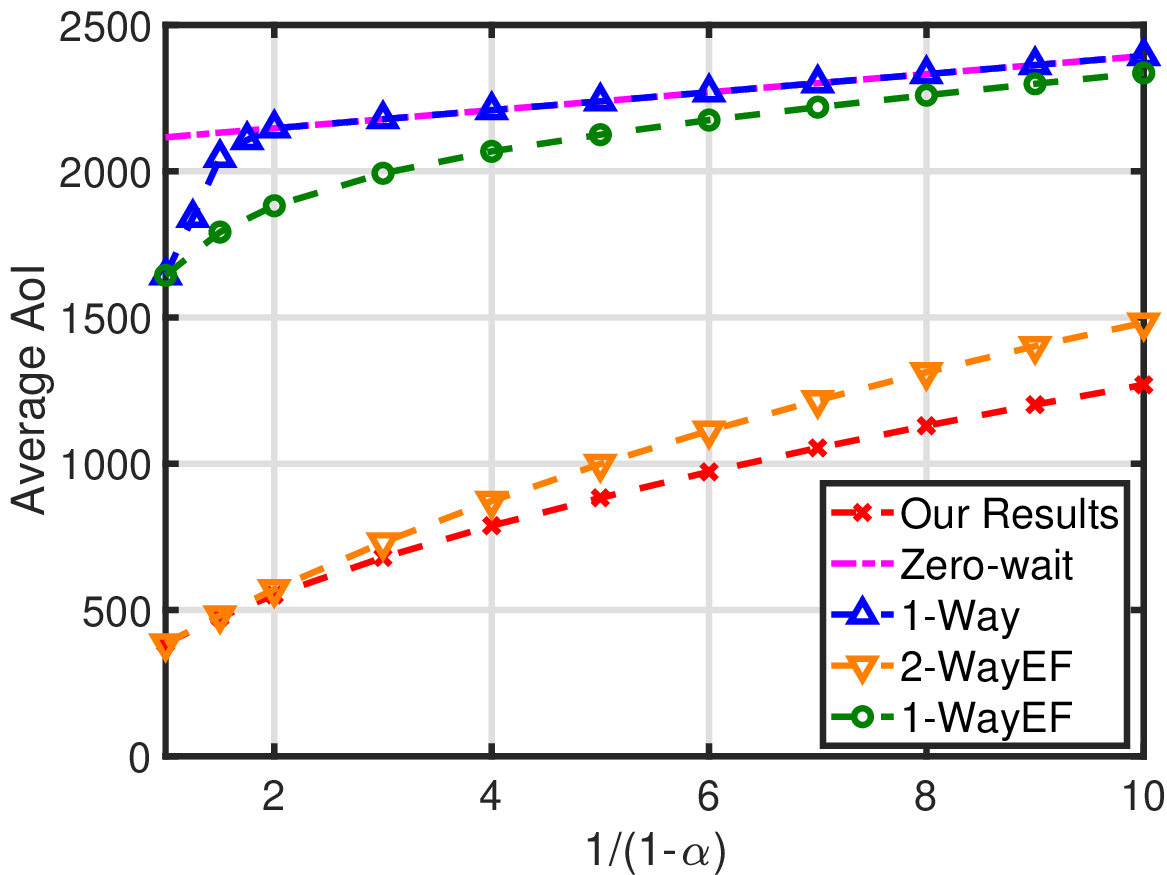}
\centering
\captionsetup{justification=justified}
\caption{Average AoI versus $1/(1-\alpha)$, where $\sigma_1=1.5$ and $\sigma_2=2.3$.}
 \label{linearp2}
\end{figure}

\section{Numerical Results}\label{numerical}

In this section, we compare our optimal sampling policy with the following sampling policies:

$1$. Zero-wait: Let $Z_{i,j}=0$, i.e., the source transmits a sample once it receives the feedback.   

$2$.  One-way ($1$-way): It falsely assumes that  the backward delay $X_{i,j}=0$ despite that $X_{i,j}$ may not be zero. 

$3$. Two-way Error-free ($2$-wayEF) \cite{tsai2020age}: It assumes that the forward channel's probability of failure $\alpha=0$ despite that $\alpha$ may not be zero. 

$4$. One-way Error-free ($1$-wayEF) \cite{sun2019sampling}: It assumes that $X_{i,j}=0$ and $\alpha=0$.

In this section, we consider linear age penalty $p(\delta)=2\delta$ and lognormal distributions on both forward and backward delay with scale parameters $\sigma_1, \sigma_2$, respectively. Note that the lognormal random variable with scale parameter $\sigma$ is expressed as $e^{\sigma R}$, where $R$ is the standard normal random variable.  The numerical results below show that our proposed policy always achieves the lowest average age.

Fig. \ref{linearsigma1} and Fig. \ref{linearsigma2}  illustrate the relationship between age and $\sigma_1,\sigma_2$, respectively. In Fig. \ref{linearsigma1}, we plot the evolution of average age in $\sigma_1$ given that $\sigma_2 = 1.5$ and $\alpha=0.8$. As $\sigma_1$ increases, the lognormal distribution of the forward channel becomes more heavy tailed. We observe that Zero-wait policy evolves much quicker than other policies in $\sigma_1$. In addition, $2$-wayEF and $1$-wayEF policies grow faster than the optimal policy in $\sigma_1$. In Fig. \ref{linearsigma2}, we fix $\sigma_1=1.5$ and plot the average age of the listed policies in $\sigma_2$. Unlike Fig. \ref{linearsigma1}, $1$-way and $1$-wayEF policies perform poorly since they fail to take highly random backward delay into account.

Fig. \ref{linearp1} and Fig. \ref{linearp2} depict the evolution of average age in $1/(1-\alpha)$, where $(\sigma_1,\sigma_2)=(2.3,1.5)$ and 
 $(\sigma_1,\sigma_2)=(1.5,2.3)$, respectively. Note that $1/(1-\alpha)$ is the average number of samples attempted for a successful transmission. In Fig. \ref{linearp1} and Fig. \ref{linearp2}, when $1/(1-\alpha)$ increases,
the gap between $2$-wayEF policy and our optimal policy increases. In Fig. \ref{linearp2}, since $\sigma_2>\sigma_1$, the tail of backward delay is heavier than that of forward delay. Thus, $1$-way and $1$-wayEF, which neglect the knowledge of backward delay, fail to improve the age performance. 
 
In summary, when either one of the channels is highly random, (i) Zero-wait policy is far from optimal, (ii) the age performance of $1$-wayEF or $2$-wayEF policy gets worse if the forward channel is more unreliable, (iii) $1$-way and $1$-wayEF polices are far from optimal if the backward channel is highly random.  
\vspace{-0.1cm}
\section{Conclusion}\label{conclusion}
In this paper, we design a sampling policy to optimize data freshness, where the source generates the samples and sends to the remote destination via a fading forward channel, and the acknowledgements are sent back via a backward channel. 
We overcome the curse of dimensionality that arises from the time-varying forward channel conditions and the randomness of the channel delays in both directions.
We reveal that the optimal sampling policy has a simple threshold based structure, and the optimal threshold is computed efficiently. 
\bibliographystyle{ieeetr}
\bibliography{sample}
\ifreport
\appendices

\section{Proof of \eqref{hat_ot}}\label{hat_otapp}

At time $t \in [D_{i,M_i},D_{i+1,M_{i+1}})$, the estimator has received the following information: (i) the sequence of the source process $\{O_{S_{j,M_j}}\}_{j\le i}$, (ii) the linear observations $\{B_{\tau}\}_{0\le \tau \le t}$, and (iii) the causal information of the channel delays that are prior to $D_{i,M_i}$, denoted as $\mathcal{H}^{remote}_{t}$ for simplicity. Then, the MMSE $\hat{O}_t$ satisfies
\begin{align}
\nonumber \hat{O}_t =  \mathbb{E} & \Big[ O_t \Big{|} \{ B_\tau \}_{0\le \tau < S_{i,M_i} }, \{ B_\tau \}_{S_{i,M_i}\le \tau\le t}, \{O_{S_{j,M_j}}\}_{j\le i}, \\  & \mathcal{H}^{remote}_t, D_{i,M_i}, D_{i+1,M_{i+1}} \Big]. 
\end{align}
Using the strong Markov property of $O_t$ and the assumption that channel delays and the processes $W_t$, $V_t$ are independent with $O_t$, we have 
\begin{align}
 \hat{O}_t =  \mathbb{E}  \Big[ O_t \Big{|} \{ B_\tau \}_{S_{i,M_i}\le \tau \le t}, O_{S_{i,M_i}}, D_{i,M_i}, D_{i+1,M_{i+1}} \Big]. 
\end{align}
Since in this paper, we assume that the sampling decision is independent of $O_t$, $O_t$ has no correlation with $D_{i,M_i}, D_{i+1,M_{i+1}}$. Therefore,  
\begin{align}
 \hat{O}_t =  \mathbb{E}  \Big[ O_t \Big{|} \{ B_\tau \}_{S_{i,M_i}\le \tau \le t}, O_{S_{i,M_i}} \Big], 
\end{align} which is the same as \eqref{hat_ot}.

\section{Proof of Proposition \ref{lim1}}\label{lim1app}

For any matrix $\bm{N}$, we denote $[\bm{N}]_{i,j}$ as the $i$th row and $j$th column element of $\bm{N}$. Similarly, for any vector $X$, $[X]_j$ is the $j$th element of $X$. We denote $\bm{N} \ge 0$ if $\bm{N}$ is positive semidefinite. For a matrix function $\bm{N}_t$, we denote $d\bm{N}_t/dt$ as the matrix that takes derivation on $t$ in each element of $\bm{N}_t$. Also, we say $\bm{N}_t$ is non-decreasing in $t$ if for any $a<b$, $\bm{N}_b - \bm{N}_a \ge 0$. Note that for any two matrices $\bm{N}_a$ and $\bm{N}_b$ in $\mathbb{R}^{n\times n}$, if $\bm{N}_b - \bm{N}_a \ge 0$, then $tr(\bm{N}_b)\ge tr(\bm{N}_a)$.

According to \textcolor{black}{\cite[Proposition VII.C.2]{poor2013introduction}}, $\bm{N}_t$ for $t
\in[D_{i,M_i},D_{i+1,M_{i+1}})$ satisfies the following Riccati differential equation:
\begin{align}
\frac{d\bm{N}_t}{dt}=\bm{\Theta N}_t + \bm{N}_t \bm{\Theta}^T +\bm{ \Sigma\Sigma}^T - \bm{N}_t \bm{H}^T \bm{R}^{-1} \bm{H} \bm{N}_t. \label{rec_eqnmulti}
\end{align}
Note that $\bm{N}_{S_{i,M_i}}=\bm{0}_{n\times n}$. Therefore, $\bm{N}_t$ is computed completely from time $S_{i,M_i}$ to $t$, provided some constant parameter matrices. Note that the age is defined as $\Delta_t = t - S_{i,M_i}$, so $\bm{N}_t$ along with $tr(\bm{N}_t)$ are a function of age $\Delta_t$.  

It remains to show that $tr(\bm{N}_t)$ is non-decreasing in $\Delta_t$. Since $\bm{N}_{S_{i,M_i}}=\bm{0}_{n\times n}$, by \eqref{rec_eqnmulti}, it is easy to see that $d\bm{N}_t/dt\ge 0$ when $t = S_{i,M_i}$. 
By using \cite[theorem3]{poubelle1988fake}, $d\bm{N}_t/dt\ge 0$ for $t\ge S_{i,M_i}$, and $\bm{N}_t$ is non-decreasing in $t$ for $t\ge S_{i,M_i}$. Thus, we conclude that the MMSE $tr(\bm{N}_t)$ is a \emph{non-decreasing} function of the age $\Delta_t$. 

\section{Proof of Proposition \ref{lemma-lim2}}\label{lemma-lim2app}

In one-dimensional case, the MMSE is equal to $n_t$ and the differential equation \eqref{rec_eqnmulti} reduces to
\begin{align}
\frac{dn_t}{dt}=-2\theta n_t+\sigma^2-\frac{h^2}{r}n_t^2.\label{rec_eqn}
\end{align}
Define $\mu_t = n_t - \bar{n}$, then $\mu_t$ satisfies 
\begin{align}
\frac{d\mu_t}{dt}=-(2\theta +2\frac{h^2}{r}\bar{n}) \mu_t-\frac{h^2}{r}\mu_t^2.\label{rec_eqn3}
\end{align}
The differential Equation \eqref{rec_eqn3} is a Bernoulli equation and the closed-from solution to \eqref{Lem1eq} can be derived accordingly, with the initial condition $n_{S_{i,M_i}}=0$, for $t\in[D_{i,M_i},D_{i+1,M_{i+1}})$. Therefore, in one-dimensional case, we can further solve the MMSE $n_t$ in closed-form, which is proved below. 
 


For simplicity, let us define $A=-\frac{h^2}{r}$, $B=-2\theta$, and $C=\sigma^2$. Note that we only use the above definitions of $A,B,C$ in this subsection (i.e., Appendix \ref{lim1app}). Hence, \eqref{rec_eqn} is equivalent to
\begin{align}
\frac{dn_t}{dt}=An_t^2+B n_t+C.\label{rec_eq12}
\end{align}
We define a function $\mu_t$ such that
\begin{align}
n_t=\bar{n}+\mu_t,\label{A_eq13}
\end{align}
where $\mu_t$ is a function of time $t$. Since $\bar{n}$ is a constant, we have
$dn_t/dt=d\mu_t/dt$.
We get
\begin{equation}
\frac{d\mu_t}{dt}=A\bar{n}^2+B\bar{n}+C+2A\bar{n}\mu_t+A\mu_t^2+B\mu_t.
\end{equation}
Since $\bar{n}$ is the steady state solution of \eqref{rec_eq12}, we have $A\bar{n}^2+B\bar{n}+C=0$. Hence, we have 
\begin{equation}\label{Ut+funct}
\frac{d\mu_t}{dt}=(2A\bar{n}+B)\mu_t+A\mu_t^2.
\end{equation}
Define $\eta_t=1/\mu_t$. This implies that $d\mu_t/dt=-(1/\eta_t^2) d\eta_t/dt$. Substitute by this in \eqref{Ut+funct}, we get
\begin{align}
& \frac{d\eta_t}{dt} = -(2A\bar{n}+B)\eta_t-A,  \\
\Longleftrightarrow & \frac{d (\eta_t+\frac{A}{2A\bar{n}+B})}{dt} = -(2A\bar{n}+B)(\eta_t+\frac{A}{2A\bar{n}+B}).\label{de_eq17}
\end{align}
The general solution to \eqref{de_eq17} is 
\begin{align}
\eta_t=\frac{-A}{2A\bar{n}+B}+Ke^{-(2A\bar{n}+B)(t-S_{i,M_i})},
\end{align}
with a constant $K$. Substitute this back into $\mu_t$, and then in \eqref{A_eq13}, we get
\begin{align}
n_t=\bar{n}+\frac{1}{\frac{-A}{2A\bar{n}+B}+Ke^{-(2A\bar{n}+B)(t-S_{i,M_i})}}.\label{Ap_eq21}
\end{align} 
To determine $K$, we substitute by our initial condition $n_{S_i}=0$. This implies 
\begin{align}
K=\frac{A}{2A\bar{n}+B}-\frac{1}{\bar{n}} = l-\frac{1}{\bar{n}}, \label{remote_10}
\end{align} where $l$ is defined in \eqref{remote_9}
Substitute by this back into \eqref{Ap_eq21}, observing that $\Delta_t = t-S_{i,M_i}$ and replacing $A, B,$ and $C$ by their relative quantities, we get \eqref{Lem1eq}. Note that $1/\bar{n}-l>0$, thus, $n_t$ is a non-decreasing function of the age $\Delta_t$. This completes the proof. 

When $h=0$ (i.e., $A=0$), Riccati differential equation in \eqref{rec_eqn} reduces to
\begin{align}
& \frac{dn_t}{dt}=-2\theta n_t+\sigma^2,\\
\Longleftrightarrow & \frac{d(n_t - \frac{\sigma^2}{2\theta})}{dt}=-2\theta (n_t- \frac{\sigma^2}{2\theta}).
\end{align} Using the same technique from \eqref{de_eq17} -- \eqref{remote_10} with initial condition $n_{S_{i,M_i}}=0$,
we get \eqref{Lim2_eq}. 

\section{Proof of Corollary \ref{cor-ass}}\label{cor-assapp}

The proof is to find out the function $v(\delta)$ such that Assumption \ref{ass1} holds.
Corollary (a) easily holds by taking $v(\delta)=1$. 
For Corollary \ref{cor-ass} (b), note that by H\"older's inequality, if $ \mathbb{E} \left[  Y^{n+1} \right]<\infty$, then $ \mathbb{E} \left[  Y^{i} \right]<\infty$ for all $i \le n+1$ \cite[pp. 189]{resnick2019probability}.  
We choose  
\begin{equation}
    v(\delta) = \left\{
\begin{array}{lll}
  m^{n+1} & \text{if } \delta<\bar{m} ,\\
  \delta^{n+1}  & \text{if } \delta\ge \bar{m},
\end{array}
\right. 
\end{equation} where $\bar{m}$ is a value that satisfies 
\begin{equation}
\mathbb{E} \left[ (1+ \frac{\bar{z}+\bar{x}}{\bar{m}}+\frac{Y}{\bar{m}})^{n+1}  \right] < \frac{\rho}{\alpha},
\end{equation} with $\rho$ satisfying $\alpha<\rho<1$.
In other words, as long as $v(\delta) = \Theta (\delta^{\bar{m}})$ for $\bar{m}\ge n+1$, Assumption \ref{ass1} is satisfied (by taking $m=1$). Note that $ \mathbb{E} \left[  Y^{i} \right]<\infty$ is also the sufficient and necessary condition that the function $G(\delta)$ is well-defined. 
For Corollary \ref{cor-ass} (c), note that by using mean value theorem, 
\begin{equation}
\lim_{\bar{m}\rightarrow \infty}(\bar{m}+\bar{z}+\bar{x}+\bar{y})^b-m^b = 0
\end{equation}
Thus (taking $m=1$), we choose
\begin{equation}
    v(\delta) = \left\{
\begin{array}{lll}
  e^{a \bar{m}^b} & \text{if } \delta<\bar{m} ,\\
   e^{a \delta^b} & \text{if } \delta\ge \bar{m},
\end{array}
\right. 
\end{equation} where $\bar{m}$ is a value that satisfies 
\begin{equation}
e^{a \left( (\bar{m}+\bar{z}+\bar{x}+\bar{y})^b-\bar{m}^b \right)} < \frac{\rho}{\alpha}.
\end{equation}

\section{Proof of Lemma \ref{lemma_boundedbelow}}\label{lemma_boundedbelowapp}
Let $\tilde{z} \triangleq x+z$ and define
\begin{align}
\nonumber \tilde{g}_{\gamma}(\delta,\tilde{z}) \triangleq  \mathbb{E}_Y \left[ 
 \int_{\delta}^{\delta+\tilde{z}+Y} p(t) dt - (p_{\text{opt}} + \gamma ) (\tilde{z}+Y) \right].
\end{align}
It is easy to see that $\tilde{g}_{\gamma}(\delta,\tilde{z})=g_{\gamma}(\delta,x,z)$. Note that $g_{\gamma}(\delta,x,z)$ is increasing in $\delta$. Showing that $g_{\gamma}(\delta,x,z)$ is bounded from below is equivalent to showing that $\inf_{\tilde{z} \ge 0} \tilde{g}_{\gamma}(\delta,\tilde{z})>-\infty$.

The one-sided derivatives of a function $q(w)$ at $w$ is defined as 
\begin{align} 
& \delta^+ q(w) \triangleq  \lim_{\epsilon \rightarrow 0^+} \frac{q(w+\epsilon )-q(w)}{\epsilon}.\\
 & \delta^- q(w) \triangleq  \lim_{\epsilon \rightarrow 0^+} \frac{q(w)-q(w-\epsilon)}{\epsilon}.
\end{align} 

Let us denote $k(\delta,\tilde{z},y)= \int_{\delta}^{\delta+\tilde{z}+y} p(t) dt$. Then, $k(\delta,\tilde{z},y)$ is an integration of a non-decreasing function and is thus convex in $\tilde{z}$.
Thus, $g_{\gamma}(\delta,\tilde{z})$ is also convex in $\tilde{z}$. Since $k \left( \delta, \tilde{z},y \right)$ and $\tilde{g}_{\gamma}(\delta,\tilde{z})$ are both convex in $\tilde{z}$, the one-sided derivatives of $k \left( \delta, \tilde{z},y \right)$ and $\tilde{g}_{\gamma}(\delta,\tilde{z})$ at $\tilde{z}$ exist \cite[p. 709]{bertsekas1997nonlinear}. 
Also, the function $\epsilon \rightarrow [k \left(\delta, \tilde{z}+\epsilon , y \right) - k \left(\delta, \tilde{z}, y \right)] / \epsilon$ is non-decreasing in $\epsilon\in [-\theta,0)$ or $(0,\theta]$ for some $\theta>0$ \cite[Proposition 1.1.2(i)]{butnariu2000totally}. Note that $k \left(\delta, \tilde{z} \pm \theta, Y \right)$ and one-sided derivatives of $k \left(\delta, \tilde{z}, Y \right)$ are both integrable\footnote{In this paper, although we set $p(\delta)$ as a non-decreasing real-valued function in $[0,\infty)$, we allow an exception that $p(0)=-\infty$. If $p(0)=-\infty$, we will assume that there exists a small enough $\eta>0$ such that $P(Y+X<\eta)=0$, i.e., $\delta+x\ge \eta$ with probability $1$. Therefore, $\tilde{z}>0$ when $\delta=0$ and the one-sided derivatives of $k \left( \delta, \tilde{z},Y \right)$ at $\tilde{z}$ are always integrable.}. By using Dominated Convergence Theorem \cite[Theorem 5.3.3]{resnick2019probability}, we have
\begin{align}
& \nonumber \delta^+ \tilde{g}_{\gamma}(\delta,\tilde{z}) \\ = & \lim_{\epsilon \rightarrow 0^+} \frac{1}{\epsilon}  \mathbb{E} \left[k \left(\delta,\tilde{z}+\epsilon, Y \right) - k \left(\delta,\tilde{z},Y \right) \right]  - (p_{\text{opt}}+\gamma)  \\
\nonumber  = &  \mathbb{E} \left[   \lim_{\epsilon \rightarrow 0^+}    \frac{1}{\epsilon} \left(  k \left(\delta,\tilde{z}+\epsilon,Y \right) - k \left(\delta,\tilde{z},Y \right) \right) \right] - (p_{\text{opt}}+\gamma) \\
\nonumber  = &  \mathbb{E} \left[  \lim_{\tilde{z}'\rightarrow \tilde{z}^+}  p \left(  \delta+\tilde{z}'+ Y \right)  \right] - (p_{\text{opt}}+\gamma) \\
 = & \lim_{\tilde{z}'\rightarrow \tilde{z}^+}  \mathbb{E} \left[  p \left(  \delta+\tilde{z}'+Y \right)  \right] - (p_{\text{opt}}+\gamma),  \label{derivative-lemma}  \end{align} 
Similarly, for the other direction, we get 
\begin{align} 
 \delta^- \tilde{g}_{\gamma}(\delta,\tilde{z}) & = \lim_{\tilde{z}'\rightarrow \tilde{z}^-}  \mathbb{E} \left[  p \left(  \delta+\tilde{z}'+ Y \right)  \right] - (p_{\text{opt}}+\gamma).  \label{derivative2-lemma}
\end{align}
Therefore, the solution to $\inf_{\tilde{z} \ge 0} \tilde{g}_{\gamma}(\delta,\tilde{z})$ is 
\begin{align}
\tilde{z}^* = \inf \{ \tilde{z}\ge0 : \mathbb{E} \left[  p \left(  \delta+\tilde{z}+ Y \right)  \right]  \ge p_{\text{opt}}+\gamma \}.
\end{align}
Note that $\tilde{z}^*\le \inf \{ z\ge0 : p \left( z \right)  \ge p_{\text{opt}}+\gamma \}$ and thus is finite and irrelevant to $\delta,x$. 
Also, $g_{\gamma}(\delta,x,z)\ge g_{\gamma}(0,x,z)$. Therefore, $g_{\gamma}(\delta,x,z)$ is bounded from below by a constant $-\eta$. Moreover, \begin{align}J_\pi(\delta,x) \ge \mathbb{E} \left[   \sum_{j=1}^{M} -\eta \right] = -\frac{\eta}{1-\alpha}.  
\end{align} 
 
\section{Proof of \eqref{discount}}\label{discountapp}
We have
\begin{align}
 & J_{\pi,\gamma}(\delta,x) \\  \nonumber = &  \mathbb{E} \left[   \sum_{j=1}^{M_i} g_{\gamma}(\Delta_j,X_j,Z_{j})   \ \Big{|} \  \Delta_1 = \delta, X_1 = x \right]  \\
\nonumber = & \sum_{m = 1}^{\infty}  \mathbb{E} \left[ \sum_{j=1}^{M} g_{\gamma}(\Delta_j,X_j,Z_{j})   \ \Big{|} \  \Delta_1 = \delta,  X_1 = x, M = m \right] \\ \nonumber
& \times \mathbb{P} (M = m) \\
 \nonumber
  \overset{(i)}{=} & \sum_{m = 1}^{\infty}  \mathbb{E} \left[ \sum_{j=1}^{m}  g_{\gamma}(\Delta_j,X_j,Z_{j})   \ \Big{|} \  \Delta_1 = \delta, X_1 = x \right] \\ \nonumber & \times \mathbb{P} (M = m)  \end{align}\begin{align}
  \nonumber
 = & \sum_{m = 1}^{\infty} \sum_{j=1}^{m} \mathbb{E} \left[  g_{\gamma}(\Delta_j,X_j,Z_{j})   \ \Big{|} \  \Delta_1 = \delta, X_1 = x \right] \\ \nonumber & \times \alpha^{m-1}(1-\alpha)  \\
 \nonumber
\overset{(ii)}{=} &  \sum_{j=1}^{\infty}   \sum_{m = j}^{\infty}  \mathbb{E} \left[  g_{\gamma}(\Delta_j,X_j,Z_{j})   \ \Big{|} \  \Delta_1 = \delta, X_1 = x \right]  \\  \nonumber& \times \alpha^{m-1}(1-\alpha) \\
= & \sum_{j=1}^{\infty} \alpha^{j-1} \mathbb{E} \left[  g_{\gamma}(\Delta_j,X_j,Z_{j})   \ \Big{|} \  \Delta_1 = \delta, X_1 = x \right].   
\end{align}
Note that step $(i)$ occurs because the decisions $Z_{j}$ $(j = 1,...m)$ only depend on the causal information $\delta_1, x_1, z_{1}$ ... $\delta_j,x_j$ and thus $g_{\gamma}(\Delta_j,X_j,Z_{j})$ is independent of $M$ given that $M\ge j$. 
Also, step $(ii)$ is due to Lemma \ref{lemma_boundedbelow} and the rearrangement of series in \cite[Chapter 3]{rudin1976principles}.

\section{Proof of Lemma \ref{T-welldefine}}\label{T-welldefineapp} 
Since $\delta+x+\pi(\delta,x)+y$ is Borel measurable and $u$ is lower semianalytic, $u(\delta+x+\pi(\delta,x)+y,x')$ is lower semianalytic in $(\delta,x,y,x')$ \cite[Lemma 7.30]{bertsekas2004stochastic}. Thus, $\mathbb{E}_{Y,X} \left[  u(\delta+\pi(\delta,x)+Y,X) \right]$
 is lower semianalytic in $(\delta,x)$ \cite[Proposition 7.48]{bertsekas2004stochastic}. Since the function $g_{\gamma}(\cdot)$ is Borel-measurable and thus is lower semianalytic, $T_{\pi,\gamma}(\delta,x)$ is thus lower semianalytic. Since the infimum of a lower semianalytic function is still lower semianalytic \cite[Proposition 7.47]{bertsekas2004stochastic}, $T_{\gamma}u(\delta,x)$ is lower semianalytic.

\section{Proof of Lemma \ref{prop-optimal}} \label{prop-optimalapp}

The proofs of $J_{\mu_{\text{min},\gamma} } = T_{\gamma} J_{\mu_{\text{min},\gamma} }$ and $J_{\mu_{\text{max},\gamma} } = T_\gamma J_{\mu_{\text{max},\gamma} }$ are the same. Therefore, we provide the proof of $J_{\mu_{\text{min},\gamma} } = T_\gamma J_{\mu_{\text{min},\gamma} }$. For simplicity, in this subsection, we denote $\mu = \mu_{\text{min},\gamma}$, $T=T_\gamma$, and $b = b_{\text{min},\gamma}$.

We will show that $J_{\mu}$ satisfies $T J_{\mu} = J_{\mu}$. We define the q-function $Q_{\mu}(\delta,x,z)$ as the discounted cost of starting at $\delta$, using $z$ at the first stage and then using $\mu$ for the remaining stages \cite[Section 6]{bertsekas1995dynamic1}\cite[Chapter 3]{sutton2018reinforcement}. It is easy to find that 
\begin{equation}\label{q-function}
Q_{\mu}(\delta,x,z) = g(\delta,x,z) + \alpha \mathbb{E}_{X,Y} \left[  J_{\mu}(\delta+z+Y,X) \right].
\end{equation}
Thus, it is equivalent to show that for all state $\delta,x$, $J_{\mu}(\delta,x) \le Q_{\mu}(\delta,x,z)$ for all $z$. 

From \eqref{discount}, the discounted cost is equal to the stopping cost. According to the definition of $Q_{\mu}(\delta,x,z)$ and \eqref{discount},
we can directly provide the detailed expression of $Q_{\mu}(\delta,x,z)$ with regard to the total cost like \eqref{value-function}:
\begin{lemma} \label{q}
We have 
\begin{equation} 
Q_{\mu}(\delta,x,z) =   \mathbb{E} \left[  \sum_{j=1}^{M} g(\Delta_j,X_j, Z_{j})  \ \Big{|} \ \Delta_1 = \delta, X_1 = x \right], 
\end{equation}
where 
\begin{align}
&  \Delta_{j+1} = X_j + \Delta_{j}+Z_{j}+Y_{j} \ \ \ j = 1,2...M, \\
& Z_{j} =  \left\{
\begin{array}{lll}
  z & \text{if } j=1 ,\\
  \mu(\Delta_j,X_j)  & \text{if } j =2,...M.
\end{array}
\right. \label{lemma10-84}
\end{align}
\end{lemma}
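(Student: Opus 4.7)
The plan is to unroll the one-step recursion \eqref{q-function} by substituting in the stopping-sum representation of $J_\mu$ from \eqref{value-function}, and then verify that the resulting expression coincides term by term with the claimed formula via the Markov property and the geometric structure of $M$. Since $\mu$ has the special property that its waiting times after a failed transmission vanish (cf.\ \eqref{def-of-mu}), the policy $\mu$ applied from a fresh state $(\delta',x')$ produces exactly the deterministic policy $Z_j = \mu(\Delta_j,X_j)$ in the claimed formula. Hence the lemma is essentially a consequence of the definition of $Q_\mu$, and the only real work is to justify the re-indexing across the first sample.

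First, I would restate \eqref{q-function} explicitly as
\begin{align}
Q_\mu(\delta,x,z) = g(\delta,x,z) + \alpha\,\mathbb{E}_{Y,X}\!\left[J_\mu(\delta+x+z+Y, X)\right],
\end{align}
invoking the state transition \eqref{state-evolution}, the transmission-failure probability $\alpha$, and the definition of $Q_\mu$ as ``first-stage action $z$, then $\mu$ afterwards.'' Next, applying the stopping-sum expression \eqref{value-function} (equivalently \eqref{discount}) to the value of $\mu$ itself, I would write
\begin{align}
J_\mu(\delta',x') = \mathbb{E}\!\left[\sum_{j=1}^{\widetilde M} g(\widetilde\Delta_j,\widetilde X_j,\mu(\widetilde\Delta_j,\widetilde X_j)) \,\Big|\, \widetilde\Delta_1=\delta',\widetilde X_1=x'\right],
\end{align}
where $\widetilde M\sim\mathrm{Geom}(1-\alpha)$ is independent of everything else and $(\widetilde\Delta_j,\widetilde X_j)$ evolves according to \eqref{state-evolution}.

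Then I would glue the two pieces together. Setting $\delta'=\delta+x+z+Y_1$ and $x'=X_2$, the Markov property together with the independence of $Y_1$, the $X_j$'s, and the $Y_j$'s identifies the process $(\widetilde\Delta_j,\widetilde X_j)_{j\ge 1}$ with the shifted process $(\Delta_{j+1},X_{j+1})_{j\ge 1}$ of the original chain; in particular $\widetilde\Delta_1=\Delta_2$. Since $M\sim\mathrm{Geom}(1-\alpha)$, one has $\mathbb{P}(M\ge j+1\mid M\ge j)=\alpha$, so that conditionally on $M\ge 2$ the residual $M-1$ is again $\mathrm{Geom}(1-\alpha)$ and matches $\widetilde M$ in distribution. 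Combining, the $\alpha\,\mathbb{E}_{Y,X}[J_\mu(\delta+x+z+Y,X)]$ term unrolls to
\begin{align}
\mathbb{E}\!\left[\sum_{j=2}^{M} g(\Delta_j,X_j,\mu(\Delta_j,X_j)) \,\Big|\, \Delta_1=\delta,X_1=x,Z_1=z\right],
\end{align}
and adding back the first-stage term $g(\delta,x,z)=g(\Delta_1,X_1,Z_1)$ yields exactly the claimed expression with $Z_j$ given by \eqref{lemma10-84}.

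The only delicate point, which I expect to be the main obstacle, is the careful bookkeeping of the geometric stopping time $M$ across the first transition: one must verify that $\mathbb{P}(M\ge j)=\alpha^{j-1}$ factors cleanly so that the stopping-sum representation of $J_\mu$ (which already accounts for the geometric tail of $\widetilde M$) correctly reproduces the tail contribution $\sum_{j\ge 2}$ in the stopping sum for $Q_\mu$. This is handled exactly as in the derivation of \eqref{discount}, by the rearrangement of series together with the fact that the decisions $Z_j$ depend only on the causal information up to stage $j$ and are therefore independent of $\{M\ge j\}$; no additional assumption beyond those already in force is needed.
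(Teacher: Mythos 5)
Your proposal is correct and takes essentially the same route as the paper: the paper presents Lemma \ref{q} as an immediate consequence of the definition of $Q_\mu$ as the cost of the composite policy (wait $z$ at stage one, then follow $\mu$) together with the discounted/stopping-sum equivalence \eqref{discount}, and your unrolling of \eqref{q-function} via the Markov property, the memorylessness of the geometric $M$, and the causal independence of the decisions from $\{M\ge j\}$ is exactly the bookkeeping that equivalence encapsulates. The only (harmless) directional difference is that the paper takes the stopping-sum description of the composite policy as primitive and regards \eqref{q-function-2} as the derived fact, whereas you take the one-step recursion as the starting point and recover the stopping sum.
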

Lemma \ref{q} implies that $Q_{\mu}(\delta,x,z)$ is equal to the stopping cost such that it waits for $z$ at stage $1$ and follows the same decision as $\mu$ for the stage $2,...M$.

For simplicity, we denote 
\begin{equation}
w = z - \mu(\delta,x)
\end{equation} as the waiting time difference. Note that $w$ is a simple function of $z$ with a fixed value $\mu(\delta,x)$. 
Recall that $b$ is the threshold of $\mu$ given in \eqref{const-b}. If the state addition $\delta+x< b$, then $\mu(\delta,x)=b-\delta-x>0$, otherwise $\mu(\delta,x) = 0$. Note that if $\mu(\delta,x)=0$, $w\ge 0$ since the waiting time $z\ge 0$. Thus, there are 3 different cases based on $\delta+x$, constant $b$ and $w$:

Case (a) $\delta+x<b$ and $w \ge 0$, 

Case (b) $\delta+x<b$ and $w<0$,

Case (c) $\delta+x \ge b$ (as stated before, this implies $w \ge 0$). 

\begin{figure}[t]
\includegraphics[scale=.4]{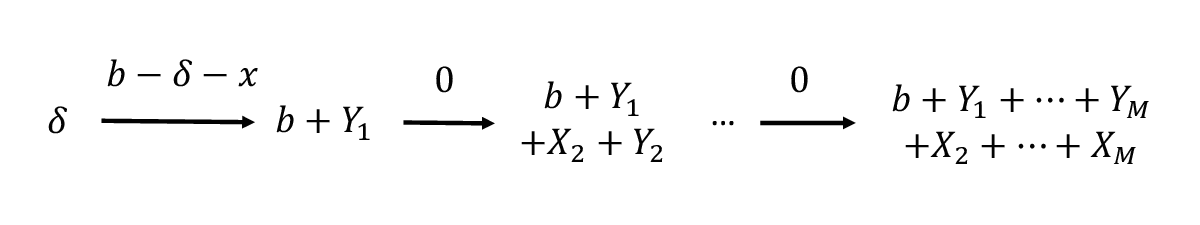}
\centering
\captionsetup{justification=justified}
\caption{Evolution diagram of $\mu$ at $i^{th}$ epoch when $\delta+x<b$.}
 \label{j-case-a}
\end{figure}

\noindent \textbf{Case (a)}: Since $\delta+x<b$, at stage $1$, $\mu(\delta,x)=b-\delta-x$, and for the remaining stages, $\mu$ chooses zero-wait. See Fig. \ref{j-case-a} for the diagram of evolution. According to \eqref{value-function}, 
\begin{align}
\nonumber J_{\mu}(\delta,x) = & \mathbb{E} \Bigg{[}   \int^{b+ Y_1+ \sum_{j=2}^{M} (X_j+Y_{j}) }_\delta  p(t) dt   \\     - &  (p_{\text{opt}} +\gamma)  \bigg ( b-\delta+Y_1+ \sum_{j=2}^{M} (X_j+Y_{j})  \bigg )  \Bigg{]},   \label{value-sum-a}
\end{align} 
where $\mathbb{E} = \mathbb{E}_{M,Y_{1},X_{2}, \ldots, X_{M},Y_{M}}$.

\begin{figure}[t]
\includegraphics[scale=.37]{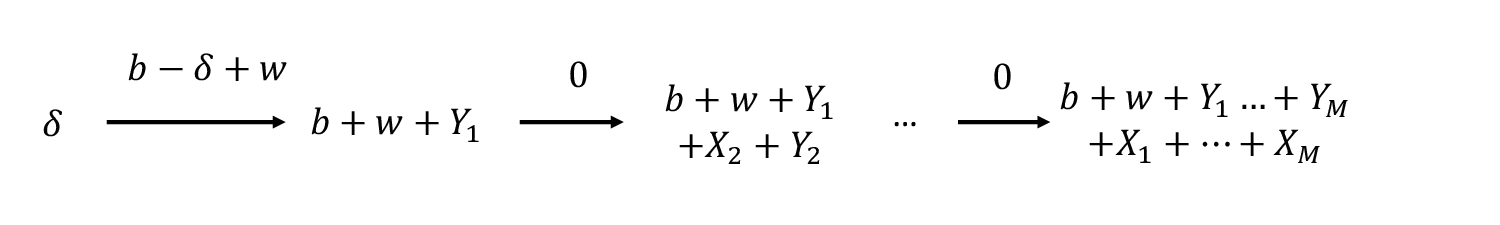}
\centering
\captionsetup{justification=justified}
\caption{Evolution diagram of the policy described in Lemma \ref{q} when $\delta +x <b$ and $w\ge 0$.}
 \label{qq-case-a}
\end{figure}
Since $w\ge 0$ in Case (a), the second age state $\Delta_2 +X_2 = b+w+Y_{1}+X_2 \ge b$, the waiting times at stage $2,3,...$  are thus $0$ (see Fig. \ref{qq-case-a} for the evolution diagram).
 This gives 
 \begin{align}
\nonumber & Q_\mu (\delta,x,z) =  \mathbb{E} \Bigg [   \int^{b+ w+Y_1+ \sum_{j=2}^{M} (X_j+Y_{j}) }_\delta  p(t) dt  \\  & -   (p_{\text{opt}} +\gamma)   \bigg ( b-\delta+w+Y_1+ \sum_{j=2}^{M} (X_j+Y_{j})  \bigg )  \Bigg ]. \label{q-mu-yz-a}
\end{align} 
It is obvious that $Q_\mu (\delta,x,\mu(\delta,x))=J_{\mu}(\delta,x)$.

\begin{figure}[t]
\includegraphics[scale=.4]{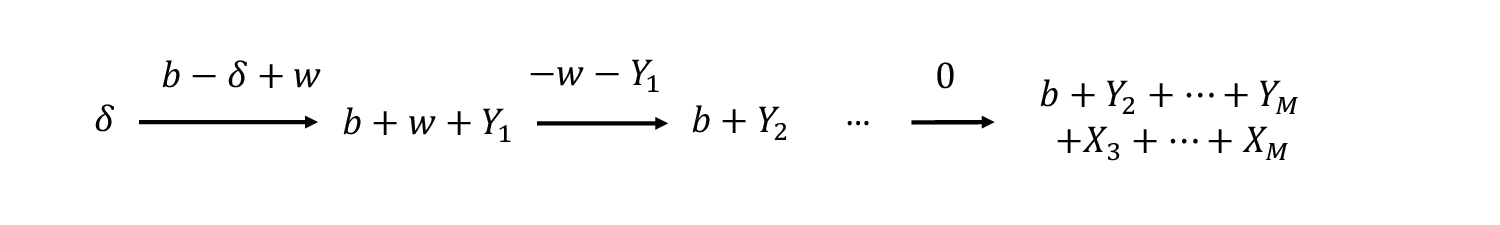}
\centering
\captionsetup{justification=justified}
\caption{Evolution diagram of the policy described in Lemma \ref{q} when $\delta<b$ and $w< 0$ given that $Y_{i,1}<-w$.}
 \label{qq-case-b}
\end{figure}
 
Note that in Case (b) and (c), $Q_{\mu}(\delta,x,z)$ does not satisfy \eqref{q-mu-yz-a}. Thus, for convenience, we rewrite
\begin{align}
\nonumber & q(w) =  \mathbb{E} \Bigg[   \int^{b+ w+Y_1+ \sum_{j=2}^{M} (X_j+Y_{j}) }_\delta  p(t) dt  \\    & -   (p_{\text{opt}} +\gamma)   \bigg( b-\delta+w+Y_1+ \sum_{j=2}^{M} (X_j+Y_{j})  \bigg)  \Bigg ].\label{disturb-delta}\end{align} 
For the ease of description, we denote $Y'$ as $Y_1+ \sum_{j=2}^{M} (X_j+Y_{j})$. 
Then, we rewrite the function that is inside the expectation of $q(w)$ in \eqref{disturb-delta}:
\begin{equation}\label{F}
f \left( w,Y' \right) =   \int^{b+ w+Y' }_\delta  p(t) dt   - (p_{\text{opt}} +\gamma)  \left( b-\delta+w+Y' \right).\end{equation}
It is obvious that $q(0)=J_{\mu}(\delta,x)$. Function $q(w)$ is the cost of the policy that waits $z=w+\mu(\delta,x)$ at stage $1$ and does not wait at stage $2,...M$ where we allow $w \le 0$ in \eqref{disturb-delta}.

By using the same technique in Appendix \ref{lemma_boundedbelowapp}, we have for all $w>-b$ (note that in case (a) and case (b), $b>0$), the one-sided derivatives of $q(w)$ satisfy
\begin{align}
 \delta^+ q(w)  
& = \lim_{x\rightarrow w^+}  \mathbb{E} \left[  p \left(  b+x+Y' \right)  \right] - (p_{\text{opt}} +\gamma), \label{derivative}  \\
 \delta^- q(w) & = \lim_{x\rightarrow w^-}  \mathbb{E} \left[  p \left(  b+x+ Y' \right)  \right] - (p_{\text{opt}} +\gamma).  \label{derivative2}
\end{align}
Since $\delta+x<b$, $b>0$. From the definition of threshold $b$ in \eqref{const-b} and $b>0$, we have
\begin{align}
& \lim_{w'\rightarrow 0^+}  \mathbb{E} \left[ p(b+w'+Y')  \right] - (p_{\text{opt}} +\gamma) \ge 0, \\
&  \lim_{w'\rightarrow 0^-}  \mathbb{E} \left[ p(b+w'+Y')  \right] - (p_{\text{opt}} +\gamma) \le 0. \label{derivative4}
\end{align}  
  By \eqref{derivative}-\eqref{derivative4}, $w=0$ is the local minimum of $q(w)$. Since $q(w)$ is convex, $w=0$ is the global optimum of $q(w)$ with $q(0) = J_{\mu}(\delta,x)$, and $q(w)$ is non-decreasing at $w\ge 0$ and non-increasing at $w\le 0$. Thus, $Q_{\mu}(\delta,x,z) = q(w) \ge J_{\mu}(\delta,x)$ for all $w \ge 0$ (i.e., $z\ge \mu(\delta)$). This proves Case (a). 

\noindent \textbf{Case (b)}: Since $\delta+x<b$ as well, $J_{\mu}$ satisfies \eqref{value-sum-a}. In Case (a) where $w \ge0 $, $Q_{\mu}(\delta,x,z) = q(w)$.  
However, in Case (b) where $w<0$, $Q_{\mu}(\delta,x,z)$ is not equal to $q(w)$; because there is a probability that $\Delta_2+X_2 = b+Y_{1}+X_2 <b$, which leads to the waiting time at stage $2$ not $0$. This difference makes our problem challenging. To show that $Q_{\mu}(\delta,x,z) \ge J_\mu(\delta,x)$ in Case (b), we will derive $Q_{\mu}(\delta,x,z)$ using the law of total expectation based on $Y_{1}+X_2$ and $M$. 

First, observe that 
\begin{align}
\nonumber & Q_{\mu}(\delta,x,z) - J_{\mu}(\delta,x) \\ = & \left( Q_{\mu}(\delta,x,z) - q(w) \right) + \left( q(w) - J_{\mu}(\delta,x) \right). \label{xyz}
\end{align}

In Case (a), we have already shown that $q(w) - J_{\mu}(\delta,x)$ is positive. Then, we focus on $Q_{\mu}(\delta,x,z) - q(w)$.


We use law of iteration on $M$ and then iterate on $Y_{1}+X_2$ given that $M >1$, 
and we have
\begin{align}
 q(w)  = & \mathbb{E} \left[ f\left( w,Y' \right)  \ \Big{|} \  M=1  \right] \mathbb{P}(M =1) \label{q-delta1}  \\ 
 \nonumber & + \mathbb{E} \left[ f\left( w,Y' \right)  \ \Big{|} \  M>1,Y_1+X_2 \ge -w  \right]  \\ & \times \mathbb{P}(M >1,Y_{1}+X_2 \ge -w)  \label{q-delta2}   \\ 
 \nonumber & +  \mathbb{E} \left[ f\left( w,Y' \right)  \ \Big{|} \  M>1,Y_{1}+X_2< -w  \right] \\ & \times \mathbb{P}(M >1, Y_{1}+X_2 < -w).  \label{q-delta3}
\end{align} Note that we denote $Y'=Y_1+\sum_{j=2}^{M} (Y_{j}+X_j)$ for the ease of descriptions.

Now, we analyse $Q_{\mu}(\delta,x,z)$. We rewrite $Q_{\mu}(\delta,x,z)$ in Lemma \ref{q}:
\begin{equation*} 
Q_{\mu}(\delta,x,z) =   \mathbb{E} \left[  \sum_{j=1}^{M} g(\Delta_j,X_j, Z_{j})  \ \Big{|}  \    \Delta_j = \delta,X_1 = x \right] ,
\end{equation*}
with waiting time $Z_{j}$ to be $\mu(\delta,x)+w$ at stage $j=1$ and $\mu(\Delta_j,X_2)$ for remaining stages.

We write $Q_{\mu}(\delta,x,z)$ according to the same iterations as parallel to \eqref{q-delta1} \eqref{q-delta2} \eqref{q-delta3}:
\begin{align}
\nonumber & Q_{\mu}(\delta,x,z) \\ = & \mathbb{E} \left[   \sum_{j=1}^{M} g(\Delta_j, X_j, Z_{j})  \ \Big{|} \  M=1  \right] \mathbb{P}(M =1) \label{qmu-delta1}  \\ \nonumber
 & + \mathbb{E} \left[  \sum_{j=1}^{M} g(\Delta_j, X_j, Z_{j})  \ \Big{|} \  M>1,Y_{1}+X_2\ge -w  \right] \\ & \times \mathbb{P}(M >1, Y_{1}+X_2 \ge -w)  \label{qmu-delta2}   \\  \nonumber
& +  \mathbb{E} \left[  \sum_{j=1}^{M} g(\Delta_j,X_j, Z_{j})  \ \Big{|} \  M>1,Y_{1}+X_2< -w  \right]  \\ & \times \mathbb{P}(M >1, Y_{1}+X_2 < -w).  \label{qmu-delta3}
\end{align}
If $M=1$, then there is no sample from stage $2$. Thus, 
\begin{align}
 & \mathbb{E} \left[  \sum_{j=1}^{M} g(\Delta_j, X_j, Z_{j})  \ \Big{|} \ \Delta_1 = \delta, X_1 = x, M = 1 \right]  \\ = & g(\delta,x,z)\\
 = &  \mathbb{E} \left[ f\left( w,Y' \right)  \ \Big{|} \  M=1  \right]. \label{case-b1}
\end{align}
This implies that the right hand side of \eqref{q-delta1} and the right hand side of \eqref{qmu-delta1} are equal. 

If $M>1$ and $Y_{1}+X_2\ge -w$, then $\Delta_2+X_2 = b+w+Y_1+X_2\ge b$. From \eqref{lemma10-84} in Lemma \ref{q}, $\Delta_2+X_2 \ge b$ implies that the waiting time at stage $2,3,...$ is $0$. Thus, Case (b2) is equivalent to Case (a), which gives
\begin{align}
& \mathbb{E} \left[  \sum_{j=1}^{M} g(\Delta_j,Z_{j})  \ \Big{|} \ \Delta_1 = \delta, M > 1, Y_{1}+X_2\ge -w \right] \\
 = &  \mathbb{E} \left[ f\left( w,Y' \right)  \ \Big{|} \  M>1, Y_{1}+X_2\ge -w  \right].  \label{case-b2}
\end{align} This implies that \eqref{q-delta2} and \eqref{qmu-delta2} are equal. 

If $M>1$ and $Y_{1}+X_2< -w$, then Lemma \ref{q} implies that the waiting time at stage $2$ is equal to
\begin{equation}\label{small-delay}
\mu(b+w+Y_{1}, X_2) = b-(b+w+Y_{1}+X_2) = -w - (Y_{1}+X_2), 
\end{equation} which is strictly larger than $0$.

Note that at stage $2$, the policy has already added up the total time to $b$. Then, the waiting time for stage $3,4...$ is $0$ because $\Delta_3\ge b$ and age is increasing. See Fig. \ref{qq-case-b} for the diagram of evolution.
Thus, given that $M>1$ and $Y_{1}+X_2< -w$, the expression of $Q_\mu(\delta,x,z)$ in \eqref{qmu-delta3} and the waiting time \eqref{small-delay} tells that
\begin{align}
 & \mathbb{E} \left[  \sum_{j=1}^{M} g(\Delta_j,X_j, Z_{j})  \ \Big{|} \  M > 1, Y_{1}+X_2 < - w \right]  \\ \nonumber
 = &  \mathbb{E} \Bigg{[}  \int^{b+ w + Y_{1}+X_2+(-w-Y_{1}-X_2)+ Y_2+\sum_{j=3}^{M} (Y_{j}+X_j) }_\delta  p(t) dt \\  \nonumber & - (p_{\text{opt}} +\gamma)   \bigg ( b-\delta+w+ Y_{1}+X_2+(-w-Y_{1}-X_2) \\ & + Y_2+\sum_{j=3}^{M} (Y_{j}+X_j)   \bigg )    \ \Big{|} \  M>1, Y_{1}+X_2 < -w  \Bigg{]}  \\ \nonumber = &  \mathbb{E} \Bigg [ \int^{b+ Y_2+\sum_{j=3}^{M} (Y_{j}+X_j) }_\delta  p(t) dt  - (p_{\text{opt}} +\gamma)   \bigg ( b-\delta+Y_2  \\ & +\sum_{j=3}^{M} (Y_{j}+X_j)  \bigg)    \ \Big{|} \  M>1, Y_{1}+X_2 < -w  \Bigg ] \\ \nonumber
 = & \mathbb{E} \Bigg [ \int^{b+Y_2+\sum_{j=3}^{M} (Y_{j}+X_j)}_\delta  p(t) dt   - (p_{\text{opt}} +\gamma)   \bigg ( b-\delta+Y_2 \\ & +\sum_{j=3}^{M} (Y_{j}+X_j)  \bigg )   \ \Big{|} \  M>1  \Bigg ] \\ \nonumber
  \overset{(i)}{=} & \mathbb{E} \Bigg[ \int^{b+ Y_2+\sum_{j=3}^{M+1} (Y_{j}+X_j) }_\delta  p(t) dt   - (p_{\text{opt}} +\gamma)   \bigg ( b-\delta+Y_2 \\ & +\sum_{j=3}^{M+1} (Y_{j}+X_j)  \bigg ) \Bigg] \end{align}\begin{align}  \nonumber 
 \overset{(ii)}{=} & \mathbb{E} \Bigg [ \int^{b+ Y_1+\sum_{j=2}^{M} (Y_{j}+X_j) }_\delta  p(t) dt   - (p_{\text{opt}} +\gamma)  \bigg( b-\delta+Y_1 \\ & +\sum_{j=2}^{M} (Y_{j}+X_j) \bigg) \Bigg] \\ \overset{(iii)}{=} & J_{\mu}(\delta,x). \label{case-b3}
\end{align}
Note that $M$ is geometric distributed and thus $M$ given that $M>1$ and $M +1$ have the same distribution. This implies $(i)$.
Condition $(ii)$ is because $Y_{j}$'s are i.i.d., and $(iii)$ is directly from the definition of $J_{\mu}(\delta,x)$ in \eqref{value-sum-a}. 

Having considered Case (b1)-(b3), to analyze $Q_{\mu}(\delta,x,z)-q(w)$, we compare \eqref{q-delta1},\eqref{q-delta2},\eqref{q-delta3} with 
\eqref{case-b1},\eqref{case-b2},\eqref{case-b3}. Note that by Case (b1) and Case (b2), \eqref{q-delta1},\eqref{q-delta2} are cancelled out by \eqref{case-b1},\eqref{case-b2}. We finally get  
\begin{align}
\nonumber & Q_{\mu}(\delta,x,z) - q(w)  \\ \nonumber
= &  J_{\mu}(\delta,x) \mathbb{P}(M >1, Y_{1}+X_2<- w)  \\ \nonumber & -  \mathbb{E} \left[ f \bigg( w,Y_1+\sum_{j=2}^{M} (Y_{j}+X_j) \bigg)  \ \Big{|} \  M>1, Y_{1}+X_2<-w  \right] \\ & \times \mathbb{P}(M >1, Y_{1}+X_2 <-w). \label{F2}
\end{align}
By the definition of $f$ in \eqref{F}, 
\begin{align}
\nonumber  \mathbb{E} & \left[  f \bigg ( w,Y_1+\sum_{j=2}^{M} (Y_{j}+X_j) \bigg )  \ \Big{|} \  M>1, Y_{1}+X_2<-w  \right]   \\ \nonumber
\triangleq & \mathbb{E} \Bigg [ \int^{b+ w +Y_1+\sum_{j=2}^{M} (Y_{j}+X_j) }_\delta  p(t) dt   - (p_{\text{opt}} +\gamma)   \bigg ( b-\delta \\ & +w +Y_1+\sum_{j=2}^{M} (Y_{j}+X_j)  \bigg )   \ \Big{|} \  M>1, Y_{1}+X_2<-w  \Bigg ] \\ \nonumber
  \overset{(i)}{=} &  \mathbb{E} \Bigg [ \int^{b+ w + Y_1+\sum_{j=2}^{M+1} (Y_{j}+X_j) }_\delta  p(t) dt   - (p_{\text{opt}} +\gamma)  \bigg ( b-\delta \\ & +w+Y_1+\sum_{j=2}^{M+1} (Y_{j}+X_j)  \bigg )   \ \Big{|} \ Y_{1}+X_2<-w  \Bigg] \\ \nonumber = &  \mathbb{E} \Bigg [ \int^{b+( w+Y_{1}+X_2)+ Y_2+ \sum_{j=3}^{M+1} (X_j+Y_{j}) }_\delta  p(t) dt \\ \nonumber &   - (p_{\text{opt}} +\gamma)   \bigg( b-\delta+( w+Y_{1}+X_2)  \\ &  + Y_2 + \sum_{j=3}^{M+1} (X_j+Y_{j} )  \bigg )  \ \Big{|} \ Y_{1}+X_2<-w  \Bigg] \label{difference} \\
  \overset{(ii)}{=} &  \mathbb{E}_{Y_{1},X_2} \left[ q(w+Y_{1}+X_2)  \ \Big{|} \ Y_{1}+X_2<-w    \right]. \label{difference2}
\end{align} 
Here $(i)$ is because $M$ has geometric distribution, and thus $M$ given that $M>1$ and $M +1$ have the same distribution. Since $Y_{j}$'s are independent, $(ii)$ is because $Y_2+\sum_{j=3}^{M+1} (Y_{j}+X_j)$ in \eqref{difference} and $Y_1+\sum_{j=2}^{M} (Y_{j}+X_j)$ inside the definition of $J_\mu(\delta) $ in \eqref{disturb-delta} have the same distributions. 
We have shown in Case (a) that $q(w)$ is decreasing at $w \le 0$. Since $Y_{1}+X_2<-w$ and $w<0$, we have
\begin{equation}
w \le w+Y_{1}+X_2< 0.
\end{equation}
Thus, 
\begin{equation}\label{q-small}
q(w+Y_{1}+X_2)\le q(w).
\end{equation}
Then, \eqref{difference2} and \eqref{q-small} gives 
\begin{align}
\nonumber & \mathbb{E} \left[ f\bigg ( w,Y_1+\sum_{j=2}^{M} (Y_{j}+X_j) \bigg )  \ \Big{|} \  M>1, Y_{1}+X_2<-w  \right]  \\  
= & \mathbb{E}_{Y_{1},X_2} \left[ q(w+Y_{1}+X_2)  \ \Big{|} \ Y_{1}+X_2<-w    \right]\\ 
\le & \mathbb{E}_{Y_{1},X_2} \left[ q(w)  \ \Big{|} \   Y_{1}+X_2<-w  \right]\\  = & q(w). \label{conclude}
\end{align} 
Thus, \eqref{F2} and \eqref{conclude} give 
\begin{align}
\nonumber  & Q_{\mu}(\delta,x,z) - q(w)  \\ 
\ge & \left( J_{\mu}(\delta,x) - q(w) \right)  \mathbb{P}(M>1, Y_{1}+X_2 <-w). \label{conclude3}
\end{align}
Note that $ q(w) - J_{\mu}(\delta,x)$ is already analyzed in Case (a) and is positive. 
Finally, \eqref{xyz}, \eqref{conclude3}, and $q(w)- J_\mu (\delta,x) \ge 0$ give 
\begin{align}
\nonumber & Q_{\mu}(\delta,x,z) - J_{\mu}(\delta,x)  \\ \ge \nonumber
& \left( J_{\mu}(\delta,x) - q(w) \right)  \mathbb{P}(M>1, Y_{1}+X_2 <-w) \\ \nonumber &
 +q(w) - J_{\mu}(\delta,x) \\ \nonumber = & \left( q(w) - J_{\mu}(\delta,x) \right)  \left( 1-   \mathbb{P}(M>1, Y_{1}+X_2 <-w) \right) \\ \ge& 0,
\end{align} 
which completes Case (b).

\noindent \textbf{Case (c)} Case (c) is similar with Case (a).

Since $\delta+x\ge b$, $\mu(\delta,x)=0$, which means that the policy $\mu$ chooses zero wait all the stages $j = 1,...M$. Thus,  
\begin{align}
\nonumber & J_{\mu}(\delta,x) = \mathbb{E} \Bigg[   \int^{\delta+ x+Y_1+\sum_{j=2}^{M} (Y_{j}+X_j) }_\delta  p(t) dt    \\ \nonumber & - (p_{\text{opt}} +\gamma)  \bigg ( x+  Y_1+\sum_{j=2}^{M} (Y_{j}+X_j)  \bigg )  \Bigg] \\ = & q(\delta+x-b).   \label{nodisturb-delta2}
\end{align} 
 
Since $\mu(\delta,x)=0$, $w\ge 0$. From Lemma \ref{q}, for all $z\ge 0$, $Q_{\mu}(\delta,x,z)$ is the cost that chooses $z$ at stage $1$ but does not wait from stages $2,3,...$. We can get

\begin{align}
\nonumber & Q_{\mu}(\delta,x,z) = \mathbb{E} \Bigg [   \int^{\delta+x+ w+ \sum_{j=1}^{M} Y_{j} }_\delta  p(t) dt \\ \nonumber &  - (p_{\text{opt}} +\gamma)    \bigg( x+w+Y_1+\sum_{j=2}^{M} (Y_{j}+X_j)  \bigg)  \Bigg ] \\  = & q(w+\delta+x-b).\label{disturb-delta2} 
\end{align} 
Since $\delta\ge b$, $w+\delta+x-b\ge 0$ for all $w\ge 0$. Using the same technique as in Appendix \ref{lemma_boundedbelowapp}, $q(w+\delta+x-b)$ is convex and non-decreasing in $w\ge 0$. This gives $q(w+\delta+x-b)\ge q(\delta+x-b)$ for all $w\ge 0$. 
 From \eqref{nodisturb-delta2} and \eqref{disturb-delta2}, we finally get $Q_{\mu}(\delta,x,z) \ge J_{\mu}(\delta,x)$ for all $z$. 

By considering Case (a)-(c), we have shown that $Q_{\mu}(\delta,x,z) \ge J_{\mu}(\delta,x)$ for all $z$, which completes the proof of $TJ_\mu = J_\mu$.
Now, we return the notation $J_\mu$ back to $J_{\mu_{\text{min},\gamma}}$, $J_{\mu_{\text{max},\gamma}}$.
By the definition of $J_\mu$ in \eqref{value-sum-a} and \eqref{derivative}-\eqref{derivative4}, it is easy to show that $J_{\mu_{\text{min},\gamma}}(\delta,x)= J_{\mu_{\text{max},\gamma}}(\delta,x)$. Moreover, similar to \eqref{value-sum-a}, for any probability $\lambda \in [0,1]$, we have $J_{\tilde{\mu}_{\lambda,\gamma}}(\delta,x) = \lambda J_{\mu_{\text{min},\gamma}}(\delta,x)+(1-\lambda)J_{\mu_{\text{max},\gamma}}(\delta,x)$. Therefore, $J_{\tilde{\mu}_{\lambda,\gamma}} =J_{\mu_{\text{min},\gamma}}=J_{\mu_{\text{max},\gamma}}$. In conclusion, we have completed the proof of Lemma \ref{prop-optimal}.

\section{Proof of Lemma \ref{preserves}}\label{preservesapp}

Since $\pi\in\Pi_i$, $J_{\pi,\gamma}(\delta,x)$ is Borel measurable \cite{bertsekas2004stochastic} and thus is lower semianalytic. It remains to show that $J_{\pi,\gamma}(\delta,x)$ is bounded by $v(\delta)$.
By Lemma \ref{lemma_boundedbelow}, $J_{\pi,\gamma} (\delta,x)\ge -\eta/(1-\alpha)$ for all $\delta,x$. Also, $v(\delta)$ is increasing and $v(0)>0$. Thus,  
\begin{equation}
 \frac{J_{\pi,\gamma}(\delta,x)}{v(\delta)} \ge - \frac{\eta}{v(0)(1-\alpha)} \ge -\infty.\label{boundabove}
\end{equation}
Then, we will show that $ J_{\pi,\gamma}(\delta,x) / v(\delta)$ is upper bounded. From Assumption \ref{ass1} (b) and $v$ is increasing, for all $n\ge 1$,  
\begin{align}
\nonumber & \mathbb{E} \left[ v(\Delta_{n+m})  \right] \\
\nonumber = & \mathbb{E} \left[ v(\Delta_{n+m-1}+X_{n+m-1}+Z_{n+m-1}+Y_{n+m-1}) \right]\\
\nonumber \le & \mathbb{E} \left[ v(\Delta_{n+m-1}+\bar{x}+\bar{z}+Y_{n+m-1})  \right]\\
\nonumber & \cdots \\
\nonumber \le & \mathbb{E} \left[ v(\Delta_{n}+m\bar{x}+m\bar{z}+Y_{n} + \cdots +Y_{n+m-1} )  \right]\\
 \overset{(i)}{\le} & \frac{\rho}{\alpha^m} \mathbb{E} \left[ v(\Delta_{n})  \right],
\end{align}   where $(i)$ is from Assumption \ref{ass1}, and that $\Delta_n$ is independent from $Y_{n}, \cdots, Y_{n+m-1}$.

Then, we look at the $n^{th}$ term in \eqref{discount}. Note that $\Delta_1 = \delta$ and the positive function $G(\delta)$ is denoted in Assumption \ref{ass1}. Also, $g_\gamma(\delta,x,z)$ is upper bounded by $G(\delta)$ plus a constant $|p_{\text{opt}}+\gamma|(\bar{x}+\bar{z}+\mathbb{E} \left[ Y \right] )\triangleq c$ that is not related to $\delta$. From Assumption \ref{ass1} (a), there exists $k>0$ such that $G(\delta)/v(\delta)\le k$. For all $n\ge1$,
\begin{align}
\nonumber & \alpha^{n-1} \mathbb{E} \left[ g(\Delta_{n},X_n,Z_{n})  \right] \\
\nonumber \le & \alpha^{n-1} \left( \mathbb{E} \left[ G(\Delta_{n})  \right] +c \right) \\
\nonumber \le & \alpha^{n-1} \cdot \left(  k \cdot  \mathbb{E} \left[ v(\Delta_{n})  \right] +c \right) \\
\nonumber \le & \alpha^{n-1} \cdot \left( k \cdot  \frac{\rho}{\alpha^m} \mathbb{E} \left[ v(\Delta_{n-m})   \right]  +c \right) \\
\nonumber & \cdots \\
\le &  k \cdot  \frac{\rho^{\lfloor \frac{n-1}{m} \rfloor +1 }}{\alpha^m} v(\Delta_1) +\alpha^{n-1}c \ \ \ \text{given that } \Delta_1 = \delta.
\end{align} 
Thus, from \eqref{discount}, 
\begin{align}
\nonumber J_\pi(\delta,x) = & \sum_{n=1}^{\infty} \alpha^{n-1} \mathbb{E} \left[  g(\Delta_n,X_n,Z_{n})   \ \Big{|} \  \Delta_1 = \delta, X_1 = x \right]  \\ \nonumber  \le & \frac{ m k v(\delta)}{\alpha^m} \sum_{n=1}^{\infty}    \rho^{n}  + \sum_{n=1}^{\infty}  \alpha^{n-1}c  \\   =  & \frac{m k \rho}{\alpha^m (1-\rho) } v(\delta) + \frac{c}{1-\alpha}. \label{boundbelow}
\end{align} Thus, $J_{\pi,\gamma}(\delta,x)/v(\delta)$ is bounded from above. By \eqref{boundabove} and \eqref{boundbelow}, we immediately get $J_{\pi,\gamma} \in B(\Lambda)$.

\section{Proof of Lemma \ref{contraction-unique}}\label{contraction-uniqueapp}

The proof of Lemma \ref{contraction-unique} is modified from \cite{bertsekas1995dynamic2}.
While \cite[Assumption 1.5.1]{bertsekas1995dynamic2} assumes countable state space and action space, we show that Lemma \ref{contraction-unique} also holds in uncountable state space and action space. 

We denote $z \triangleq \pi(\delta,x)$. By Assumption \ref{ass1}, $z\le \bar{z}$ and $x\le \bar{x}$. Note that $g_{\gamma}(\delta,x,z)\le G(\delta)+c$, where $c=|p_{\text{opt}}+\gamma|(\bar{x}+\bar{z}+\mathbb{E} \left[ Y \right] )$. 
Then, for all $u\in B(\Lambda)$,  
\begin{align}
\nonumber & T_{\pi,\gamma} u (\delta,x) \\ \nonumber = & g_\gamma(\delta,x,z)+ \alpha  \mathbb{E} \left[ u(\delta+x+z+Y,X)  \right]\\
\nonumber \le & G(\delta)+ \alpha  \mathbb{E} \left[  \left| \frac{u(\delta+x+z+Y,X)}{v(\delta+x+z+Y)} \right| v(\delta+x+z+Y)  \right] +c \\
\nonumber \le & G(\delta)+ \alpha \| u \|  \mathbb{E} \left[  v(\delta+x+z+Y) \right] +c \\
 \overset{(i)}{\le} & G(\delta)+ \frac{\rho}{\alpha^{m-1}} \| u \| v(\delta)+c, \label{tpiu-upper}
\end{align} where $(i)$ is from Assumption \ref{ass1} and that $v(\delta)$ is increasing. 

On the other hand, since $g_{\gamma}(\delta,x,z)\ge -\lambda$, we have 
\begin{align}
& \nonumber T_{\pi,\gamma} u (\delta,x) \\ \nonumber \ge & -\eta - \alpha  \mathbb{E} \left[ \left| \frac{u(\delta+x+z+Y,X)}{v(\delta+x+z+Y)} \right| v(\delta+x+z+Y)  \right] \\
\ge & -\eta - \frac{\rho}{\alpha^{m-1}} \| u \| v(\delta). \label{tpiu-lower}
\end{align} 
Thus, divide $v(\delta)$ and take maximum over $\delta$ in \eqref{tpiu-upper} and \eqref{tpiu-lower}, we finally get 
\begin{align}
\| T_{\pi,\gamma}  u \| & \le \max\{ \| G \|+\frac{c}{v(0)}, \frac{\eta}{v(0)} \}  + \frac{\rho}{\alpha^{m-1}} \| u \|  < \infty. 
\end{align} Thus, $T_{\pi,\gamma}  u \in B(\Lambda)$. 

We then show that $T_\gamma u \in B(\Lambda)$:  
\begin{align}
\nonumber & T_\gamma u (\delta,x) \\ \nonumber = & \inf_{z\in [0,\bar{z}]} g_\gamma(\delta,x,z)+ \alpha  \mathbb{E} \left[ u(\delta+x+z+Y,X)  \right]\\
\nonumber \le & \sup_{z\in [0,\bar{z}]} g(\delta,x,z) \\ \nonumber & +  \sup_{z\in [0,\bar{z}]} \alpha  \mathbb{E} \left[ \left| \frac{u(\delta+x+z+Y,X)}{v(\delta+x+z+Y)} \right| v(\delta+x+z+Y)  \right] \\ \nonumber
 \le & G(\delta)+ \alpha \| u \| \sup_{z\in [0,\bar{z}]} \mathbb{E} \left[  v(\delta+x+z+Y) \right] +c \\
 \le & G(\delta)+ \frac{\rho}{\alpha^{m-1}} \| u \| v(\delta) +c. 
\end{align} Similarly, 
\begin{align}
T_\gamma u (\delta) \ge -\eta - \frac{\rho}{\alpha^{m-1}} \| u \| v(\delta). 
\end{align} Thus, 
\begin{align}
\| T_\gamma u \| & \le \max\{ \| G \| + \frac{c}{v(0)}, \frac{\eta}{v(0)} \}  + \frac{\rho}{\alpha^{m-1}} \| u \|  < \infty, 
\end{align} which shows that $T_\gamma u\in B(\Lambda)$.

For all $u(\delta,x),u'(\delta,x)\in B(\Lambda)$ and any deterministic stationary policy $\pi_0,\cdots \pi_{m-1} \in \Pi_i$, along with Assumption \ref{ass1}, 
\begin{align}
\nonumber & \| T_{\pi_0,\gamma} \cdots T_{\pi_{m-1},\gamma} u -  T_{\pi_0,\gamma} \cdots T_{\pi_{m-1}}  u' \| \\
\nonumber \le & \sup_{\delta} \frac{ \left| \mathbb{E} \left[  u(\Delta_{m+1},X_{m+1}) -u'(\Delta_{m+1},X_{m+1})    \right] \right| }{v(\delta)} \alpha^m \\
\nonumber \le & \sup_\delta \frac{ \mathbb{E} \left[ v(\delta+m\bar{x}+m\bar{z}+\sum_{j=1}^{m}Y_j)  \right]}{v(\delta)}\cdot \alpha^m \| u-u' \| \\
\le & \rho \| u-u' \|, \label{contraction-pi}
\end{align} where the last inequality is from Assumption \ref{ass1}. This implies the contraction mapping property of $T_{\pi_0,\gamma} \cdots T_{\pi_{m-1},\gamma}$. By \eqref{contraction-pi}, we have 
\begin{align} 
& \frac{T_{\pi_0,\gamma} \cdots T_{\pi_{m-1},\gamma} u(\delta,x)}{v(\delta)} \\ \le  &\frac{T_{\pi_0,\gamma} \cdots T_{\pi_{m-1},\gamma} u'(\delta,x)}{v(\delta)} + \rho \| u-u' \|. \label{pi-to-t}
\end{align}
Taking the minimum for left and right sides of \eqref{pi-to-t} of $\pi_0,\cdots \pi_{m-1}$, respectively, we have 
\begin{align} 
& \frac{T_\gamma^m u(\delta,x)}{v(\delta)} \le \frac{T_\gamma^m u'(\delta,x)}{v(\delta)} + \rho \| u-u' \|. \label{pi-to-t2}
\end{align} Reversing $u$ and $u'$ in \eqref{pi-to-t2}, we finally get $\| T_\gamma^m u - T_\gamma^m u'\|\le \rho \|u-u'\|$, i.e., the Bellman operator $T_\gamma$ has an $m$-stage contraction mapping property with modulus $\rho<1$.
Combined with $B(\Lambda)$ being complete, the uniqueness of $T_\gamma u=u$ is shown directly by \cite[Proposition 1.5.4]{bertsekas1995dynamic2}. Thus, we complete the proof of Lemma \ref{contraction-unique}.

\section{Proof of Theorem \ref{theorem3}}\label{zerodualityapp}
According to \cite[Proposition 6.2.5]{bertsekas2003convex}, the policy $\pi^*$ along with the dual variable $\gamma^*$ is the optimal solution to \eqref{hc} if the following conditions hold:
\begin{align} 
&  \lim_{n\rightarrow \infty} \frac{1}{n}  \sum_{i=1}^{n}    \mathbb{E} \left[ D_{i,M_i} - D_{i-1,M_{i-1}}  \right] - \frac{1}{f_{\text{max}}(1-\alpha)} \ge 0, \label{duality1} \\
& \pi^* \in \Pi, \gamma^* \ge 0, \label{duality2} \\
& L(\pi^*; \gamma^*) = \inf_{\pi\in\Pi} L(\pi; \gamma^*), \label{duality3} \\
& \gamma^* \left\{ \lim_{n\rightarrow \infty} \frac{1}{n}  \sum_{i=1}^{n}    \mathbb{E} \left[ D_{i,M_i} - D_{i-1,M_{i-1}}  \right] - \frac{1}{f_{\text{max}}(1-\alpha)}  \right\} = 0.\label{duality4}
\end{align}
According to Lemma \ref{prime-lemma}, the optimal solution to the primal problem \eqref{duality3} for a given $\gamma$ is given by $l(\gamma)$. Therefore, we will seek $\gamma^*$ and $\pi^*\in l(\gamma^*)$ that satisfies \eqref{duality1}, \eqref{duality2} and \eqref{duality4}. 
According to Lemma \ref{prime-lemma}, for any optimal policy with waiting times $Z_{i,j}$'s, we have 
\begin{align}
& D_{i,M_i} - D_{i-1,M_{i-1}} =  X_{i,1} +Z_{i,1}+Y', \label{dual-app160}\\ 
& Z_{i,1}\ge \mu_{\text{min},\gamma}(Y_{i-1,M_{i-1}},X_{i,1}), \label{dual-app161} \\
 & Z_{i,1}\le  \mu_{\text{max},\gamma}(Y_{i-1,M_{i-1}},X_{i,1}),\label{dual-app162}
\end{align}
This motivates us to consider the following two cases.

\noindent \textbf{Case 1} If \eqref{rate-useless} holds,
then we take $\gamma^* = 0$ and $\pi^* = \mu_{\text{min},0}$. Then, \eqref{duality1}-\eqref{duality4} are satisfied. 

\noindent \textbf{Case 2} If the condition \eqref{rate-useless} does not hold, we will seek $\gamma^*>0$ and $\pi^*\in l(\gamma^*)$ such that  
\begin{align}
\lim_{n\rightarrow \infty} \frac{1}{n}  \sum_{i=1}^{n}    \mathbb{E} \left[ D_{i,M_i} - D_{i-1,M_{i-1}}  \right] - \frac{1}{f_{\text{max}}(1-\alpha)} =0. \label{rate-equal}
\end{align}
By \eqref{dual-app160}-\eqref{dual-app162}, we need to seek $\gamma^*>0$ such that 
\begin{align} 
\nonumber & \lim_{n\rightarrow \infty} \frac{1}{n}  \sum_{i=1}^{n}  \mathbb{E} \left[ X_{i,1}+ \mu_{\text{min},\gamma^*}(Y_{i-1,M_{i-1}},X_{i,1})+Y'   \right] \\ \nonumber \le & \frac{1}{f_{\text{max}}(1-\alpha)}\\  \le & \lim_{n\rightarrow \infty} \frac{1}{n}  \sum_{i=1}^{n}  \mathbb{E} \left[ X_{i,1}+ \mu_{\text{max},\gamma^*}(Y_{i-1,M_{i-1}},X_{i,1})+Y'   \right],  \label{sum-solution} 
\end{align}
Since the $Y_{i,j}$'s and the $X_{i,j}$'s are i.i.d., and $Y_{i-1,M_{i-1}}$ is independent of the sampling times at epoch $0,1,2,...,i-1$, \eqref{sum-solution} is equivalent to 
\begin{align}
& \nonumber \mathbb{E} \left[ X_{i,1}+\mu_{\text{min},\gamma^*}(Y_{i-1,M_{i-1}},X_{i,1})+Y'   \right]  \le  \frac{1}{f_{\text{max}}(1-\alpha)}\\  \le &   \mathbb{E} \left[ X_{i,1}+\mu_{\text{max},\gamma^*}(Y_{i-1,M_{i-1}},X_{i,1})+Y'   \right]. \label{waiting-fmax}
\end{align}
It is easy to see that $\mu_{\text{min},\gamma}(\delta,x)$ and $\mu_{\text{min},\gamma}(\delta,x)$ are non-decreasing in $\gamma$. It holds that for all $\gamma_0 \ge 0$,
\begin{align} 
\nonumber & \lim_{\gamma \rightarrow \gamma_0-} \mu_{\text{max},\gamma}(\delta,x) = \mu_{\text{min},\gamma_0}(\delta,x) \le \mu_{\text{max},\gamma_0}(\delta,x) \\ = & \lim_{\gamma \rightarrow \gamma_0 +} \mu_{\text{min},\gamma}(\delta,x).
\end{align}
By Monotone Convergence Theorem \cite[Theorem 5.3.1]{resnick2019probability}, we have 
\begin{align} 
\nonumber & \lim_{\gamma \rightarrow \gamma_0-} \mathbb{E} \left[ \mu_{\text{max},\gamma}(Y_{i-1,M_{i-1}},X_{i,1})   \right] \\ \nonumber = & \mathbb{E} \left[ \mu_{\text{min},\gamma_0}(Y_{i-1,M_{i-1}},X_{i,1})  \right] \\ \nonumber \le & \mathbb{E} \left[ \mu_{\text{max},\gamma_0}(Y_{i-1,M_{i-1}},X_{i,1})  \right] \\  = & \lim_{\gamma \rightarrow \gamma_0 +} \mathbb{E} \left[ \mu_{\text{min},\gamma}(Y_{i-1,M_{i-1}},X_{i,1})   \right]. \label{gamma-exists}
\end{align}
According to \eqref{gamma-exists}, there exists $\gamma^*$ that satisfies \eqref{waiting-fmax}. The optimal policy $\pi^*=\tilde{\mu}_{\lambda,\gamma^*}$, where $\lambda$ is defined in \eqref{theorem3-prob} to achieve \eqref{waiting-fmax}.
In both cases, the $\pi^*$ and $\gamma^*$ selected satisfy \eqref{duality1}-\eqref{duality4}.

\section{Proof of Lemma \ref{lemma-hbeta}}\label{lemma-hbetaapp}
(a) Note that $p_{\text{opt}} \in [\underline{p},\bar{p}) \cap \mathbb{R}$, so we consider $\beta \in [\underline{p},\bar{p}) \cap \mathbb{R}$.
We define 
\begin{align}
& \nonumber L_i (\pi;\beta) \\ = & \mathbb{E} \Bigg{[}  \int^{D_{i,M_i}}_{D_{i-1,M_{i-1}}}  p(\Delta_t) dt  -  \beta \left(  D_{i,M_i} - D_{i-1,M_{i-1}}  \right)   \Bigg{]}.
\end{align}
As is shown in Section \ref{per-epoc}, 
\begin{align}
f(\beta) = \inf_{\pi \in \Pi_i} L_i(\pi;\beta).
\end{align}
It is easy to see that $L_i(\beta,\pi)$ is linear in $\beta$ and thus concave. Then, $f(\beta)$ is an infimum of a sequence of concave functions and is thus concave in $\beta$. For any policy $\pi\in\Pi_i$ and $\beta_2>\beta_1$, we have $L_i(\beta_2,\pi) = L_i(\beta_1,\pi)+(\beta_2-\beta_1) \mathbb{E}[D_{i,M_i} - D_{i-1,M_{i-1}}]$. Since $\mathbb{E}[D_{i,M_i} - D_{i-1,M_{i-1}}]$ is lower bounded by a positive value, taking infimum on both sides, $f(\beta)$ is strictly decreasing. 
Therefore, $f(\beta)$ is concave and strictly decreasing in $\beta \in [\underline{p},\bar{p}) \cap \mathbb{R}$. 

(b) By Lemma \ref{dinklebach-lemma}, we have $h(p_{\text{opt}})=0$. By Lemma \ref{lemma3}, $h(p_{\text{opt}})=0$ is equivalent to $f(p_{\text{opt}})=0$. Since $f(\beta)$ is strictly decreasing, there exists a unique root to $f(\beta)=0$.

\section{Proof of Corollary \ref{equiv-zerowait}}\label{equiv-zerowaitapp}
We use w.p. $1$ as the abbreviation of ``with probability $1$''.
We first prove the backward direction when \eqref{cor2} is satisfied. For any waiting time $z\ge 0$, we have 
\begin{align}
& \mathbb{E}_{Y'} \left[  p( Y+X+z + Y') \mid Y,X \right] \label{equiv-zerowait-1} \\
\ge & \mathbb{E}_{Y'} \left[  p( Y+X + Y') \mid Y,X \right] \\
\ge & \text{ess} \inf \mathbb{E}_{Y'} \left[  p( Y+X + Y') \mid Y,X \right]  \ \ \ \text{w.p. $1$},  \\
\ge &  \frac{ \mathbb{E} \left[  \int^{Y+X + Y'}_{Y}  p(t) dt \right]  }  { \mathbb{E} \left[ X+Y' \right] }  \ \ \ \text{w.p. $1$}.
\end{align} Note that $\mathbb{E} \left[  \int^{Y+X + Y'}_{Y}  p(t) dt \right] / \mathbb{E} \left[ X+Y' \right] $ is the average age penalty for the zero-wait policy and is no smaller than that of the optimal policy. Thus, 
\begin{equation}
 \frac{ \mathbb{E} \left[  \int^{Y+X + Y'}_{Y}  p(t) dt \right]  }  { \mathbb{E} \left[ X+Y' \right] } \ge \beta. \label{equiv-zerowait-2}
\end{equation}
Therefore, \eqref{equiv-zerowait-1}-\eqref{equiv-zerowait-2} gives that w.p. $1$, for any waiting time $z$, $\mathbb{E}_{Y'} \left[  p( Y+X+z + Y') \mid Y,X \right]\ge \beta$. 
Combining with \eqref{thm1-beta} in Theorem \ref{theorem1}, the zero-wait policy is optimal. 

Then, we prove the forward direction. Suppose that by Theorem \ref{theorem1}, the zero-wait policy is optimal. Then, the zero-wait policy $Z_{i,j}(\beta)=0$ satisfies \eqref{thm1-beta} and \eqref{root} in Theorem \ref{theorem1}. Then, we have 
\begin{equation}
 \mathbb{E}_{Y'} \left[  p( Y+X + Y') \mid Y,X \right] \ge \beta.  
\end{equation} 
Thus, we have 
\begin{equation}
 \text{ess} \inf \mathbb{E}_{Y'} \left[  p( Y+X + Y') \mid Y,X \right] \ge \beta.
\end{equation}
Since zero-wait policy is optimal, in Theorem \ref{theorem1}, we have
\begin{equation}
  \beta =  \frac{ \mathbb{E} \left[  \int^{Y+X + Y'}_{Y}  p(t) dt \right]  }  { \mathbb{E} \left[ X+Y' \right] }.
\end{equation} Thus, the forward direction is shown. Overall, the proof is completed.

\section{Proof of Corollary \ref{deter}}\label{deterapp}

Suppose that $X_{i,j}=x$ and $Y_{i,j} = y$. Since the $M_i$'s are i.i.d and geometrically distributed, we use $M$ to replace $M_i$. From Corollary \ref{equiv-zerowait}, it is sufficient to show that 
\begin{equation}\label{cor-deter-s}
 \mathbb{E} \left[ p \left(y+M(x+y) \right)  \right] \ge \frac{ \mathbb{E} \left[  \int^{y+M(x+y)}_{y}  p(t) dt \right]  }  { \mathbb{E} \left[ M (x+y)  \right] }.
\end{equation}
Here, 
\begin{align}
 \mathbb{E} \left[ M (x+y) \right] &=  \frac{x+y}{1-\alpha}, \label{cor-deter-1} \\ \mathbb{E}  \left[ p  \left(y+M(x+y) \right) \right] & =  \sum_{m=1}^{\infty} p \left(y+m(x+y) \right) \alpha^{m-1} (1-\alpha). \label{cor-deter-2} \end{align} In addition,
 \begin{align}
& \nonumber \mathbb{E} \left[  \int^{y+M(x+y)}_{y}  p(t) dt \right]  \\ = &  \sum_{m=1}^{\infty}  \int^{y+m(x+y)}_{y}  p(t) dt \cdot \alpha^{m-1} (1-\alpha) \\
 = &  \sum_{m=1}^{\infty}   \sum_{n=1}^{m} \int^{y+n(x+y)}_{y+(n-1)(x+y)}  p(t) dt \cdot \alpha^{m-1} (1-\alpha) \end{align} \begin{align}
 = &  \sum_{n=1}^{\infty}   \sum_{m=n}^{\infty} \int^{y+n(x+y)}_{y+(n-1)(x+y)}  p(t) dt \cdot \alpha^{m-1} (1-\alpha)  \\
  = &  \sum_{n=1}^{\infty}  \int^{y+n(x+y)}_{y+(n-1)(x+y)}  p(t) dt \cdot \alpha^{n-1} \\
  \le &  \sum_{n=1}^{\infty} (x+y) p(y+n(x+y))  \alpha^{n-1}\\
  \overset{(i)}{=} & \mathbb{E} \left[ M (x+y)  \right]   \mathbb{E} \left[ p \left(y+ M(x+y) \right)  \right],
\end{align}
where $(i)$ comes directly from \eqref{cor-deter-1} and \eqref{cor-deter-2}. 
Thus, we have shown \eqref{cor-deter-s}. This completes our proof.

\else
\fi

\end{document}